\def\R{{\mathbb R}}
\def\C{{\mathbb C}}
\def\mS{{\mathcal S}}
\DeclareMathOperator{\Tr}{Tr}
\DeclareMathOperator{\spec}{Spec}
\newtheorem{theorem}{Theorem}[section]
\newtheorem{proposition}[theorem]{Proposition}
\newtheorem{remark}{Remark}[section]
\journal{Applied and Computational Harmonic Analysis}
\begin{document}

\begin{frontmatter}
\title{On the polarizability and capacitance of the cube}
\author{Johan Helsing\corref{cor1}}
\cortext[cor1]{Corresponding author, Tel.:+46 46 2223372.}
\ead{helsing@maths.lth.se}
\author{Karl-Mikael Perfekt}
\ead{perfekt@maths.lth.se}
\address{Centre for Mathematical Sciences\\
  Lund University, Box 118, SE-221 00 Lund, Sweden}
\begin{abstract}
  An efficient integral equation based solver is constructed for the
  electrostatic problem on domains with cuboidal inclusions. It can be
  used to compute the polarizability of a dielectric cube in a
  dielectric background medium at virtually every permittivity ratio
  for which it exists. For example, polarizabilities accurate to
  between five and ten digits are obtained (as complex limits) for
  negative permittivity ratios in minutes on a standard workstation.
  In passing, the capacitance of the unit cube is determined with
  unprecedented accuracy. With full rigor, we develop a natural
  mathematical framework suited for the study of the polarizability of
  Lipschitz domains. Several aspects of polarizabilities and their
  representing measures are clarified, including limiting behavior
  both when approaching the support of the measure and when deforming
  smooth domains into a non-smooth domain. The success of the
  mathematical theory is achieved through symmetrization arguments for
  layer potentials.
\end{abstract}
\begin{keyword}
  electrostatic boundary value problem \sep Lipschitz domain \sep
  polarizability \sep capacitance \sep spectral measure \sep layer
  potential \sep continuous spectrum \sep Sobolev space \sep
  multilevel solver \sep cube
\end{keyword}
\end{frontmatter}

\section{Introduction}

The determination of polarizabilities and capacitances of inclusions
of various shapes has a long history in computational
electromagnetics. Inclusions with smooth surfaces are, by now, rather
standard to treat. When surfaces are non-smooth, however, the
situation is different. Numerical solvers can run into problems
related to stability and resolution. Particularly so in three
dimensions and for certain permittivity combinations. Solutions may
not converge or results could be hard to interpret.
See~\cite{Perr10,Pitko10,Walle08} and references therein. The
situation on the theoretical side is similar. When, and in what sense
do solutions exist? Such questions are in the mainstream of
contemporary research in harmonic analysis. Coincidentally, also in
applied physics (plasmonics) there is a growing interest in solving
electrostatic problems on domains with structural discontinuities and
a concern about the sufficiency of available
solvers~\cite{Grill11,Zhang11}.

This paper addresses several fundamental issues related to the
problems just mentioned. We construct a stable solver for the
polarizability and capacitance of a cube based on an integral equation
using the adjoint of the double layer potential. We compute solutions
of unprecedented accuracy and interpret the results within a rigorous
mathematical framework. The reason for working with a cube are
twofold. First, the cube has the advantage that its geometric
difficulties are concentrated to edges and corners, since its faces
are flat. Integral equation techniques, which often excel for boundary
value problems in two dimensions, typically suffer from loss of
accuracy in the discretization of weakly singular integral operators
on curved surfaces in three dimensions. Here we need not worry about
that. Secondly, cubes are actually common in plasmonic applications.

In the purely theoretical sections we begin by collecting a number of
results and recent advances in the theory of layer potentials
associated with the Laplacian in Lipschitz domains. The most obvious
reason for this is that the invertibility study of layer potentials
leads to the solution of the boundary value problem implicit in the
definition of polarizability, and is as such the basis for both the
mathematical and numerical aspects of this paper. Furthermore, the
properties of the polarizability for a non-smooth domain such as a
cube are quite subtle, and it is our ambition to provide a solid
theoretical foundation for the problem at hand, giving a careful and
detailed exposition of a mathematical framework that clarifies a
number of points.

Since the double layer potential is not self-adjoint in the
$L^2$-pairing, we develop certain symmetrization techniques for it, in
particular extending the work of Khavinson, Putinar and Shapiro
\cite{Khav91} to the case of a non-smooth domain. These techniques are
used to prove the unique existence of the polarizability itself for a
Lipschitz domain, as well as of a corresponding representing measure
\cite{Gold83}. We present a thorough discussion of the smooth case,
the limiting behavior in passing from the smooth to the non-smooth
case, and ultimately the general case. Concerning the last point, a
condition ensuring that the representing measure has no singular part
is given, and it is proven that in the support of the absolutely
continuous part of the measure, the polarizability can not be given a
direct interpretation in terms of a potential with finite energy
solving the related boundary value problem.

The paper is organized as follows: Section~\ref{sec:elstat} formulates
the electrostatic problem and defines the polarizability. Existence
issues and representations are reviewed in Section~\ref{sec:theory}.
For ease of reading, rigorous statements and proofs are deferred to
Sections~\ref{sec:prelim} and~\ref{sec:results}. The capacitance is
discussed in Section~\ref{sec:capa}. Section~\ref{sec:strat} reviews
the state of the art with regard to numerical schemes.
Section~\ref{sec:two} gives a necessary background to the present
solver. New development takes place in Section~\ref{sec:three}. The
last sections contain numerical examples performed in {\sc Matlab}.
Section~\ref{sec:numsquare} illustrates the effects of rounding
corners and Section~\ref{sec:numcube} is about the cube.

The main conclusion of the paper is that, from a numerical viewpoint,
it is an advantage to let cubes have sharp edges and corners as
opposed to the common practice of rounding them slightly. Furthermore,
the representing measure for the polarizability of the cube is
determined, and a new benchmark for the capacitance of the unit cube
is established.

\section{The electrostatic problem and the polarizability}
\label{sec:elstat}

Let a domain $V$, an inclusion with surface $S$ and permittivity
$\epsilon_2$, be embedded in an infinite space. The exterior to the
closure of $V$ is denoted $E$ and has permittivity $\epsilon_1$. Let
$\nu_r$ be the exterior unit normal of $S$ at position $r$.

We seek a potential $U(r)$, continuous in $E\cup S\cup V$, which
satisfies the electrostatic equation
\begin{equation}
\Delta U(r)=0\,, \qquad r\in E\cup V\,,
\label{eq:elstat}
\end{equation}
subject to the boundary conditions on the limits of normal derivatives
\begin{equation}
\epsilon_1\frac{\partial}{\partial\nu_r}U^{\rm ext}(r)=
\epsilon_2\frac{\partial}{\partial\nu_r}U^{\rm int}(r)
\label{eq:elstatb1}
\end{equation}
and behavior at infinity
\begin{equation}
\lim_{r\to\infty}\nabla U(r)=e\,.
\label{eq:elstatb2}
\end{equation}
Here superscripts {\footnotesize ext} and {\footnotesize int} denote
limits from the exterior or interior of $S$, respectively, and $e$ is
an applied unit field. Eqs.~(\ref{eq:elstat}), (\ref{eq:elstatb1}),
(\ref{eq:elstatb2}) constitute a partial differential equation
formulation of the electrostatic problem.
Proposition~\ref{prop:solsense} gives a strict interpretation of what
it means for a potential $U(r)$ to solve this problem, in particular
expressing~(\ref{eq:elstatb1}) in a distribution sense.

For the construction of solutions to (\ref{eq:elstat}),
(\ref{eq:elstatb1}), (\ref{eq:elstatb2}) we make use of fundamental
solutions to the Laplace equation in two and three dimensions
\begin{equation}
G(r,r')=-\frac{1}{2\pi}\log|r-r'| \qquad {\rm and} \qquad
G(r,r')=\frac{1}{4\pi}\frac{1}{|r-r'|}\,,
\label{eq:newtker}
\end{equation}
and represent $U(r)$ in terms of a single layer density $\rho(r)$ as
\begin{equation}
U(r)=e\cdot r+\int_S G(r,r')\rho(r')\,{\rm d}\sigma_{r'}\,,
\label{eq:rep1}
\end{equation}
where ${\rm d}\sigma$ is an element of surface area.

The representation~(\ref{eq:rep1}) satisfies~(\ref{eq:elstat})
and~(\ref{eq:elstatb2}). Its insertion in~(\ref{eq:elstatb1}) gives
the integral equation for $\rho(r)$
\begin{equation}
\rho(r)+2\lambda\int_S\frac{\partial}{\partial \nu_r}G(r,r')\rho(r')
\,{\rm d}\sigma_{r'}=-2\lambda\left(e\cdot \nu_r\right)\,,
\quad r\in S\,,
\label{eq:inteq1}
\end{equation}
where the parameter
\begin{equation}
\lambda=\frac{\epsilon_2-\epsilon_1}{\epsilon_2+\epsilon_1}\,.
\end{equation}

The polarizability tensor of $V$ can be defined in terms of an
integral over a polarization field. When $V$ features sufficient
symmetry, such as the octahedral symmetry of the cube, the
polarizability is isotropic and reduces to a scalar
$\alpha(\epsilon_1,\epsilon_2)$, see~\cite{Sihvo04}, which can be
determined via
\begin{equation}
\left(\epsilon_2-\epsilon_1\right)\int_V\nabla U(r)\,{\rm d}r=
{\alpha}(\epsilon_1,\epsilon_2)e\,,
\label{eq:pola1}
\end{equation}
where ${\rm d}r$ is a volume element. Using integration by parts
in~(\ref{eq:pola1}) it is possible to express
$\alpha(\epsilon_1,\epsilon_2)$ as an integral over $\rho(r)$
\begin{equation}
\alpha(\epsilon_1,\epsilon_2)=-\epsilon_1
\int_S\rho(r)\left(e\cdot r\right)\,{\rm d}\sigma_{r}\,.
\label{eq:alpha}
\end{equation}
More generally, the components of the polarizability tensor can be
recovered via integrals similar to that in~(\ref{eq:alpha}) using
different applied fields $e$, but in what follows we tacitly assume
that $V$ is sufficiently symmetric as to allow for a scalar valued
$\alpha(\epsilon_1,\epsilon_2)$.

\section{Theory -- overview}
\label{sec:theory}

For what surface shapes $S$ and permittivities $\epsilon_1$,
$\epsilon_2$ does the electrostatic problem have a solution? Starting
with~(\ref{eq:inteq1}) and partly following~\cite{Mayer05}, this
section sketches the derivation of some important existence results.
It also motivates an integral representation formula for
$\alpha(\epsilon_1,\epsilon_2)$ and two sum rules which are used for
validation in our numerical experiments.

\subsection{Existence of solutions for smooth $S$}

Let us rewrite~(\ref{eq:inteq1}) in the abbreviated form
\begin{equation}
\left(I+\lambda K\right)\rho(r)=\lambda g(r)\,,
\label{eq:inteq2}
\end{equation}
where $I$ is the identity. If $S$ is smooth, then~(\ref{eq:inteq2}) is
a Fredholm second kind integral equation with a compact, non-self
adjoint, integral operator $K$ whose spectrum is discrete and
accumulates at zero. Let $K$ and its adjoint $K^*$ (the double layer
potential) have eigenvectors $\phi_i$ and $\psi_i$ with corresponding
eigenvalues $z_i$. All eigenvalues are real and bounded by one in
modulus.  Non-zero eigenvalues have finite multiplicities. Normalizing
\begin{equation}
\int_S \overline{\psi_i(r)}\phi_j(r)\,{\rm d}\sigma_r=\delta_{ij}\,,
\end{equation}
the kernel of $K$ can be written
\begin{equation}
K(r,r')=\sum_i z_i\phi_i(r)\overline{\psi_i(r')}\,.
\label{eq:kernel}
\end{equation}
See, further, the discussion in Section~\ref{sec:prop}.

Let us introduce a new variable $z$ and a scaled polarizability
$\alpha(z)$ as
\begin{equation}
z=-1/\lambda\,,
\label{eq:obvious2}
\end{equation}
\begin{equation}
\alpha(z)\equiv\frac{\alpha(\epsilon_1,\epsilon_2)}{|V|\epsilon_1}\,,
\label{eq:alphaz}
\end{equation}
where $|V|$ is the volume of $V$. Then~(\ref{eq:alphaz}),
with~(\ref{eq:alpha}), can be written in the abbreviated form
\begin{equation}
\alpha(z)=\int_S h(r)\rho(r)\,{\rm d}\sigma_r\,.
\label{eq:alpha2}
\end{equation}
The relation~(\ref{eq:obvious2}) allows us to use the parameter
$\lambda$ or its negative reciprocal $z$, depending on what is most
convenient in a given situation.

In terms of the quantities
\begin{equation}
u_i=\int_S h(r)\phi_i(r)\,{\rm d}\sigma_r \quad {\rm and} \quad
v_i=\int_S \overline{\psi_i(r)}g(r)\,{\rm d}\sigma_r\,,
\label{eq:uivi}
\end{equation}
and using~(\ref{eq:kernel}) to construct the resolvent
of~(\ref{eq:inteq2}), one can write
\begin{equation}
\rho(r)=\sum_i\frac{\phi_i(r)v_i}{z_i-z}
\label{eq:lambda4a}
\end{equation}
and
\begin{equation}
\alpha(z)=\sum_i\frac{u_iv_i}{z_i-z}\,,
\label{eq:lambda4b}
\end{equation}
see Theorem~\ref{thm:smooth}. This suggests that neither $U(r)$ nor
$\alpha(z)$ exists for $z=z_i$ when $u_iv_i\ne 0$. There is an {\it
  electrostatic resonance} or {\it plasmon} at $z_i$.

For ease of interpretation, the sum in~(\ref{eq:lambda4b}) can be
considered as taken over distinct eigenvalues and with $u_iv_i$, for a
degenerate eigenvalue, being the sum of all residues belonging to that
eigenvalue. Then all $u_iv_i$ are non-negative and plasmons can be
classified as {\it bright} or {\it dark} depending on whether
$u_iv_i>0$ or not~\cite{Zhang11}. When $S$ is a circle, there are only
two eigenvalues: $z_1=-1$ which is simple and corresponds to a dark
plasmon and $z_2=0$ which has infinite multiplicity and corresponds to
a bright plasmon. When $S$ is a sphere, the eigenvalues are
$z_i=1/(1-2i)$. The multiplicity of $z_i$ is $2i-1$. The only bright
plasmon is associated with $z_2$.

For later reference we observe that the sum of all residues is
\begin{equation}
\sum_i u_iv_i=\int_S h(r)g(r)\,{\rm d}\sigma_r=2\,.
\label{eq:resid1}
\end{equation}
When the polarizability is isotropic one can, using techniques
from~\cite{Fuchs76}, also derive a weighted sum rule
\begin{equation}
\sum_i z_iu_iv_i=
\int_S\int_S h(r)K(r,r')g(r')\,{\rm d}\sigma_r'\,{\rm d}\sigma_r
=2(2/d-1)\,,
\label{eq:resid2}
\end{equation}
where $d=2,3$ is the dimension.

\subsection{Existence of solutions for non-smooth $S$}

If $S$ is gradually transformed from a smooth surface into a
non-smooth surface, eigenvalues $z_i$ travel and occupy a certain
subset of the interval $[-1,1]$ ever more densely. When $S$ ceases to
be smooth, $K$ is no longer compact with discrete eigenvalues. Rather,
$K$ has a continuous spectrum which on a certain function space
coincides with the aforementioned subset, accompanied by discrete
values. Disregarding the discrete spectrum, which for squares and
cubes turns out to correspond to dark plasmons, the
sum~(\ref{eq:lambda4b}) assumes a limit
\begin{equation}
\alpha(z)
=\int_\mathbb{R}\frac{{\rm d}\mu(x)}{x-z}
=\int_{\sigma_{\mu}}\frac{\mu'(x)\,{\rm d}x}{x-z}\,,
\label{eq:lambda5}
\end{equation}
where the measure $\mu(x)$ is real and non-negative and
$\sigma_{\mu}=\left\{x:\mu'(x)>0\right\}$. Here we have ignored the
possible presence of a singular spectrum. For further details and a
condition that serves to exclude this complication, see
Theorems~\ref{thm:muexist} and Section~\ref{sec:prop}. The sum
rules~(\ref{eq:resid1}) and~(\ref{eq:resid2}) assume the forms
\begin{align}
\int_\mathbb{R}\mu'(x)\,{\rm d}x&=2\,,
\label{eq:sum1}\\
\int_\mathbb{R}x\mu'(x)\,{\rm d}x&=2(2/d-1)\,.
\label{eq:sum2}
\end{align}

The numerical results in Sections~\ref{sec:numsquare}
and~\ref{sec:numcube} suggest that both the square and the cube have
$\sigma_{\mu}$ equal to a single, possibly punctured, interval
$(a,b)\subset[-1,1]$. The square has $a=-0.5$ and $b=0.5$, consistent
with the exact computations of the spectral radius for a square found
in \cite{Kuhn88} and \cite{Wern97}. The cube has $a\approx-0.694526$
and $b=0.5$. For later reference we let $\sigma_{{\mu}{\rm sq}}$
denote $\sigma_{\mu}$ of the square and $\sigma_{{\mu}{\rm cu}}$
denote $\sigma_{\mu}$ of the cube.

For a large class of non-smooth $S$, the potential $U(r)$ exists when
$z$ stays away from a certain compact set $L:\sigma_{\mu}\subset
L\subset[-1,1]$. Furthermore, $\alpha(z)$ has a limit,
\begin{equation}
\alpha^+(x)=\lim_{y\to 0^+}\alpha(x+{\rm i}y)\,, 
\label{eq:alphaplus}
\end{equation}
as $z=x+{\rm i}y$ approaches $x$ from the upper half-plane for almost
all $x\in\mathbb{R}$. See Theorem~\ref{thm:muexist} and
Section~\ref{sec:prop}. It is important in this context and when
$x\in\sigma_{\mu}$ not to interpret $\alpha^+(x)$ as a polarizability
corresponding to a meaningful solution $U(r)$ for a negative
permittivity ratio $\epsilon_2/\epsilon_1=(x-1)/(x+1)$. On the
contrary, Theorem~\ref{thm:Ublowup} states that there is no $U(r)$
with finite energy solving~(\ref{eq:elstat}), (\ref{eq:elstatb1}),
(\ref{eq:elstatb2}) when $z=x\in\sigma_{\mu}$. Therefore, any attempt
to solve the electrostatic problem directly at a point
$z\in\sigma_{\mu}$ is bound to fail.

\subsection{The limit polarizability $\alpha^+(x)$ and its relation to
  $\mu'(x)$}

This paper aims at constructing an efficient scheme for computing
$\alpha(z)$ of a cube at all $z$ for which this quantity exists.
Still, in our numerical experiments we only compute the limit
$\alpha^+(x)$ of~(\ref{eq:alphaplus}) for $x\in[-1,1]$. The reason for
this is that the computation of $\alpha(z)$ is hardest for $z$ close
to $\sigma_{\mu}\subseteq[-1,1]$. Accurate results for $\alpha^+(x)$
therefore indicate a robust scheme. Furthermore, there is a simple
connection between $\alpha^+(x)$ and $\mu'(x)$. Using jump relations
for Cauchy-type integrals one can show from~(\ref{eq:lambda5}) that
\begin{equation}
\mu'(x)=\Im\{\alpha^+(x)\}/\pi\,,\quad x\in\mathbb{R}\,.
\label{eq:herglotz}
\end{equation}

Knowledge of $\mu'(x)$ for non-smooth $S$ is of great interest in
theoretical materials science. Closed form expressions seem to be out
of reach, however, except for a famous example in a periodic
two-dimensional setting~\cite{Crast01,Milt01}. As for numerics, merely
determining $\sigma_{\mu}$ is a challenge~\cite{Perr10,Walle08}. To
the authors' knowledge, $\sigma_{\mu}$ is not known for any $S$
exhibiting corners in three dimensions. The accurate determination of
$\mu'(x)$ is even harder~\cite{Kettu08}. Studying
how~(\ref{eq:lambda4b}) evolves as a smooth $S$ becomes non-smooth is
not an efficient method. Eigenvalue problems are costly to solve. The
discretization of $K$ on surface portions of high curvature is
problematic. Conditioning is also an issue and details of the mapping
$u_iv_i\to \mu'(x)$ need to be worked out. It is desirable to find
$\mu'(x)$ in a more direct way and~(\ref{eq:herglotz}) offers
precisely this. Obtaining $\mu'(x)$ is a subproblem of computing
$\alpha^+(x)$ for $x\in[-1,1]$.

\section{Theory -- preliminaries} 
\label{sec:prelim}

Let $V \subset \R^d$, $d \geq 2$, be an open and bounded set that is
Lipschitz, in the sense that its boundary $S = \partial V$ is
connected and locally the graph of a Lipschitz function in some basis.
For a more precise definition of this concept, see for example
\cite{Verc84}. To avoid a certain technicality we will also assume
that $V$ is star-like, meaning that there exists an $r_0 \in V$ such
that the line segments between $r_0$ and every other point $r \in V$
are contained in $V$. In the applications of this paper, $V$ will take
on the role of the square in $\R^2$, the cube in $\R^3$, or a set with
smooth boundary approximating either of the two. As before we will
denote $E = \overline{V}^c$.

In this section we first record a number of results about the single
and double layer potentials associated with the Laplacian on $V$, to
then introduce the mathematical framework in which we will study the
boundary value problem given by (\ref{eq:elstat}),
(\ref{eq:elstatb1}), (\ref{eq:elstatb2}). Actually, in this section
and the next, we will develop the theory only for $d \geq 3$. The
two-dimensional case contains several anomalies in relation to the
higher-dimensional theory, and it is for the sake of clarity and
brevity that we exclude it. We shall indicate some of the differences
as we progress, but we note here that the main results about the
polarizability $\alpha(z)$ remain true also for $d=2$.

While $L^2(S)$ is a natural domain for the operator $K$, we will
primarily focus on the action of $K$ on certain Sobolev spaces $H^s$.
There is good reason for this. For one, $K$ is not self-adjoint as an
operator on $L^2(S)$, or even normal, so that the spectral theorem can
not be directly applied. We will therefore develop certain
symmetrization techniques, and these demand that we consider $K$ on
fractional Sobolev spaces. A second, related reason, is that the
$L^2$-spectrum of $K$ is no longer contained in the real line when $S$
fails to be smooth, see I. Mitrea \cite{IMitrea02}. We shall see that
considering $K$ on a Sobolev space amends this problem. We also note
here that by the $X$-spectrum of a bounded operator $T$ on a Hilbert
space $X$, $T:X \to X$, we always mean the set
\begin{equation*}
\spec (T, X) = \{z \in \C \, : \, K-z \text{ is not bijective on } X \}.
\end{equation*}
When $z \notin \spec (T,X)$ it is a consequence of the closed graph
theorem that $K-z$ has a bounded inverse $(K-z)^{-1} : X \to X$.

For $s=0$ we have simply that $H^0(V) = L^2(V)$ and $H^0(S) = L^2(S)$.
For $s=1$, $H^1(V)$ is the Hilbert space of distributions $u$ such
that $u$ and $\partial_{x_j} u$, $1 \leq j \leq n$, are members of
$L^2(V)$. The norm is given by
\begin{equation*}
\| u\|_{H^1(V)}^2 =  \|u\|_{L^2(V)}^2 + \|\nabla u\|_{L^2(V)}^2.
\end{equation*}
$H^1(S)$ can be defined similarly using the almost everywhere defined
tangential vectors of $S$, see for example Geymonat \cite{Geym07}. For
$0 < s < 1$, $H^s$ can be defined by real interpolation methods, but
in our situation it can alternatively be characterized by a Besov type
norm. That is, $u \in H^s(V)$ if
\begin{equation*}
\| u \|_{H^s(V)}^2 = \| u \|_{L^2(V)}^2 + \int_{V \times V} \frac{|u(r)-u(r')|^2}{|r-r'|^{d+2s}} \, {\rm d}r \, {\rm d}r' < \infty.
\end{equation*}
$u \in H^s(V)$ is then inductively defined for $s = s_{i} + s_{f}$
with $s_i \geq 1$ an integer and $0 < s_f \leq 1$ by requiring that $u
\in H^{s_i}$ and $\partial^\beta u \in H^{s_f}$ for $\beta = (\beta_1,
\ldots , \beta_d)$ with $\sum \beta_k = s_i$. Returning to the case $0
< s < 1$, we have that $u \in H^s(S)$ if
\begin{equation*}
\| u \|_{H^s(S)}^2 = \| u \|_{L^2(S)}^2 + \int_{S \times S} \frac{|u(r)-u(r')|^2}{|r-r'|^{d-1+2s}} \, {\rm d}\sigma_r \, {\rm d}\sigma_{r'} < \infty,
\end{equation*}
where $\sigma$ denotes Hausdorff measure on $S$. See Adams
\cite{Adams75} and Grisvard \cite{Gris85} for further information and
the equivalence of various definitions in the Lipschitz setting. For
$s > 0$ we define $H^{-s}$ as the dual space of $H^s$ in the
$L^2$-pairing. More precisely, a distribution $u$ lies in $H^{-s}$ if
and only if
\begin{equation*}
\| u\|_{H^{-s}} = \sup_{\|v\|_{H^s}=1} | \langle u, v \rangle_{L^2} | < \infty.
\end{equation*}
We shall also make use of Sobolev traces, which give us a way to
assign boundary values to distributions in $V$. We will only require
the classical Gagliardo result \cite{Gagli57} which says that there
uniquely exists a continuous, surjective linear operator $\Tr : H^1(V)
\to H^{1/2}(S)$ with right continuous inverse such that $\Tr u = u|_S$
for any $u \in C^\infty(\overline{V})$. There is also a corresponding
trace from the exterior domain with the same properties, $\Tr_E :
H^1(E) \to H^{1/2}(S)$.

The $L^2(S)$-adjoint $K^*$ of the operator $K$ is known as the double
layer potential, given by the formula
\begin{equation} \label{eq:dbllayer}
(K^*u)(r) = 2\int_S \frac{\partial}{\partial \nu_{r'}}G(r,r') u(r') \, {\rm d} \sigma_{r'}, \quad u \in L^2(S), \, r \in S.
\end{equation}
For $d=2,3$ the Newtonian kernel $G$ has already been defined in
\eqref{eq:newtker}, and for $d > 3$ it is given by
\begin{equation*}
G(r,r') = \omega_d |r-r'|^{2-d},
\end{equation*}
with a normalization constant $\omega_d$ chosen so that $\Delta_r
G(r,0) = -\delta$ in the sense of distributions. When $S$ is a
$C^2$-surface the kernel of $K^*$ is only weakly singular (for $d=2$
there is no singularity at all present), and it is a standard matter
to see that \eqref{eq:dbllayer} defines $K^*$ as a compact operator on
$H^s(S)$ for $0 \leq s \leq 1$. The compactness of $K^*$ makes its
spectral analysis considerably easier, and in this case it is well
known that
\begin{equation} \label{eq:spec11}
\spec (K^*, L^2(S)) \subset [-1,1), 
\end{equation}
see for example the results of Escauriaza, Fabes and Verchota
\cite{Esca92} together with the fact that the spectrum of $K^*$ is
real in the $C^2$-case. This latter point will be discussed further
later on.

Unfortunately, when $S$ is only a Lipschitz surface, $K^*$ is no
longer compact in general.  In fact, when $S$ is a curvilinear polygon
in two dimensions, I. Mitrea \cite{IMitrea02} has shown that the
$L^2$-spectrum of $K^*$ consists of the union of certain solid
``figure eights'' in the complex plane, one for each (non-smooth)
vertex of $S$, in addition to a finite number of real eigenvalues. In
particular this applies when $S$ is a square in two dimensions, with
only one figure eight present, since all angles are equal. The general
situation is not as well understood, but when $V \subset \R^d$ is
convex, as it is in our situation, it is known that the spectral
radius of $K^*$ on $L^2(S)$ is $1$, see Fabes, Sand and Seo
\cite{Fab92}.

To even define $K^*$ in the general Lipschitz setting, the integral in
\eqref{eq:dbllayer} must be understood in an almost everywhere
principal value sense. We remark, however, that when $S$ is a
curvilinear polyhedron, $K^*u(r)$ can be evaluated in the usual
integral sense, except possibly when $r$ belongs to an edge of $S$,
and so it is not necessary to consider principal values in the main
applications of this paper. Proving the boundedness of $K^*$ on
$L^2(S)$ was an accomplishment of Coifman, McIntosh and Meyer
\cite{Coif82} in their study of singular integrals. The boundedness of
$K^*$ as an operator on $H^s(S)$, $0 < s \leq 1$, also essentially
follows from \cite{Coif82}, see for example Meyer \cite{Meyer90}. By
duality we immediately obtain that $K$ is bounded on $H^{-s}(S)$, $0
\leq s \leq 1$.

For $u \in L^2(S)$ one may of course also evaluate the integral \eqref{eq:dbllayer} in $V \cup E$ to obtain a harmonic function. We denote
\begin{equation*} 
(Du)(r) = 2\int_S \frac{\partial}{\partial \nu_{r'}}G(r,r') u(r') \, {\rm d} \sigma_{r'}, \quad r \in V \cup E.
\end{equation*}
Fabes, Mendez and M. Mitrea \cite{Fab98} prove that for $0 < s < 1$, $D : H^s(S) \to H^{s+1/2}(V)$ is bounded.

The single layer potential of $u$, defined in all of $\R^d$, is given by
\begin{equation*}
(\mathcal{S}u)(r) = 2\int_S G(r,r') u(r') \, {\rm d} \sigma_{r'}, \quad u \in L^2(S), \, r \in \R^d.
\end{equation*}
The kernel $G$ is only weakly singular when $S$ is a Lipschitz surface, so that there is no issue in defining this integral operator. In fact, $\mS$ has smoothening properties. D. Mitrea \cite{DMitrea97} shows that for $0 \leq s \leq 1$, $\mS : H^{-s}(S) \to H^{1-s}(S)$ is a bicontinuous isomorphism and in \cite{Fab98} it is proven that $\mS$ is bounded as a map $\mS : H^{-s}(S) \to H^{\frac{3}{2}-s}(V)$ for $0 < s < 1$. It is clear that $\mS$ is self-adjoint in the $L^2(S)$-pairing and that $\mS u$ is harmonic in $V \cup E$. 

At this stage, a peculiarity of the case $d=2$ appears. In any
dimension, there exists uniquely a function $u_0 \in L^2(S)$ such that
$(I + K)u_0 = 0$ and $\int_S u_0 \, {\rm d} \sigma = 1$, and one can
show that $\mS u_0|_{\overline{V}} \equiv c$ is constant. In higher
dimensions this constant can never be zero, but for $d=2$ there exist
domains such that $c=0$. When this occurs $\mS$ clearly fails to be
injective, and its range is also affected. On the other hand, if $c=0$
for a particular domain $V$, any non-trivial dilation of $V$ will give
a domain with $c \neq 0$ and the properties in the previous paragraph
may be proven to hold for the dilated domain, at least for $s=1/2$,
which will turn out to be the important case for us. See Verchota
\cite{Verc84} for details. Note that since our object of interest, the
polarizability $\alpha(z)$, is scaling invariant, this anomaly of
$\mS$ for $d=2$ presents no real obstacle.

While the kernel of $\mathcal{S}$ is sufficiently nonsingular to
immediately define a continuous function $\mathcal{S}u$ everywhere on
$\R^d$ if $u$ is for example bounded, similar statements are never
true for the kernel of $K^*$. In fact, the following jump formulas
hold for a function $u \in L^2(S)$.
\begin{align}
\mS^{\textrm{int}}u &=\mS^{\textrm{ext}}u=\mS u&
\partial_\nu \mS^{\textrm{int}}u &= u + Ku \nonumber\\
\partial_\nu \mS^{\textrm{ext}}u &= -u +Ku&
D^{\textrm{int}}u &= -u+K^*u \label{eq:jump}\\
D^{\textrm{ext}}u &= u + K^*u, \nonumber
\end{align}
where a superscript {\footnotesize int} or {\footnotesize ext} denotes
taking a limit from the interior or exterior of $S$, respectively. In
general the formulas are true in the sense of non-tangential
convergence almost everywhere on $S$, see \cite{Verc84}. These jump
relations explain why the boundary condition \eqref{eq:elstatb1} leads
to the integral equation \eqref{eq:rep1}. In a moment we shall make a
more precise statement about this. Before doing so, we need to show
that the jump formulas hold in a certain trace sense.

For this purpose, we will also need to consider the Hilbert space $\mathcal{H}(V)/\C$ of harmonic functions $v$ on $V$, modulo constants, with finite energy,
\begin{equation*}
\|v\|^2_{\mathcal{H}(V)/\C} = \int_V |\nabla v|^2 \, {\rm d}r < \infty.
\end{equation*}
Since this semi-norm annihilates constants we consider $v$ and $v + C$, $c \in \C$, to be the same element. Note that $\mathcal{H}(V)/\C$ is continuously contained in $H^1(V)/\C$ by the classical Poincar\'{e} inequality for $V$. Since $V$ is assumed star-like, it is straightforward to use dilations in order to prove that functions which are harmonic and smooth in $\overline{V}$ are dense in $H^1(V)/\C$. In fact, we introduced the hypothesis that $V$ is star-like only to facilitate such density statements.

The Dirichlet problem 
\begin{equation*}
v \in \mathcal{H}(V)/\C \quad \Tr v = u,
\end{equation*}
is well-posed for initial data $u \in H^{1/2}(S)$, see for example \cite{Fab98}. Equivalently, $\Tr : \mathcal{H}(V)/\C \to H^{1/2}(S)/\C$ is a bicontinuous isomorphism. Often we will simply denote $\Tr v = v|_S$ when it is clear what is meant. 

It is established in a paper by Hofmann, Mitrea and Taylor \cite{Hoff10} that Green's formula
\begin{equation} \label{eq:green}
\int_V  \langle \nabla \phi, \nabla \psi \rangle \, {\rm d} r +  \int_V \phi \Delta \psi  \, {\rm d} r = \int_S \phi \partial_\nu \psi \, {\rm d}\sigma
\end{equation}
continues to hold true for $S$ Lipschitz and $\phi, \psi \in C^\infty(\overline{V})$. Since the functions in $\mathcal{H}(V)/\C$ are harmonic, this shows that its scalar product satisfies
\begin{equation*}
\langle v, w \rangle_{\mathcal{H}(V)/\C} = \int_V \langle \nabla v, \nabla \bar{w} \rangle \, {\rm d} r = \int_S v \partial_\nu \bar{w} \, {\rm d}\sigma = \int_S (\partial_\nu v) \bar{w} \, {\rm d}\sigma.
\end{equation*}
Initially these identities are valid only for smooth $v$ and $w$, but as in \cite{Khav91} one can argue by duality and density to interpret the normal derivatives $\partial_\nu v$ and $\partial_\nu w$ as elements of $H^{-1/2}(S)$ so that the equalities remain true. If we denote by $H_0^{-1/2}(S)$ the closed subspace of those $u \in H^{-1/2}(S)$ such that $\int_S u \, {\rm d}\sigma = 0$, the implied duality argument gives rise to a bicontinuous bijective operator $\partial_\nu : H^{1/2}(S)/\C \to H_0^{-1/2}(S)$ which should be understood as the normal derivative of the trace.

It is important for our purposes to now repeat this construction for the exterior space $\mathcal{H}(E)$ of harmonic functions $v$ in $E$ with finite energy norm and $\lim_{r \to \infty} v(r) = 0$. We state this as a proposition. 
\begin{proposition}
The exterior trace is a bicontinuous isomorphism when considered as an operator $\Tr_E : \mathcal{H}(E) \to H^{1/2}(S)$. There is a corresponding bounded bijective operator $\partial_\nu^E : H^{1/2}(S) \to H^{-1/2}(S)$ satisfying
\begin{equation*}
\langle v, w \rangle_{\mathcal{H}(E)} = \int_E \langle \nabla v, \nabla \bar{w} \rangle \, {\rm d} r = -\int_S v \partial_\nu^E \bar{w} \, {\rm d}\sigma = -\int_S (\partial_\nu^E v) \bar{w} \, {\rm d}\sigma.
\end{equation*}
\end{proposition}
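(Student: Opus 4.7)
The plan is to mirror the interior construction given in the excerpt (following \cite{Khav91}), adding two ingredients specific to the exterior setting: the single layer potential $\mS$, which will supply the required harmonic extension into $E$, and pointwise decay estimates for elements of $\mathcal{H}(E)$ at infinity, which will be used to discard surface integrals on large spheres. Throughout I exploit that $d \geq 3$, so that any $v \in \mathcal{H}(E)$ admits a multipole expansion outside a large ball containing $\overline{V}$; finite energy together with $v(r) \to 0$ then forces the standard decay $|v(r)| = O(|r|^{2-d})$ and $|\nabla v(r)| = O(|r|^{1-d})$.

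For the first claim, I would prove the three properties of $\Tr_E$ in turn. Boundedness: fix a ball $B_R \supset \overline{V}$; then $v|_{B_R \setminus \overline{V}} \in H^1(B_R \setminus \overline{V})$ by a Poincar\'{e} inequality on this bounded Lipschitz annulus, and the classical Gagliardo trace gives $\|\Tr_E v\|_{H^{1/2}(S)} \leq C\|v\|_{\mathcal{H}(E)}$. Surjectivity: given $u \in H^{1/2}(S)$, set $f = \mS^{-1} u \in H^{-1/2}(S)$ using D.~Mitrea's isomorphism, and let $v = \mS f|_E$; this is harmonic in $E$, has $\Tr_E v = \mS f|_S = u$ by continuity of $\mS f$ across $S$, vanishes at infinity from the decay of $G$, and lies in $\mathcal{H}(E)$ because $|\nabla \mS f(r)| = O(|r|^{1-d})$ outside any ball enclosing $S$. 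Injectivity: if $\Tr_E v = 0$, Green's formula applied to $v$ on $B_R \setminus \overline{V}$ yields
\begin{equation*}
\int_{B_R \setminus \overline{V}} |\nabla v|^2 \, {\rm d} r = -\int_S v \, \partial_\nu \bar v \, {\rm d}\sigma + \int_{\partial B_R} v \, \partial_{\hat r} \bar v \, {\rm d}\sigma,
\end{equation*}
where the sign on $S$ reflects that the outward normal of $B_R \setminus \overline{V}$ there is $-\nu$. The $S$-integral vanishes because $\Tr_E v = 0$, while the $\partial B_R$-integral is $O(R^{d-1} \cdot R^{2-d} \cdot R^{1-d}) = O(R^{2-d}) \to 0$. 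Hence $\nabla v \equiv 0$ and the decay forces $v \equiv 0$.

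For the second claim, I would construct $\partial_\nu^E$ by duality. Given $u \in H^{1/2}(S)$, let $v \in \mathcal{H}(E)$ be its unique extension, and define
\begin{equation*}
\langle \partial_\nu^E u, \bar\phi \rangle_{L^2(S)} := -\langle v, w \rangle_{\mathcal{H}(E)}, \qquad \phi \in H^{1/2}(S),
\end{equation*}
where $w \in \mathcal{H}(E)$ is the extension of $\phi$. Bicontinuity of $\Tr_E$ makes the right-hand side a bounded sesquilinear form on $H^{1/2}(S)\times H^{1/2}(S)$, so $\partial_\nu^E : H^{1/2}(S) \to H^{-1/2}(S)$ is a bounded operator; it is bijective because it factors through the Riesz identification of $\mathcal{H}(E)$ with its dual. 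To obtain the integral identities stated in the proposition, I would first verify them for smooth harmonic $v, w$ by invoking classical Green's identity on $B_R \setminus \overline{V}$ and passing to the limit $R \to \infty$ using the decay estimates, then extend by density of smooth harmonic functions in $\mathcal{H}(E)$, such functions being producible for instance as $\mS f$ with $f$ smooth on $S$. The main obstacle is precisely this control at infinity: deducing $|v(r)| = O(|r|^{2-d})$ and $|\nabla v(r)| = O(|r|^{1-d})$ from the sole hypothesis $v \in \mathcal{H}(E)$ requires the multipole expansion together with the finite-energy condition to rule out slowly decaying modes. Once these estimates are in hand every boundary integral on $\partial B_R$ is $O(R^{2-d})$, and the remaining arguments are direct analogues of the interior case.
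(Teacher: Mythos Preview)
Your argument is correct and aligned with the paper's proof, which simply cites \cite{Fab98} for the well-posedness of the exterior Dirichlet problem (yielding the isomorphism $\Tr_E$) and \cite{Khav91} for the duality construction of $\partial_\nu^E$; you supply explicit versions of both steps. One small point worth tightening: the Poincar\'{e} inequality on the annulus $B_R \setminus \overline{V}$ alone controls only $\|v - \bar v\|_{L^2}$, so the quantitative bound $\|\Tr_E v\|_{H^{1/2}(S)} \leq C\|v\|_{\mathcal{H}(E)}$ still requires your decay estimates (or an exterior Hardy--Sobolev inequality, available since $d\geq 3$) to bound the mean of $v$ in terms of $\|\nabla v\|_{L^2(E)}$.
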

\begin{proof}
The statements about $\Tr_E$ follow by the well-posedness of the Dirichlet problem, see \cite{Fab98}. The construction of $\partial_\nu^E$ again follows along the lines of \cite{Khav91}.
\end{proof}
\begin{remark} \label{rmk:2ddirichlet} \rm
When $d=2$ the additional condition $\int_S u \, {\rm d}\sigma = 0$ is required to solve the exterior Dirichlet problem $v \in \mathcal{H}(E), \, \Tr_E v = u$. This is also reflected in the kernel $G(r,r')$ of the single layer potential. Note that $G(r,r') \sim -\frac{1}{2\pi} \log|r|$ as $r \to \infty$ for $d=2$, but $\lim_{r\to\infty} G(r,r') = 0$ for $d > 2$. 
\end{remark}
We end this section with an interpretation of the jump relations \eqref{eq:jump} within the just established framework.
\begin{proposition} \label{prop:tracejump}
Let $u \in H^{1/2}(S)$, then $Du \in \mathcal{H}(V)/\C$, $Du \in \mathcal{H}(E)$ and 
\begin{equation*}
\Tr Du = -u+K^*u \quad \Tr_E Du = u +K^*u.
\end{equation*}
Furthermore, let $v \in H^{-1/2}(S)$. Then $\mS v \in
\mathcal{H}(V)/\C$, $\mS v \in \mathcal{H}(E)$ and
\begin{equation*}
\Tr \mS v = \Tr_E \mS v = \mS v|_S \quad \partial_\nu \mS v = v+Kv \quad \partial_\nu^E \mS v = -v +K v.
\end{equation*}
\end{proposition}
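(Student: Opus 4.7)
The plan is to reduce both statements to the classical non-tangential jump relations \eqref{eq:jump} and the cited boundedness results of $D$ and $\mS$ on Sobolev spaces, extending from smooth data to the full $H^{\pm 1/2}(S)$ setting by density. The classical jump relations hold almost everywhere for $L^2$-data, and for $u$ in the Sobolev spaces considered here the function $Du$ (resp.\ $\mS v$) will have enough regularity that its Sobolev trace coincides with its non-tangential limit, allowing us to identify the two.

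For the double layer statements, I would first invoke the boundedness $D:H^{1/2}(S)\to H^1(V)$ from \cite{Fab98}; together with the harmonicity of $Du$ in $V$ this gives $Du\in\mathcal{H}(V)/\C$. For the exterior membership $Du\in\mathcal{H}(E)$, I would add a decay estimate: since $\partial_{\nu_{r'}}G(r,r')=O(|r|^{1-d})$ for large $|r|$, $Du(r)\to 0$ at infinity and has finite energy on $E$ by a similar Sobolev bound for the exterior domain. The identity $\Tr Du = -u+K^*u$ in $H^{1/2}(S)$ then follows by approximating $u$ by $u_n\in C^\infty(S)$: both $\Tr\circ D$ and $-I+K^*$ are continuous on $H^{1/2}(S)$, they agree for each smooth $u_n$ via \eqref{eq:jump} (with the Sobolev trace matching the non-tangential one since $Du_n\in H^1(V)$), and so agree in the limit. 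The exterior formula $\Tr_E Du = u+K^*u$ is handled identically.

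For the single layer statements, the boundedness $\mS:H^{-1/2}(S)\to H^1(V)$ from \cite{Fab98} and harmonicity place $\mS v\in \mathcal{H}(V)/\C$; the identity $\Tr\mS v=\mS v|_S$ is essentially the content of D.~Mitrea's bicontinuous isomorphism $\mS:H^{-1/2}(S)\to H^{1/2}(S)$ \cite{DMitrea97}. Again for $d\geq 3$, the decay $G(r,r')\to 0$ at infinity handles the exterior. The crux is the normal derivative. By the duality construction preceding the proposition, $\partial_\nu\mS v\in H_0^{-1/2}(S)$ is characterized, for every $\phi\in\mathcal{H}(V)/\C$, by
\begin{equation*}
\langle \mS v,\phi\rangle_{\mathcal{H}(V)/\C}=\int_V \nabla\mS v\cdot\nabla\bar{\phi}\, {\rm d}r = \int_S (\partial_\nu\mS v)\bar{\phi}|_S\, {\rm d}\sigma.
\end{equation*}
For smooth $v$ and harmonic $\phi$ smooth up to $\overline{V}$, Green's identity \eqref{eq:green} (valid on Lipschitz $V$ by \cite{Hoff10}) combined with the classical $\partial_\nu\mS^{\rm int} v=v+Kv$ gives
\begin{equation*}
\int_V \nabla\mS v\cdot\nabla\bar{\phi}\, {\rm d}r = \int_S (v+Kv)\bar{\phi}|_S\, {\rm d}\sigma.
\end{equation*}
Both sides extend continuously to $v\in H^{-1/2}(S)$ using $\mS:H^{-1/2}(S)\to H^1(V)$ and $K:H^{-1/2}(S)\to H^{-1/2}(S)$, so this identity persists and identifies $\partial_\nu \mS v = v+Kv$ as claimed. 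The exterior derivative $\partial_\nu^E \mS v=-v+Kv$ is obtained in the same way from the exterior Green identity and the jump formula \eqref{eq:jump}.

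The main obstacle is the bookkeeping around the three senses of boundary behavior appearing simultaneously: the Sobolev trace $\Tr$, the non-tangential limit of \eqref{eq:jump}, and the duality-defined normal derivative. The Lipschitz regularity results in \cite{Fab98,Hoff10,DMitrea97} precisely guarantee their compatibility, but one must be careful to verify, at each step, that the approximating smooth data live in the correct space, that limits are taken in the appropriate topology, and that the quotient by $\C$ is consistent with interpreting $v+Kv$ as a functional on $H^{1/2}(S)/\C$ rather than on all of $H^{1/2}(S)$.
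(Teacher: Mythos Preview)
Your approach is correct and follows the same overall density-and-continuity strategy as the paper, but the paper takes a slightly different technical route for the trace identities that sidesteps one of the obstacles you flag. Rather than appealing to compatibility of Sobolev traces with non-tangential limits for $H^1$ harmonic functions, the paper applies Green's formula directly to obtain, for $u$ smooth and harmonic on $\overline{V}$, the pointwise identity $Du(r) = \mS(\partial_\nu u)(r) - 2u(r)$ in $V$. Since $\mS(\partial_\nu u)$ is continuous across $S$, this shows $Du$ extends continuously to $\overline{V}$, so its Sobolev trace is unambiguously $-u + K^*u$ with no non-tangential machinery needed; the general case then follows by continuity exactly as you describe. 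The same identity, read on $S$, gives $\mS(\partial_\nu u) = u + K^*u$, and the paper obtains $\partial_\nu \mS v = v + Kv$ as the dual statement via the chain
\[
\langle \partial_\nu \mS v, \psi\rangle_{L^2(S)} = \langle \mS v, \partial_\nu \psi\rangle_{L^2(S)} = \langle v, \mS(\partial_\nu\psi)\rangle_{L^2(S)} = \langle v, \psi + K^*\psi\rangle_{L^2(S)} = \langle v + Kv, \psi\rangle_{L^2(S)},
\]
which is functionally your Green's-identity computation repackaged through the already-established formula. Your route is perhaps more modular, cleanly separating the cited Sobolev mapping properties from the jump relations, while the paper's use of the single identity $Du = \mS(\partial_\nu u) - 2u$ is more self-contained and avoids the trace-compatibility bookkeeping you correctly identify as the main obstacle.
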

\begin{proof}
Suppose first that $u$ and $v$ are smooth and harmonic on $\overline{V}$. Then by applying Green's formula we find that 
\begin{equation} \label{eq:Dformula}
Du(r) = \mS(\partial_\nu u)(r) - 2u(r), \quad r \in V. 
\end{equation}
In particular, this shows that $Du|_V$ extends continuously to
$\overline{V}$, and hence the jump relation $D^{\textrm{int}}u = -u +
K^* u$ must hold in trace sense. That is, $\Tr Du = -u+K^*u$. Since
both sides of this equation are continuous maps of $H^{1/2}(S)$ by
previously quoted results, it must hold for every $u \in H^{1/2}(S)$.
By the same reasoning we obtain $\Tr \mS v = \mS v|_S$ for every $v
\in H^{-1/2}(S)$.

To show that $\partial_\nu \mS v = v+Kv$ we note that $\mS(\partial_\nu u)  = u+K^*u$ by \eqref{eq:Dformula} and the jump formula. The sought formula is the dual statement of this. More precisely, for any smooth harmonic $\psi$ we have
\begin{align*}
\langle \partial_\nu \mS v, \psi \rangle_{L^2(S)} &= \langle \mS v, \partial_\nu \psi \rangle_{L^2(S)} = \langle v, \mS(\partial_\nu \psi) \rangle_{L^2(S)} \\ &= \langle v, \psi + K^*\psi \rangle_{L^2(S)} = \langle v + Kv, \psi \rangle_{L^2(S)},
\end{align*}
which verifies that $\partial_\nu \mS v = v+Kv$ for all $v \in H^{-1/2}(S)$ by continuity and density. 

The exterior statements are dealt with similarly.
\end{proof}

\section{Theory -- results} \label{sec:results}
\subsection{Existence of the measure $\mu$}
We are now in a position to develop the symmetrization techniques that
have been alluded to previously. Once these are in place, we can use
the spectral theory of self-adjoint operators to prove that the
(scaled) polarizability $\alpha(z) = \frac{\alpha(\epsilon_1,
  \epsilon_2)}{|V|\epsilon_1}$, $z =
\frac{\epsilon_1+\epsilon_2}{\epsilon_1-\epsilon_2} \in \C$, has a
representing measure $\mu$.

We begin, however, by describing the sense in which the potential $U$ will solve the boundary value problem given by \eqref{eq:elstat}, \eqref{eq:elstatb1} and \eqref{eq:elstatb2}. Note that the following proposition furthermore expresses the fact that if looking for a potential such that $U(r) - e \cdot r$ has finite energy, then $H^{-1/2}(S)$ is exactly the right space to find the corresponding density distribution $\rho$.

\begin{proposition} \label{prop:solsense}
Let $\rho \in H^{-1/2}(S)$ be such that \eqref{eq:inteq1} holds, i.e. $(K-z)\rho = g$, where $g(r) = -2(e \cdot \nu_r)$. Let
\begin{equation} \label{eq:Ueq}
U(r) = e \cdot r + \frac{1}{2}\mS \rho(r), \quad r\in \R^d.
\end{equation}
Then $U \in \mathcal{H}(V)/\C$, $U-e \cdot r \in \mathcal{H}(E)$, $\Tr U = \Tr_E U$, $\lim_{r \to \infty} \nabla U = e$ and $U$ satisfies \eqref{eq:elstatb1} in the sense that
\begin{equation} \label{eq:bdrycond}
\epsilon_1 \left ( \partial_\nu^E(U - e\cdot r) + \partial_\nu (e \cdot r) \right) = \epsilon_2  \partial_\nu U.
\end{equation}
The converse is also true. That is, if $U$ satisfies the above properties, then there exists a $\rho \in H^{-1/2}(S)$ such that \eqref{eq:Ueq} and \eqref{eq:inteq1} hold.
\end{proposition}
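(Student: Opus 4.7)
For the forward direction, the plan is to feed $\rho \in H^{-1/2}(S)$ directly into Proposition~\ref{prop:tracejump}, which yields $\mS\rho \in \mathcal{H}(V)/\C \cap \mathcal{H}(E)$, together with $\Tr\mS\rho = \Tr_E\mS\rho$ and the jump identities $\partial_\nu\mS\rho = \rho + K\rho$ and $\partial_\nu^E\mS\rho = -\rho + K\rho$. Since $e\cdot r$ is smooth and harmonic, this immediately gives $U \in \mathcal{H}(V)/\C$, $U - e\cdot r = \tfrac{1}{2}\mS\rho \in \mathcal{H}(E)$, $\Tr U = \Tr_E U$, and $\lim_{r\to\infty}\nabla U = e$ (the latter using decay of $\nabla\mS\rho$ at infinity for $d\geq 3$). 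Substituting the normal-derivative jump formulas into \eqref{eq:bdrycond}, the left- and right-hand sides become $\epsilon_1[\tfrac{1}{2}(-\rho+K\rho) + e\cdot\nu]$ and $\epsilon_2[\tfrac{1}{2}(\rho+K\rho) + e\cdot\nu]$, respectively. Their equality rearranges algebraically to $(I + \lambda K)\rho = \lambda g$ with $\lambda = (\epsilon_2-\epsilon_1)/(\epsilon_1+\epsilon_2) = -1/z$, that is, to $(K-z)\rho = g$.

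For the converse, I would recover the density from the jump of the normal derivative,
\begin{equation*}
\rho = \partial_\nu(U - e\cdot r) - \partial_\nu^E(U - e\cdot r) \in H^{-1/2}(S),
\end{equation*}
which is well defined because $U - e\cdot r$ lies in both $\mathcal{H}(V)/\C$ and $\mathcal{H}(E)$ and the operators $\partial_\nu$, $\partial_\nu^E$ constructed in Section~\ref{sec:prelim} take values in $H^{-1/2}(S)$. I then form $w = U - e\cdot r - \tfrac{1}{2}\mS\rho$; by Proposition~\ref{prop:tracejump}, $w$ has zero jump in trace (from the hypothesis $\Tr U = \Tr_E U$ together with $\Tr\mS\rho = \Tr_E\mS\rho$) and zero jump in normal derivative (by the definition of $\rho$ and $\partial_\nu\mS\rho - \partial_\nu^E\mS\rho = 2\rho$). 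Testing $w$ against arbitrary $\phi \in C^\infty_c(\R^d)$ and applying Green's formula \eqref{eq:green} separately over $V$ and $E$, the interior and exterior boundary contributions cancel, showing $\Delta w = 0$ on $\R^d$ in $\mathcal{D}'$. Weyl's lemma then makes $w$ classically harmonic on $\R^d$, and membership in $\mathcal{H}(E)$ forces it to decay at infinity, so $w \equiv 0$. Hence $U - e\cdot r = \tfrac{1}{2}\mS\rho$ throughout $V \cup E$, and reversing the forward calculation converts the hypothesis \eqref{eq:bdrycond} into $(K-z)\rho = g$.

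The main obstacle is the global uniqueness step $w \equiv 0$: one must carefully justify Green's identity for the Lipschitz domain $V$ (through \cite{Hoff10}) and its exterior analogue in a \emph{distributional} sense, interpreting the normal derivatives of $w$ as elements of $H^{-1/2}(S)$ paired against smooth test data, and then verify that the interior and exterior boundary contributions cancel exactly in this duality. Everything else reduces to direct applications of Proposition~\ref{prop:tracejump} and bookkeeping with the $\lambda \leftrightarrow z$ substitution; the subtlety lies in handling all traces and normal derivatives at the correct fractional Sobolev regularity.
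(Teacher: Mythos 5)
Your proposal is correct, and the forward direction is exactly the paper's: the paper proves the proposition in one line as "a consequence of Proposition~\ref{prop:tracejump}, the well-posedness of the interior and exterior Dirichlet problems and the bijectivity of $\mS : H^{-1/2}(S) \to H^{1/2}(S)$." Where you genuinely diverge is the converse. The paper's intended route recovers the density from the \emph{trace}: set $\rho = 2\mS^{-1}\bigl(\Tr_E(U - e\cdot r)\bigr)$, then identify $U - e\cdot r$ with $\tfrac12 \mS\rho$ separately in $E$ and (modulo the constant fixed by $\Tr U = \Tr_E U$) in $V$, using uniqueness for the Dirichlet problems in $\mathcal{H}(E)$ and $\mathcal{H}(V)/\C$; the boundary condition then yields $(K-z)\rho = g$ exactly as in your forward computation. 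You instead recover $\rho$ from the \emph{normal-derivative jump}, form $w = U - e\cdot r - \tfrac12\mS\rho$, and run a transmission-type gluing argument (cancellation of boundary terms in Green's identities, Weyl's lemma, decay plus Liouville) to force $w \equiv 0$. Both are sound. The paper's route is essentially free given the already-quoted results of D.~Mitrea and Fabes--Mendez--Mitrea, while yours trades the use of $\mS^{-1}$ (a point of some delicacy in $d=2$, where $\mS$ can fail to be injective, though there the decay/Liouville step would also need adjustment) for the extra work you correctly flag: extending Green's identity to pair finite-energy harmonic functions with non-harmonic $C_c^\infty(\R^d)$ test functions via the duality definition of $\partial_\nu$, $\partial_\nu^E$ and the density of smooth harmonic functions supplied by the star-likeness assumption (dilations), together with the gluing of the piecewise-$H^1$ function $w$ across the Lipschitz interface. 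With those justifications spelled out, your argument is complete.
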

\begin{proof}
This is a consequence of Proposition \ref{prop:tracejump}, the well-posedness of the interior and exterior Dirichlet problems and the bijectivity of $\mS : H^{-1/2}(S) \to H^{1/2}(S)$. 
\end{proof}
\begin{remark} \rm
When $z \neq -1$, the hypothesis that $(K-z)\rho = g$ implies that $\rho \in H_0^{-1/2}(S)$. This seen by taking into account that $g = -2\partial_\nu (e \cdot r)$ and $\partial_\nu \mS \rho$ both belong to $H^{-1/2}_0(S)$ in the computation
\begin{equation*}
z\int_S \rho \, {\rm d}\sigma = \int_S K\rho - g \, {\rm d}\sigma = \int_S K\rho \, {\rm d}\sigma = \int_S \partial_\nu \mS \rho - \rho \, {\rm d}\sigma = -\int_S \rho \, {\rm d}\sigma.
\end{equation*}
This is of importance for the case $d=2$ (cf. Remark \ref{rmk:2ddirichlet}).
\end{remark}
In the sequel we shall denote $\rho = \rho_z$ and $U = U_z$ to indicate their dependence on $z$.  Under the hypothesis of the preceding proposition, we can, due to the assumption of isotropy, express the scaled polarizability as
\begin{multline*}
\alpha(z) = \frac{\epsilon_2-\epsilon_1}{|V|\epsilon_1} \int_V \nabla U_z(r)  \cdot e\, {\rm d} r =  \frac{\epsilon_2-\epsilon_1}{|V|\epsilon_1}\int_S  (\partial_\nu U_z)(r) (e \cdot r) \, {\rm d}\sigma_r \\ =  \frac{\epsilon_2-\epsilon_1}{|V|\epsilon_1}\int_S  (e \cdot \nu_r + \frac{1}{2}(\rho_z +  K\rho_z)(r))(e \cdot r) \, {\rm d}\sigma_r \\=   \frac{\epsilon_2-\epsilon_1}{|V|\epsilon_1}\frac{z+1}{2} \int_S  \rho_z(r)(e \cdot r) \, {\rm d}\sigma_r = \int_S \rho_z h \, {\rm d}\sigma,
\end{multline*}
where $h(r) = -(e\cdot r)/|V|$. Since $(K-z)\rho_z = g$, this shows that the analysis of $\alpha$ is closely related to the spectral theory of $K$.

The scalar product on $\mathcal{H}(V)/\C$ is one of the keys to understanding the spectral theory of $K$ and $K^*$, since $K^*$ is self-adjoint in the $\mathcal{H}(V)/\C$-pairing. To be precise, the above shows that any element $v \in H^{1/2}(S)/\C$ can also be considered as an element $v \in \mathcal{H}(V)/\C$ in a bicontinuous way. We are therefore justified in letting $H^{1/2}(S)/\C$ inherit its scalar product from $\mathcal{H}(V)/\C$,
\begin{equation*}
\langle v, w \rangle_{H^{1/2}(S)/\C} = \langle v, w \rangle_{\mathcal{H}(V)/\C}.
\end{equation*}
Since $K^*$ maps constants onto constants (cf. \eqref{eq:Dformula}), we may consider $K^*$ as a bounded map on $H^{1/2}(S)/\C$. Let $v,w \in H^{1/2}(S)/\C$. By the fact that $K^* v = \mS(\partial_\nu v) - v$ it then holds that
\begin{align*}
\langle K^*v, w \rangle_{H^{1/2}(S)/\C} &=  \int_S ((\mS(\partial_\nu v) - v)(\partial_\nu \bar{w}) \, {\rm d}\sigma \\ &= \int_S v \partial_\nu(\mS (\partial_\nu \bar{w})-\bar{w}) \, {\rm d}\sigma = \langle v, K^*w \rangle_{H^{1/2}(S)/\C}.
\end{align*}

\begin{theorem} \label{thm:muexist}
There exists a compact set $L \subset \R$ and a positive Borel measure $\mu$ with total mass 2 and compact support contained in $L$, such that for $z \in \C$, $z \notin L$, $(K-z)\rho = g$ has a unique solution $\rho_z \in H_0^{-1/2}(S)$ and
\begin{equation} \label{eq:muexist}
\alpha(z) = \int_\R \frac{{\rm d} \mu(x)}{x-z}.
\end{equation}
$\mu$ is unique in the class of compactly supported finite Borel measures such that \eqref{eq:muexist} holds for all $z \in \C_+ = \{z \, : \, \Im z > 0 \}$.

\end{theorem}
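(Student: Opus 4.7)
The plan is to establish the Herglotz representation for $\alpha(z)$ by applying the spectral theorem to $K$ (or equivalently $K^*$) on a Hilbert space on which it acts as a self-adjoint operator. The key structural input is the intertwining relation $K \partial_\nu = \partial_\nu K^*$, which one reads off from the two jump formulas $\partial_\nu \mS = I+K$ and $\mS \partial_\nu = I+K^*$ recorded in Proposition~\ref{prop:tracejump}: indeed, composing gives $\partial_\nu \mS \partial_\nu = (I+K)\partial_\nu = \partial_\nu(I+K^*)$. Since the discussion preceding the theorem has shown that $K^*$ is self-adjoint on $H^{1/2}(S)/\C$ equipped with the $\mathcal{H}(V)/\C$-inner product, and since $\partial_\nu : H^{1/2}(S)/\C \to H_0^{-1/2}(S)$ is a bicontinuous bijection, the intertwining says that $K$ on $H_0^{-1/2}(S)$ is similar to $K^*$ on $H^{1/2}(S)/\C$. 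In particular, $L := \spec(K^*, H^{1/2}(S)/\C) = \spec(K, H_0^{-1/2}(S))$ is a compact subset of $\R$.

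Given $z \notin L$, let $\tilde H \in H^{1/2}(S)/\C$ be the trace of the harmonic function $-2(e\cdot r)$, so that $\partial_\nu \tilde H = g$ and $h = \tilde H/(2|V|)$. Setting $\eta_z := (K^*-z)^{-1}\tilde H \in H^{1/2}(S)/\C$, the intertwining yields that $\rho_z := \partial_\nu \eta_z \in H_0^{-1/2}(S)$ is the unique solution of $(K-z)\rho = g$ in $H_0^{-1/2}(S)$; uniqueness even in $H^{-1/2}(S)$ for $z \neq -1$ is given by the remark following Proposition~\ref{prop:solsense}. Using the inner-product formula $\langle v, w\rangle_{\mathcal{H}(V)/\C} = \int_S v\, \partial_\nu \bar w \, {\rm d}\sigma$, I then compute
\begin{equation*}
\alpha(z) = \int_S \rho_z h \, {\rm d}\sigma = \frac{1}{2|V|}\int_S (\partial_\nu \eta_z) \tilde H\, {\rm d}\sigma = \frac{1}{2|V|}\langle \tilde H, (K^*-z)^{-1}\tilde H\rangle_{\mathcal{H}(V)/\C}.
\end{equation*}

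Now the spectral theorem applied to the bounded self-adjoint operator $K^*$ on $H^{1/2}(S)/\C$ produces a positive Borel measure $\mu_{\tilde H}$, the spectral measure at $\tilde H$, which is supported in $L$, has total mass $\|\tilde H\|^2_{\mathcal{H}(V)/\C}$, and satisfies $\langle \tilde H, (K^*-z)^{-1}\tilde H\rangle = \int_\R \frac{{\rm d}\mu_{\tilde H}(x)}{x-z}$ for $z \notin L$. Setting $\mu := \mu_{\tilde H}/(2|V|)$ yields \eqref{eq:muexist}. For the total mass, compute $\|\tilde H\|^2_{\mathcal{H}(V)/\C} = \int_S \tilde H\, \partial_\nu \tilde H \, {\rm d}\sigma = 4\int_S (e\cdot r)(e\cdot \nu_r)\, {\rm d}\sigma = 4|V|$ by the divergence theorem (recovering the sum rule \eqref{eq:sum1}), hence $\mu(\R) = 2$. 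Uniqueness of $\mu$ in the stated class follows from the classical fact that a compactly supported finite Borel measure on $\R$ is determined by its Cauchy transform on $\C_+$, via Stieltjes inversion. The main technical obstacle is verifying the intertwining and the self-adjointness of $K$ on the pulled-back Hilbert space structure so that the spectral calculus applies cleanly; once this symmetrization framework, developed along the lines of Khavinson, Putinar and Shapiro \cite{Khav91} in the preceding section, is in place, the remainder of the argument is essentially formal.
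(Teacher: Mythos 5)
Your proof is correct and, at its core, it is the paper's own argument: the self-adjointness of $K^*$ on $H^{1/2}(S)/\C$ in the energy inner product, the spectral theorem, $\mu$ defined as the (suitably scaled) scalar spectral measure at $h$ (your $\tilde H = 2|V|h$), and the mass computed by the divergence theorem, which is exactly the paper's $2|V|\|h\|^2_{\mathcal{H}(V)/\C}=2$. Two points where you deviate are worth noting. First, you transfer $(K-z)\rho=g$ to the $K^*$-side through the explicit similarity $K\partial_\nu=\partial_\nu K^*$ on $H^{1/2}(S)/\C$, whereas the paper argues by duality: $K$ on $H_0^{-1/2}(S)$ is the $L^2$-adjoint of $K^*$ on $H^{1/2}(S)/\C$, so invertibility of $K^*-\bar z$ gives invertibility of $K-z$; since the $H^{1/2}(S)/\C$ inner product is itself built from $\partial_\nu$ and the $L^2$-pairing, these are two phrasings of the same symmetrization, and your intertwining is legitimately derived from Proposition~\ref{prop:tracejump}. (Minor slip: in the paper's convention, linear in the first slot, your final pairing should read $\langle (K^*-z)^{-1}\tilde H, \tilde H\rangle_{\mathcal{H}(V)/\C}$ rather than $\langle \tilde H,(K^*-z)^{-1}\tilde H\rangle$; since $\tilde H$ and the spectral measure are real this only affects bookkeeping, but as written it would produce $1/(x-\bar z)$.) Second, for uniqueness you invoke Stieltjes inversion instead of the paper's F.\ and M.\ Riesz argument. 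This works, but since the competing measure is only assumed finite (possibly complex), one-sided boundary behavior alone does not invert it; you should add that compact support makes the Cauchy transform analytic on the connected set $\C\setminus(\mathrm{supp}\,\mu\cup\mathrm{supp}\,\mu^*)$, so vanishing of the difference on $\C_+$ propagates to $\C_-$ (or to a neighborhood of infinity, whence equality of moments), after which inversion or moment determinacy finishes the job. With that sentence added, your route to uniqueness is a perfectly good, arguably more elementary, alternative to the paper's Hardy-space argument.
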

\begin{proof}
Since $K^*$ is bounded and self-adjoint on $H^{1/2}(S)/\C$ it has by the spectral theorem a corresponding projection-valued spectral measure $\mathcal{E}$ with support in the spectrum of $K^*$. Let $L = \spec K^*$. $\mathcal{E}$ is characterized by the fact that for every bounded Borel-measurable function $f$ on $L$, it holds that
\begin{equation} \label{eq:specteq}
f(K^*) = \int_L f(x) \, {\rm d}\mathcal{E}(x).
\end{equation}
For any $z \notin L$ note that $K^*-\bar{z}$ is invertible on $H^{1/2}(S)/\C$, and hence $K-z$ is invertible on the dual space $H_0^{-1/2}(S)$. Since $h \in H^{1/2}(S)$ and $g =  2|V|\partial_\nu h \in H_0^{-1/2}(S)$ we have 
\begin{align*}
\alpha(z) &= \langle (K-z)^{-1} g, h \rangle_{L^2(S)} =  \langle g, (K^* - \bar{z})^{-1} h \rangle_{L^2(S)} \\
&= 2|V| \langle \partial_\nu h, (K^* - \bar{z})^{-1} h \rangle_{L^2(S)} = 2|V| \langle h, (K^* - \bar{z})^{-1} h \rangle_{H^{1/2}(S)/\C} \\ &= 2|V| \langle (K^* - z)^{-1} h,  h \rangle_{H^{1/2}(S)/\C}.
\end{align*}
With $\mu(R) = 2|V| \langle \mathcal{E}(R) h , h \rangle_{H^{1/2}(S)/\C}$ for any Borel set $R \subset \R$ we obtain by applying \eqref{eq:specteq} with $f(x) = 1/(x-z)$ the desired formula
\begin{equation*}
\alpha(z) = \int_\R \frac{{\rm d} \mu(x)}{x-z}.
\end{equation*}
The constructed measure $\mu$ is positive since $\mu(R) = 2|V| \langle \mathcal{E}(R) h , h \rangle = 2|V| \langle \mathcal{E}(R) h , \mathcal{E}(R)h \rangle = 2|V| \|\mathcal{E}(R) h \|^2 \geq 0$, and since $\mathcal{E}(\R)$ is the identity operator, its total mass is
\begin{equation*}
\| \mu \| = 2|V|\| h\|^2 = 2\frac{1}{|V|}\int_V |e|^2 \, {\rm d} r = 2.
\end{equation*}
Equation \eqref{eq:muexist} expresses that $\alpha$ is the Cauchy transform of $\mu$, $\alpha(z) = \mathcal{K}_\mu(z)$. See Cima, Matheson and Ross \cite{Cima06} for an excellent survey of the Cauchy transform, written for measures with support on the unit circle, but all results can be transformed to results about measures on $\R$ through standard conformal mapping techniques. See Koosis \cite{Koos98} for an explanation of this latter point, as well as results stated directly for the real line. By the classical F. and M. Riesz theorem, any other measure $\mu^*$ supported on $\R$ and such that $\mathcal{K}_{\mu^*}(z) = \alpha(z) = \mathcal{K}_\mu(z)$ for $z \in \C_+$ must be of the form ${\rm d}\mu^* = {\rm d}\mu + \bar{u} \, {\rm d }x$, where $u \in \mathcal{H}^1(\C_+)$ is given by the boundary values of an analytic Hardy space-function in the upper half-plane. Since such a function $u$ can never be compactly supported unless it is identically zero, we obtain the uniqueness part of the theorem.
\end{proof}
\begin{remark}\rm
  The fact that $\| \mu \| = 2$ is sum rule \eqref{eq:sum1}. Note also
  that $L=\spec(K^*,H^{1/2}/\C)=\spec(K,H_0^{-1/2})$ only differs from
  $\spec(K^*,H^{1/2})=\spec(K,H^{-1/2})$ by the point $z=-1$, which is
  in the latter spectrum but not in the former, see~\cite{DMitrea97}.
\end{remark}
The considerations leading up to the previous theorem are quite
similar in spirit to those of Bergman \cite{Berg79}. Bergman considers
a different potential operator which is symmetric under the inner
product $\int_V \langle \nabla v, \nabla w \rangle \, {\rm d}r$,
although the arguments presented are somewhat incomplete since a space
of functions belonging to this inner product is not identified.

Before discussing the specific features of $\alpha$ and $\mu$ we shall
gain some further insight into the structure of $K$ and $K^*$ by
investigating a different symmetrization approach, expounded upon in
the case when $S$ is a $C^2$-surface by Khavinson, Putinar and Shapiro
\cite{Khav91}. The starting point is Plemelj's symmetrization
principle, which says that $\mS K = K^* \mS$ on $L^2(S)$ and continues
to hold true in our situation with the same proof as in \cite{Khav91}.
This operator equality amounts to the statement that $K$ is
self-adjoint under the inner product $\langle \mS u, v
\rangle_{L^2(S)}$. Note that $\mS$ is a strictly positive operator on
$L^2(S)$, so that the form $(u,v) \to \langle \mS u, v
\rangle_{L^2(S)}$ is strictly positive definite.

Instead of working with the completion of $L^2(S)$ under $\langle \mS u, u \rangle_{L^2(S)}$, we shall follow the approach of \cite{Khav91} and introduce an operator-theoretic formalism to express the symmetrization of $K$. Recall that $K: L^2(S) \to L^2(S)$ is compact when $S$ is $C^2$, and so its spectrum consists of the point $0$ and a sequence $(z_i)$ of non-zero eigenvalues tending to zero, every eigenspace $H_{z}(K) = \ker (K-z)$ having finite dimension when $z \neq 0$. 
\begin{theorem}[\cite{Khav91}] \label{thm:Khav}
Suppose that $S$ is a $C^2$-surface. Then there exists a self-adjoint compact operator $A : L^2(S) \to L^2(S)$ such that $A\sqrt{\mS} = \sqrt{\mS} K$ on $L^2(S)$.  When $z \neq 0$, $\sqrt{\mS} : H_{z}(K) \to H_{z}(A)$ and $\sqrt{\mS} : H_{z}(A) \to H_{z}(K^*)$ are isomorphisms of the indicated eigenspaces, and the eigenvectors of $K^*$ (including those for $z = 0$) span $L^2(S)$. In particular, the $L^2(S)$-spectrum of $K^*$ is real.  
\end{theorem}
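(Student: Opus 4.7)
The starting point is Plemelj's symmetrization $\mS K = K^* \mS$ on $L^2(S)$, which holds whenever $\mS$ and $K$ are well defined, by the symmetry $G(r,r')=G(r',r)$ of the Newtonian kernel. Since $\mS$ is bounded, self-adjoint, and strictly positive on $L^2(S)$, the spectral theorem yields a bounded positive square root $\sqrt{\mS}$, which is injective with dense range; note however that in the $C^2$ case $\mS$ is compact on $L^2(S)$, so $\sqrt{\mS}^{-1}$ is unbounded and some care will be required.

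I would then define the operator $A$ on the dense subspace $\sqrt{\mS}(L^2(S))$ by $A(\sqrt{\mS}v) = \sqrt{\mS}Kv$, so that $A\sqrt{\mS}=\sqrt{\mS}K$ holds by construction. The Plemelj identity immediately gives the symmetry of $A$ in the $L^2(S)$-pairing:
\begin{equation*}
\langle A\sqrt{\mS}v,\sqrt{\mS}w\rangle_{L^2}=\langle \mS Kv,w\rangle_{L^2}=\langle K^*\mS v,w\rangle_{L^2}=\langle \sqrt{\mS}v,A\sqrt{\mS}w\rangle_{L^2}.
\end{equation*}
The main obstacle, I expect, is to show that $A$ extends to a bounded, in fact compact, operator on all of $L^2(S)$, despite the unboundedness of $\sqrt{\mS}^{-1}$. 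An efficient way to proceed is to diagonalize $\mS=\sum_n s_n^2 e_n\otimes e_n$ in an $L^2$-orthonormal basis of eigenvectors with $s_n>0$; Plemelj then forces the matrix entries $A_{mn}=(s_m/s_n)K_{mn}$ to be Hermitian, and in the $C^2$ case the strong smoothing of $\mS$ controls the possible blow-up of $s_m/s_n$ against the decay of the compact $K_{mn}$, giving the compactness of $A$.

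Once $A$ has been constructed, the remaining claims follow quickly. The intertwining $A\sqrt{\mS}=\sqrt{\mS}K$ together with injectivity of $\sqrt{\mS}$ makes $\sqrt{\mS}$ an injection $H_z(K)\to H_z(A)$. A short computation using Plemelj produces a dual intertwining $\sqrt{\mS}A=K^*\sqrt{\mS}$ on $L^2(S)$, which analogously gives an injection $\sqrt{\mS}:H_z(A)\to H_z(K^*)$. For $z\neq 0$ these eigenspaces are finite-dimensional, and a dimension count, together with the fact that $\mS=\sqrt{\mS}\circ\sqrt{\mS}$ intertwines $K$ with $K^*$ on these spaces, upgrades the injections to isomorphisms. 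Finally, since $A$ is compact and self-adjoint, the spectral theorem furnishes an orthonormal basis $\{f_n\}$ of $L^2(S)$ of eigenvectors of $A$; by the injectivity of $\sqrt{\mS}$ the images $\sqrt{\mS}f_n$ are nonzero, are eigenvectors of $K^*$ with the same eigenvalues (including for the zero eigenvalue), and their linear span is dense in $L^2(S)$ because $\sqrt{\mS}$ has dense range. Consequently every point of $\spec(K^*,L^2(S))$ is a real eigenvalue of a self-adjoint operator, so the spectrum is real.
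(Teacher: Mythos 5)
Your overall skeleton is the same symmetrization route that the paper takes from Khavinson--Putinar--Shapiro \cite{Khav91}: Plemelj's identity $\mS K=K^*\mS$, the definition of $A$ on the dense range of $\sqrt{\mS}$, the two intertwinings $A\sqrt{\mS}=\sqrt{\mS}K$ and $\sqrt{\mS}A=K^*\sqrt{\mS}$, and the eigenspace maps and density statement at the end; those parts are essentially sound. The genuine gap is at the central analytic step, namely that the densely defined $A$ extends to a bounded, in fact compact, operator on $L^2(S)$. Hermitian symmetry of the matrix $A_{mn}=(s_m/s_n)K_{mn}$ in an eigenbasis of $\mS$ proves nothing here: a symmetric infinite matrix need not define a bounded operator, the ratios $s_m/s_n$ are unbounded (since $\mS$ is compact on $L^2(S)$, its eigenvalues $s_n^2$ tend to $0$), and you have no quantitative information whatsoever about the entries $K_{mn}$ in the eigenbasis of $\mS$ -- ``the smoothing of $\mS$ controls the blow-up against the decay of $K_{mn}$'' is a hope, not an estimate, and nothing in your argument would detect why compactness could fail for, say, a merely Lipschitz $S$.

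What actually closes this gap in \cite{Khav91}, and in the paper's framework, is the key lemma stated right after the theorem: $\sqrt{\mS}:L^2(S)\to H^{1/2}(S)$, equivalently by duality $\sqrt{\mS}:H^{-1/2}(S)\to L^2(S)$, is a bicontinuous isomorphism. Granting this, $A=\sqrt{\mS}\,K\,\sqrt{\mS}^{-1}$ is a composition $L^2(S)\to H^{-1/2}(S)\to H^{-1/2}(S)\to L^2(S)$ of bounded maps whose middle factor $K$ is compact on $H^{-1/2}(S)$ when $S$ is $C^2$ (weakly singular kernel; dually to the compactness of $K^*$ on $H^{1/2}(S)$), so $A$ is bounded and compact, and self-adjoint by your Plemelj computation. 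If one wants only boundedness, as in the Lipschitz case, the paper instead invokes the abstract Krein-type symmetrization result of Hassi, Sebesty\'{e}n and de Snoo \cite{Hass05}, for which $\mS K=K^*\mS$ with $\mS\geq 0$ alone suffices. Two smaller points: Plemelj's identity does not follow from the symmetry $G(r,r')=G(r',r)$ alone -- it requires a Green's identity/Calder\'{o}n-type argument, which the paper quotes from \cite{Khav91}; and your closing deduction that $\spec(K^*,L^2(S))$ is real needs the Riesz--Schauder duality $\dim\ker(K^*-z)=\dim\ker(K-\bar{z})$ for $z\neq 0$ (combined with $H_{\bar z}(K)\cong H_{\bar z}(A)=\{0\}$ for non-real $z$); density of the span of the eigenvectors $\sqrt{\mS}f_n$ does not by itself exclude non-real eigenvalues of the non-normal operator $K^*$.
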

From the point of view of Khavinson et al., the proof of this theorem
parallels the symmetrization theory of Krein \cite{Krein}, which mainly concerns compact operators. However, the theory of Hassi, Sebesty\'{e}n
and de Snoo \cite{Hass05} assures us of the existence of $A$ even when $K$ is only bounded, that is, when $S$ is only Lipschitz. The key fact is still
that $\mS K = K^* \mS$.
\begin{proposition}
There exists a bounded self-adjoint operator $A: L^2(S) \to L^2(S)$ such that $A \sqrt{\mS} = \sqrt{\mS} K$ on $L^2(S)$.
\end{proposition}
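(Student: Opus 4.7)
The plan is to construct $A$ directly on the dense range of $\sqrt{\mS}$ via the formula
\[
A(\sqrt{\mS} u) = \sqrt{\mS} K u, \qquad u \in L^2(S),
\]
verifying well-definedness, boundedness, and symmetry by repeated use of Plemelj's identity $\mS K = K^* \mS$, and then extending by continuity to all of $L^2(S)$.

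Since $\mS$ is bounded, self-adjoint and strictly positive on $L^2(S)$, its positive square root $\sqrt{\mS}$ (defined via the functional calculus) is injective with dense range, so the prescription above unambiguously defines $A$ on the dense subspace $\operatorname{Ran}(\sqrt{\mS})$. The main obstacle is boundedness: one would like simply to set $A = \sqrt{\mS} K (\sqrt{\mS})^{-1}$, but $(\sqrt{\mS})^{-1}$ is only densely defined and unbounded. To circumvent this, I would iterate Plemelj to obtain $\mS K^n = (K^*)^n \mS$, and consider
\[
a_n := \|\sqrt{\mS} K^n u\|^2 = \langle \mS K^n u, K^n u\rangle_{L^2(S)} = \langle \mS u, K^{2n} u \rangle_{L^2(S)}.
\]
Applying the Cauchy--Schwarz inequality in the Hermitian form $(x,y) \mapsto \langle \mS x, y\rangle_{L^2(S)}$ yields $a_n \leq a_0^{1/2} a_{2n}^{1/2}$. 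Dyadic iteration of this inequality $n$ times gives
\[
a_1 \leq a_0^{1 - 2^{-n}} \, a_{2^n}^{2^{-n}} \leq a_0^{1-2^{-n}} \bigl( \|\mS\| \, \|K^{2^n}\|^2 \, \|u\|^2\bigr)^{2^{-n}}.
\]
Passing to the limit $n \to \infty$ and invoking Gelfand's formula $\|K^{2^n}\|^{2^{-n}} \to r(K) \leq \|K\|$, we conclude
\[
\|\sqrt{\mS} K u\| \leq r(K) \, \|\sqrt{\mS} u\|,
\]
so $A$ extends by continuity to a bounded operator on all of $L^2(S)$ with $\|A\| \leq r(K)$.

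Self-adjointness of $A$ is then almost immediate from Plemelj: for $u, v \in L^2(S)$,
\[
\langle A \sqrt{\mS} u, \sqrt{\mS} v\rangle_{L^2(S)} = \langle \mS K u, v\rangle_{L^2(S)} = \langle K^* \mS u, v\rangle_{L^2(S)} = \langle \sqrt{\mS} u, \sqrt{\mS} K v\rangle_{L^2(S)},
\]
and this identity extends to all of $L^2(S) \times L^2(S)$ by the density of $\operatorname{Ran}(\sqrt{\mS})$ and the continuity of $A$. The hard part, as noted, is the boundedness estimate, since $\sqrt{\mS}$ is not boundedly invertible; the dyadic Cauchy--Schwarz iteration above (a Heinz-type device) is the standard trick that handles this obstruction, and it is essentially the mechanism behind the general factorization theorem of Hassi, Sebesty\'{e}n and de Snoo cited by the authors.
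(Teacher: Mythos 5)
Your argument is correct, but it follows a genuinely different route from the paper. The paper does not prove boundedness of $A$ at all: it invokes the factorization theory of Hassi, Sebesty\'{e}n and de Snoo to obtain a bounded $A$ with $A\sqrt{\mS}=\sqrt{\mS}K$, and its proof of the proposition consists only of the self-adjointness step, deduced from $\sqrt{\mS}A\sqrt{\mS}=\mS K=K^*\mS=\sqrt{\mS}A^*\sqrt{\mS}$ together with the injectivity and dense range of $\mS$. You instead give a self-contained existence proof in the spirit of Reid and Lax for symmetrizable operators: defining $A$ on the dense range of $\sqrt{\mS}$, iterating Plemelj to get $\mS K^n=(K^*)^n\mS$, and running the dyadic Cauchy--Schwarz estimate $a_n\le a_0^{1/2}a_{2n}^{1/2}$ with Gelfand's formula to conclude $\|\sqrt{\mS}Ku\|\le r(K)\|\sqrt{\mS}u\|$; your symmetry computation on the dense range is equivalent to the paper's. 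The steps all check out (the identity $a_n=\langle \mS u,K^{2n}u\rangle_{L^2(S)}$, the induction $a_1\le a_0^{1-2^{-n}}a_{2^n}^{2^{-n}}$, and the limit, where one only needs the trivial remark that $a_0>0$ for $u\neq 0$ by injectivity of $\sqrt{\mS}$). What your approach buys is independence from the cited reference plus the quantitative bound $\|A\|\le r(K)\le\|K\|_{L^2(S)}$, which is stronger than mere boundedness; what the paper's approach buys is brevity, outsourcing the operator-theoretic work to \cite{Hass05}. Your closing claim that the dyadic iteration is ``essentially the mechanism'' of the Hassi--Sebesty\'{e}n--de Snoo theorem is a loose attribution (their arguments are closer to Douglas-type range inclusion and quadratic-form criteria), but this side remark does not affect the validity of your proof.
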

\begin{proof}
Given the existence of a bounded operator $A$ on $L^2(S)$ satisfying $A \sqrt{\mS} = \sqrt{\mS} K$, we deduce from the equation $\sqrt{\mS}A\sqrt{\mS} = \mS K = K^* \mS = \sqrt{\mS}A^*\sqrt{\mS}$, and the injectivity and dense range of $\mS$, that $A$ must in fact be self-adjoint.
\end{proof}
In fact, we have the following result, contained in \cite[Proposition
1]{Khav91} with a proof that carries over to our setting.
\begin{proposition}
$\sqrt{\mS} : L^2(S) \to H^{1/2}(S)$ is a bicontinuous isomorphism. Dually, $\sqrt{\mS}$ also extends to a bicontinuous isomorphism $\sqrt{\mS} : H^{-1/2}(S) \to L^2(S)$.
\end{proposition}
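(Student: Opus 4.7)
My strategy is to construct $\sqrt{\mS}$ first by the bounded Borel functional calculus on $L^2(S)$, then use the already-established isomorphism $\mS : H^{-1/2}(S) \to H^{1/2}(S)$ together with the $L^2$-self-adjointness of $\mS$ to promote this to the desired mapping properties. The second claim about $\sqrt{\mS} : H^{-1/2}(S) \to L^2(S)$ will fall out by duality.

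The first step is to observe that $\mS : L^2(S) \to L^2(S)$ is bounded, self-adjoint, and strictly positive (as recalled in the excerpt), so the spectral theorem produces a bounded, self-adjoint, strictly positive $T := \sqrt{\mS}$ on $L^2(S)$ with $T^2 = \mS$. In particular $T$ is injective on $L^2(S)$, since the bicontinuous iso $\mS : L^2 \to H^1$ is. The second step is to extend $T$ to $H^{-1/2}(S)$. For $u \in L^2(S)$ the key estimate
\begin{equation*}
\|Tu\|_{L^2}^2 \;=\; \langle T^2 u, u\rangle_{L^2} \;=\; \langle \mS u, u\rangle_{L^2} \;\le\; \|\mS u\|_{H^{1/2}}\,\|u\|_{H^{-1/2}} \;\le\; C\|u\|_{H^{-1/2}}^2
\end{equation*}
holds by the continuity of $\mS : H^{-1/2} \to H^{1/2}$. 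Since $L^2(S)$ is dense in $H^{-1/2}(S)$, this extends $T$ uniquely to a bounded $\widetilde{T} : H^{-1/2}(S) \to L^2(S)$. Taking the Banach-space adjoint of $\widetilde{T}$ with respect to the $L^2$-pairing and identifying $(H^{-1/2})^* = H^{1/2}$ and $(L^2)^* = L^2$, one obtains a bounded $\widetilde{T}^* : L^2(S) \to H^{1/2}(S)$. For $v \in L^2 \subset H^{-1/2}$ the $L^2$-self-adjointness of $T$ forces $\widetilde{T}^* v = T v$, so in fact $T$ maps $L^2$ boundedly into $H^{1/2}$.

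The remaining step is bijectivity of $T : L^2(S) \to H^{1/2}(S)$. Injectivity is immediate from $T^2 = \mS$ being injective on $L^2$. For surjectivity, given $v \in H^{1/2}$ set $w = \mS^{-1} v \in H^{-1/2}$ and $u = \widetilde{T} w \in L^2$; the identity $T\widetilde{T} = \mS$, which holds on the dense subspace $L^2$ by $T^2 = \mS$ and extends by continuity to $H^{-1/2} \to L^2$, gives $Tu = \mS w = v$. The open mapping theorem then upgrades the bounded bijection to a bicontinuous isomorphism, and the dual mapping $\widetilde{T} : H^{-1/2} \to L^2$ is automatically a bicontinuous iso as well. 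The main bookkeeping hurdle is the duality step: one must be careful that $\widetilde{T}^*$ genuinely lands in $H^{1/2}$ rather than merely in $L^2$, which relies on the identification $(H^{-1/2})^* = H^{1/2}$ via the $L^2$-pairing and on keeping the two natural self-adjointnesses (that of $T$ on $L^2$ and that of $\mS$ in the $H^{-1/2}$--$H^{1/2}$ duality) cleanly separated.
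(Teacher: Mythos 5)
Your argument is correct, and it is worth noting that the paper itself does not prove this proposition at all: it simply points to Proposition 1 of Khavinson--Putinar--Shapiro \cite{Khav91} and asserts that the proof carries over to the Lipschitz setting. That proof identifies the range of the square root of the positive operator $\mS$ by interpolation-type reasoning between the endpoint mapping properties of $\mS$, whereas you give a self-contained derivation using only facts the paper has already collected: strict positivity and self-adjointness of $\mS$ on $L^2(S)$ (giving $T=\sqrt{\mS}$ by functional calculus), D.~Mitrea's isomorphism $\mS : H^{-1/2}(S) \to H^{1/2}(S)$, the duality estimate $\|Tu\|_{L^2}^2 = \langle \mS u, u\rangle_{L^2} \le C\|u\|_{H^{-1/2}}^2$ to extend $T$ to $\widetilde T : H^{-1/2} \to L^2$, the identification $\widetilde T^* = T$ to land $T$ in $H^{1/2}$, the factorization $T\widetilde T = \mS$ (valid on the dense subspace $L^2$ and extended by continuity) for surjectivity, and the open mapping theorem. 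This route is more elementary, avoids any appeal to interpolation theory, and makes transparent exactly which previously quoted results are being used; its only cost is the bookkeeping you yourself flag, namely keeping the $L^2$-self-adjointness of $T$ and the $H^{-1/2}$--$H^{1/2}$ duality separate. One small point deserves an explicit line rather than the word ``automatically'': the extension $\widetilde T$ coincides, under the identifications $(L^2)^*=L^2$ and $(H^{1/2})^*=H^{-1/2}$ in the $L^2$-pairing, with the Banach adjoint of the bicontinuous isomorphism $T : L^2 \to H^{1/2}$ (the defining relation $\langle \widetilde T u, v\rangle = \langle u, Tv\rangle$ holds for $u\in L^2$ and extends by density), and the adjoint of an isomorphism between Hilbert spaces is an isomorphism; with that sentence added the proof is complete.
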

The existence of $A$ gives us another way to derive the existence of a measure $\mu$ such that $\alpha(z) = \int_\R \frac{{\rm d} \mu(x)}{x-z}$. Since 
\begin{equation} \label{eq:Kresolv}
(K-z)^{-1} = \sqrt{\mS}^{-1}(A-z)^{-1} \sqrt{\mS}
\end{equation}
 as an inverse on $H^{-1/2}(S)$, we can write
\begin{equation} \label{eq:Aspecrep}
\alpha(z) = \langle (K-z)^{-1}g, h \rangle_{L^2(S)} = \langle (A-z)^{-1} \sqrt{\mS} g, \sqrt{\mS}^{-1} h \rangle_{L^2(S)},
\end{equation}
where, clearly, $\sqrt{\mS} g, \sqrt{\mS}^{-1} h \in L^2(S)$. If $\mathcal{E}_A$ is the spectral measure of $A$, we therefore obtain a measure with the desired property by letting $\mu(R) = \langle \mathcal{E}_A(R) \sqrt{\mS} g, \sqrt{\mS}^{-1} h \rangle_{L^2(S)}$ for $R \subset \R$. In this way we immediately recover all the conclusions of Theorem \ref{thm:muexist}, except for the positivity of $\mu$. 

To see the positivity, recall for $u \in H^{1/2}(S)$ that $\mS(\partial_\nu u) = u + K^*u$ and consider the computation
\begin{equation*}
\sqrt{\mS}(\partial_\nu u) = \sqrt{\mS}^{-1}(u+K^*u) = (I+A)\sqrt{\mS}^{-1}u.
\end{equation*}
In view of the identity $g = 2|V|\partial_\nu h$ we find that
\begin{equation*}
\alpha(z) = 2|V| \langle (A+I)(A-z)^{-1} \sqrt{\mS}^{-1} h, \sqrt{\mS}^{-1} h \rangle_{L^2(S)}.
\end{equation*}
Clearly, it follows that $\mu \geq 0$ if $A+I \geq 0$, that is, if 
\begin{equation} \label{eq:poscond}
\spec(K, H^{-1/2}(S)) = \spec(A,L^2(S)) \subset [-1,\infty).
\end{equation}
In the $C^2$-case we know from \eqref{eq:spec11} and the compactness of $K$ that $\spec(K, H^{-1/2}) \subset [-1,1]$. We shall see in Theorem \ref{thm:wstarconv} that this continues to be true when $S$ is merely a Lipschitz surface. 
\subsection{Properties of the polarizability $\alpha(z)$ and the measure $\mu$}
\label{sec:prop}

In the final part of this section we will employ all the tools
introduced thus far, in order to understand the behavior and specific
features of the polarizability $\alpha(z)$ and its related measure
$\mu$. We begin by discussing the case of a smooth surface $S$, giving
a rigorous treatment of many of the formulas and ideas that appear in
the paper \cite{Mayer05} by Mayergoyz, Fredkin and Zhang. From there
we discuss the idea of approximating a non-smooth surface by a
sequence of smooth surfaces, proving that the corresponding
representing measures converge in a weak-star sense. We conclude with
an analysis of the measure $\mu$ in the general non-smooth case, in
particular giving a condition which guarantees that it does not have a
singular part. Furthermore, we show that even though the
polarizability $\alpha^+(x)$ exists in a limit sense almost everywhere
$x \in \R$, the potential $U_x$ does not exist at points where
$\mu'(x) > 0$, neither as a solution of the boundary value problem
given by \eqref{eq:elstat}, \eqref{eq:elstatb1} and
\eqref{eq:elstatb2}, nor in a limit sense.

Suppose now that $S$ is a $C^2$-surface, so that the operator $A$ is compact. Let $(f^0_i)$ be an orthonormal basis for $\ker A$, and let $(f^1_i)$ be eigenvectors of $A$ corresponding to the non-zero eigenvalues $(z_i^1)$, repeated according to multiplicity, so that $(f_i) = (f^0_i) \cup (f^1_i)$ is an orthonormal basis for $L^2(S)$. By Theorem \ref{thm:Khav}, $K,K^* : L^2(S) \to L^2(S)$ have the same non-zero eigenvalues $(z_i^1)$, and corresponding (non-orthonormal) eigenvectors are obtained as $\phi^1_i = \sqrt{\mS}^{-1} f^1_i$ and $\psi^1_i = \sqrt{\mS} f^1_i$, respectively. Concerning zero eigenvectors, $\psi_i^0 = \sqrt{\mS}f_i^0 \in H^{1/2}(S)$ are clearly in the kernel of $K^*$, but $\phi_i^0 = \sqrt{\mS}^{-1}f_i^0$ are in general elements of $H^{-1/2}(S)$ and only zero eigenvectors of $K$ when considered as an operator on said space $H^{-1/2}(S)$. In particular, we note that the $H^{-1/2}(S)$-eigenvectors of $K$ span the whole space $H^{-1/2}(S)$.  With this in mind, we shall denote $(\phi_i) = (\phi_i^0) \cup (\phi_i^1)$ and $(\psi_i) = (\psi_i^0) \cup (\psi_i^1)$, the indexing arranged appropriately so that $$\langle \phi_i, \psi_j \rangle_{L^2(S)} = \delta_{ij}.$$ As a final notational detail, we shall let $(z_i)$ denote the full sequence of eigenvalues including zeros, so that $K \phi_i = z_i \phi_i$ and $K^* \psi_i = z_i \psi_i$ for every $i$.

Based on the spectral decomposition of $A$ we obtain for $u \in L^2(S)$ that
\begin{equation*}
\sqrt{\mS}K u = A \sqrt{\mS} u = \sum_{i} z_i \langle \sqrt{\mS} u, f_i \rangle_{L^2(S)} f_i,
\end{equation*}
with convergence in $L^2(S)$. Equivalently, we have for $u \in H^{-1/2}(S)$ the expansion
\begin{equation} \label{eq:kernelformal}
K u = \sum_{i} z_i \langle u, \psi_i \rangle_{L^2(S)} \phi_i,
\end{equation}
with convergence in $H^{-1/2}(S)$. This is the formal interpretation
of \eqref{eq:kernel}. In this framework it is now easy to furthermore
justify \eqref{eq:lambda4a}, \eqref{eq:lambda4b} and
\eqref{eq:resid1}. We state this as a theorem.
\begin{theorem} \label{thm:smooth}
Let $S$ be a $C^2$-surface. Then \eqref{eq:kernelformal} holds with $K$ considered as an operator on $H^{-1/2}(S)$. Furthermore, let $u_i = \langle \phi_i, h\rangle_{L^2(S)}$ and $v_i = \langle g, \psi_i \rangle_{L^2(S)}$. Then
\begin{equation} \label{eq:specequal}
 \spec (K, L^2(S)) = \spec (K,H^{-1/2}(S)) = \{(z_i)\},
\end{equation}
 and for $z \notin (z_i)$ the unique solution $\rho_z \in L^2(S)$ of $(K-z)\rho = g$ is given by
\begin{equation}\label{eq:rhodisc}
\rho_z = \sum_i \frac{v_i \phi_i}{z_i-z},
\end{equation}
with convergence in $H^{-1/2}(S)$. The corresponding formula for the
scaled polarizability is
\begin{equation} \label{eq:alphadisc}
\alpha(z) = \sum_i \frac{u_i v_i}{z_i-z},
\end{equation}
where the sum is absolutely convergent. Finally, it holds that $\sum_i u_i v_i = 2$. 
\end{theorem}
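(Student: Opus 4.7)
The overall plan is to use the symmetrization identity $A\sqrt{\mS} = \sqrt{\mS} K$ from Theorem \ref{thm:Khav} to transfer every assertion to the spectral theorem for the compact self-adjoint operator $A$ on $L^2(S)$. Writing the spectral decomposition $Au = \sum_i z_i \langle u, f_i\rangle_{L^2(S)} f_i$ in the orthonormal eigenbasis $(f_i)$, and then applying the bicontinuous isomorphism $\sqrt{\mS}^{-1} : L^2(S) \to H^{-1/2}(S)$ term-by-term, I obtain \eqref{eq:kernelformal}. Here the identifications $\phi_i = \sqrt{\mS}^{-1} f_i$ and $\psi_i = \sqrt{\mS} f_i$ are used, and the pairing $\langle \sqrt{\mS}\rho, f_i \rangle_{L^2(S)}$ is rewritten as $\langle \rho, \psi_i\rangle$ via the self-adjointness of $\sqrt{\mS}$, extended to the $H^{-1/2}$--$H^{1/2}$ duality whenever $\rho$ does not lie in $L^2(S)$.

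The spectral equalities \eqref{eq:specequal} then follow from the conjugacy $K - z = \sqrt{\mS}^{-1}(A-z)\sqrt{\mS}$ on $H^{-1/2}(S)$, which yields $\spec(K, H^{-1/2}(S)) = \spec(A, L^2(S)) = \{(z_i)\}$. For the $L^2(S)$ spectrum, compactness of $K$ on $L^2(S)$ reduces matters to locating non-zero eigenvalues, which by Theorem \ref{thm:Khav} are precisely the $z_i^1$, while $0$ belongs to both spectra by infinite-dimensionality. For any $z \notin \{(z_i)\}$, the resolvent formula \eqref{eq:Kresolv}, combined with the norm-convergent expansion $(A-z)^{-1}w = \sum_i (z_i-z)^{-1}\langle w, f_i\rangle_{L^2(S)}f_i$ applied to $w = \sqrt{\mS} g \in L^2(S)$ and then pulled back through $\sqrt{\mS}^{-1}$, yields \eqref{eq:rhodisc} with convergence in $H^{-1/2}(S)$. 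Pairing this expansion against $h \in H^{1/2}(S)$ immediately produces \eqref{eq:alphadisc}.

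For absolute convergence and the sum rule, I rewrite $u_i = \langle f_i, \sqrt{\mS}^{-1}h\rangle_{L^2(S)}$ and $v_i = \langle \sqrt{\mS} g, f_i\rangle_{L^2(S)}$ and invoke Cauchy-Schwarz and Parseval:
\begin{equation*}
\sum_i |u_i v_i| \leq \|\sqrt{\mS}^{-1} h\|_{L^2(S)} \|\sqrt{\mS} g\|_{L^2(S)} < \infty, \qquad \sum_i u_i v_i = \langle \sqrt{\mS} g, \sqrt{\mS}^{-1} h\rangle_{L^2(S)} = \langle g, h\rangle_{L^2(S)}.
\end{equation*}
The divergence theorem then evaluates $\int_S g h \,{\rm d}\sigma$ to $2$ using $g = -2(e\cdot \nu)$, $h = -(e\cdot r)/|V|$ and $|e|=1$. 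The main technical point requiring care throughout is the bookkeeping when moving $\sqrt{\mS}$ and $\sqrt{\mS}^{-1}$ across inner products: since $\phi_i$ in general only belongs to $H^{-1/2}(S)$, several $L^2$-pairings in the statement must silently be interpreted as the extended $H^{-1/2}$--$H^{1/2}$ duality, and once this is accepted as routine all five conclusions fall out of the spectral theorem for $A$.
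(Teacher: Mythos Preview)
Your proposal is correct and follows essentially the same route as the paper: both arguments transfer everything to the compact self-adjoint operator $A$ via the conjugacy $K-z=\sqrt{\mS}^{-1}(A-z)\sqrt{\mS}$, obtain \eqref{eq:rhodisc} from the spectral expansion of $(A-z)^{-1}$, and then pair with $h$ to get \eqref{eq:alphadisc}. The only cosmetic difference is that for the sum rule $\sum_i u_iv_i=2$ the paper points back to the mass computation $\|\mu\|=2$ in Theorem~\ref{thm:muexist}, whereas you compute $\langle g,h\rangle_{L^2(S)}=2$ directly via the divergence theorem --- these are the same calculation in two guises, and your added Cauchy--Schwarz/Parseval line for absolute convergence simply fills in a detail the paper leaves implicit.
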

\begin{proof}
Equation \eqref{eq:Kresolv} and Theorem \ref{thm:Khav} show that
\begin{equation*}
\spec (K,H^{-1/2}(S)) = \spec (A, L^2(S)) = \spec (K, L^2(S)).
\end{equation*}
Formula \eqref{eq:rhodisc} also follows from \eqref{eq:Kresolv} and spectral decomposition of $A$, since
\begin{equation*}
\rho_z = \sqrt{\mS}^{-1}(A-z)^{-1} \sqrt{\mS} g = \sqrt{\mS}^{-1} \sum_i \frac{\langle \sqrt{\mS} g, f_i \rangle_{L^2(S)}f_i}{z_i-z} = \sum_i \frac{v_i \phi_i}{z_i-z}.
\end{equation*}
From $\alpha(z) = \langle \rho_z, h \rangle_{L^2(S)}$ we now deduce \eqref{eq:alphadisc}. In terms of the measure $\mu$, this expresses the fact that $\mu = \sum_i u_i v_i \delta_{z_i}$ where $\delta_{z_i}$ is the Dirac delta at $z_i$. This makes it clear that we have already proven that $\sum_i u_i v_i = 2$ as a part of Theorem \ref{thm:muexist}.
\end{proof}
\begin{remark} \rm
The statement that the $H^{-1/2}(S)$-spectrum of $K$ is equal to its $L^2(S)$-spectrum is markedly untrue when $S$ is not $C^2$. In fact, when $S$ is a square in two dimensions the $H^{-1/2}(S)$-spectrum of $K$ is real, while the $L^2(S)$-spectrum extends into the complex plane.
\end{remark}

We will now briefly discuss the limiting process which occurs when
$(S_k)$ is a sequence of smooth surfaces approximating a non-smooth
surface $S$. For brevity we assume that $S$ is the cube defined by
$\max_{1 \leq i \leq d} |r_i| = 1$ in $d$ dimensions, $r = (r_1,
\ldots, r_d)$, and that $S_k$ is the superellipsoid defined by
$\sum_{i = 1}^d |r_i|^k = 1$. We shall also skip some rather laborious
details. See \cite{Verc84} for detailed approximation arguments when
$S$ is a general Lipschitz surface.

For any notation introduced so far, we denote by a subscript $k$ that
it corresponds to the surface $S_k$ rather than $S$. Via appropriately
defined homeomorphisms of $S_k$ onto $S$ we may, in a bicontinuous
way, consider $K_k^*$ and $K^*$ to be operators on the same space
$H^{1/2}(S)$. We choose the implied isomorphism between $H^{1/2}(S)$
and $H^{1/2}(S_k)$ so that it extends to a unitary map of $L^2(S)$
onto $L^2(S_k)$. Similar conventions will apply throughout what
follows. Using the boundedness results collected in Section
\ref{sec:prelim} and adapting the arguments in the proof of
\cite[Theorem 3.1]{Verc84}, one can verify that $K_k^*$ converges
strongly to $K^*$ on $H^{1/2}(S)$, meaning that $\lim_k K_k^* u =
K^*u$ in norm for any $u \in H^{1/2}(S)$. Similarly, $K_k$ converges
strongly to $K$ on $H^{-1/2}(S)$.

Let $z \in \C$ be a point at a distance at least $\varepsilon > 0$
away from the $H^{-1/2}(S)$-spectra of $K$ and $K_k$, for all $k$. It
is evident from \eqref{eq:rhodisc} that $ \sup_{k} \| \rho_{z,k}
\|_{H^{-1/2}(S)} < \infty$. We can hence extract a weakly convergent
subsequence $\rho_{z,k'}$ with limit $b$,
\begin{equation*}
\lim_{k' \to \infty} \langle \rho_{z,k'}, w \rangle_{L^2(S)} = \langle b, w \rangle_{L^2(S)}, \quad \forall w \in H^{1/2}(S).
\end{equation*}
Since $s-\lim K_k = K$ we find that $(K_{k'}-z)\rho_{z,k'}$ converges
weakly to $(K-z)b$. On the other hand, $(K_{k'}-z)\rho_{z,k'} =
g_{k'}$ converges to $g$. Hence $b = \rho_z$. Since every weakly
convergent subsequence of $\rho_{z,k}$ has the same limit $ \rho_z$,
we conclude that $\rho_{z,k}$ converges weakly to $\rho_z$. In
particular, noting that $h_k \to h$ in $H^{1/2}(S)$, $$\lim_{k \to
  \infty} \alpha_k(z) = \lim_{k \to \infty} \langle \rho_{z,k}, h_k
\rangle_{L^2(S)} = \alpha(z).$$

Since this argument remains correct no matter the choice of $g \in
H^{-1/2}(S)$ and $h \in H^{1/2}(S)$, it is not difficult to see, based
on the spectral representation \eqref{eq:Aspecrep}, that any $z \in
\spec (K, H^{-1/2})$ is obtained as the limit of a sequence of
eigenvalues $z_k \in \spec (K_k, H^{-1/2})$. See Weidmann
\cite{Weid97}. From \eqref{eq:spec11} and \eqref{eq:specequal} we
deduce that
\begin{equation*}
\spec (K, H^{-1/2}(S)) = \spec (K^*, H^{1/2}(S)) \subset [-1,1].
\end{equation*}
However, unlike the case of convergence in operator norm, we point out
that strong convergence allows us to say very little about the
character of the points in $\spec (K,H^{-1/2})$. For example, they do
not need to be eigenvalues.

We summarize.

\begin{theorem} \label{thm:wstarconv}
  Let $S$ be the unit cube in $\R^d$, and let $S_k$ be the
  approximating superellipsoid given by $\sum_{i = 1}^d |r_i|^k = 1$,
  $r = (r_1, \ldots, r_d)$. Denoting by ${\rm cl } \, R$ the closure
  of $R \subset \R$, let
\begin{equation*}
B = {\rm cl} \left[\bigcup_k \spec (K_k ,H^{-1/2}(S_k))\right].
\end{equation*}
Then $\spec (K, H^{-1/2}(S)) \subset B \subset [-1,1]$ and $\lim_k
\alpha_k(z) = \alpha(z)$ for all $z \notin B$. Furthermore, the
supports of $\mu_k$ and $\mu$ are contained in $B$, and $\mu_k$
converges weak-star to $\mu$ as measures on $B$. That is,
\begin{equation} \label{eq:weakstar}
\lim_{k\to\infty} \int_\R f(x) \, {\rm d} \mu_k(x) = \int_\R f(x) \, {\rm d} \mu(x), \quad \forall f \in C(B),
\end{equation}
where $C(B)$ denotes the continuous functions on $B$.
\end{theorem}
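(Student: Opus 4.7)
The plan is to assemble the theorem from the preliminary observations the paper has just made before the theorem statement, adding to them the uniqueness part of Theorem \ref{thm:muexist} and a classical fact about Cauchy transforms. The four assertions naturally split as (a) the two-sided inclusion for $\spec(K,H^{-1/2}(S))$, (b) pointwise convergence $\alpha_k \to \alpha$ off $B$, (c) the support statements for $\mu_k$ and $\mu$, and (d) weak-star convergence.

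First I would handle (a) and (b). The inclusion $B \subset [-1,1]$ is immediate from \eqref{eq:spec11} combined with \eqref{eq:specequal} of Theorem \ref{thm:smooth}, since each $S_k$ is smooth, so $\spec(K_k,H^{-1/2}(S_k)) = \spec(K_k,L^2(S_k)) \subset [-1,1]$. The inclusion $\spec(K,H^{-1/2}(S)) \subset B$ was essentially proved in the text preceding the theorem: strong convergence $K_k \to K$ on $H^{-1/2}(S)$, together with Weidmann's result, guarantees that every point of $\spec(K,H^{-1/2}(S))$ is a limit of points $z_k \in \spec(K_k,H^{-1/2}(S_k))$, hence lies in $\mathrm{cl}\,\bigcup_k \spec(K_k,H^{-1/2}(S_k)) = B$. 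For (b), if $z \notin B$, then $\mathrm{dist}(z,\spec(K_k,H^{-1/2})) \geq \varepsilon > 0$ uniformly in $k$, so the weak-convergence argument already carried out in the excerpt applies verbatim to give $\alpha_k(z) \to \alpha(z)$.

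Next, for (c), the containment $\mathrm{supp}\,\mu_k \subset B$ is immediate from Theorem \ref{thm:smooth}, which realizes $\mu_k$ as a sum of point masses at the eigenvalues $z_i^k \in \spec(K_k,H^{-1/2}) \subset B$. For $\mathrm{supp}\,\mu \subset B$ I would argue via the Cauchy transform: since $\alpha_k \to \alpha$ pointwise on $\C \setminus B$ and each $\alpha_k$ is holomorphic there (being the resolvent matrix element of a self-adjoint operator away from its spectrum), together with the uniform bound $|\alpha_k(z)| \leq \|\mu_k\|/\mathrm{dist}(z,B) = 2/\mathrm{dist}(z,B)$, a normal families argument shows $\alpha$ extends holomorphically to $\C \setminus B$. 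A classical fact about Cauchy transforms of compactly supported measures on $\R$ (available in Koosis \cite{Koos98}, or directly from the Stieltjes inversion formula together with \eqref{eq:herglotz}) then yields $\mathrm{supp}\,\mu \subset B$.

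Finally, (d) follows by a standard Helly--Banach--Alaoglu compactness argument. Each $\mu_k$ is a positive measure with total mass $2$ supported in the compact set $B$, so by weak-star compactness of the unit ball, any subsequence has a further subsequence $\mu_{k'}$ converging weak-star to some positive Borel measure $\nu$ supported in $B$. Testing against $x \mapsto 1/(x-z)$ for each fixed $z \in \C_+ \subset \C \setminus B$ gives $\mathcal{K}_{\mu_{k'}}(z) \to \mathcal{K}_\nu(z)$, while step (b) gives $\mathcal{K}_{\mu_{k'}}(z) \to \alpha(z) = \mathcal{K}_\mu(z)$. Hence $\mathcal{K}_\nu = \mathcal{K}_\mu$ on $\C_+$, and the uniqueness assertion of Theorem \ref{thm:muexist} forces $\nu = \mu$. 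Since every weak-star limit point of $(\mu_k)$ equals $\mu$, the whole sequence converges weak-star to $\mu$ on $C(B)$.

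The main technical point, and the one I would spend most care on, is the support statement $\mathrm{supp}\,\mu \subset B$: it is where one really needs to know that analytic continuation of the Cauchy transform across an interval on $\R$ forces the measure to vanish there, rather than merely that its absolutely continuous part vanishes (the formula \eqref{eq:herglotz} alone would only handle the latter). Everything else is organizational, relying on tools already in place: strong convergence of $K_k \to K$, Theorem \ref{thm:smooth}, the resolvent identity \eqref{eq:Kresolv}, and the uniqueness clause in Theorem \ref{thm:muexist}.
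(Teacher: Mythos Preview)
Your proposal is correct, and for parts (a)--(c) it matches the paper's approach almost exactly (these are handled in the discussion preceding the theorem statement, and the paper's proof opens by saying only the final point remains). There is one simplification available for (c): rather than going through analytic continuation and Stieltjes inversion, you can observe that $\mathrm{supp}\,\mu \subset L = \spec(K,H_0^{-1/2}) \subset \spec(K,H^{-1/2}) \subset B$, the first inclusion being built into the construction of $\mu$ in Theorem~\ref{thm:muexist} and the last being part (a) which you have already established.

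The genuine difference is in (d). The paper shows \eqref{eq:weakstar} directly for the test functions $f(x) = 1/(x-z)^t$ with $z \notin B$ and $t \geq 0$ an integer, by repeating the resolvent-convergence argument for higher powers, and then appeals to Stone--Weierstrass to get density of the span of such functions in $C(B)$. Your Helly/Banach--Alaoglu route, extracting weak-star limit points and identifying each with $\mu$ via the uniqueness clause of Theorem~\ref{thm:muexist}, is equally valid and arguably tidier: it reuses only the $t=1$ convergence already established in (b) and avoids reworking the argument for higher powers of the resolvent. The paper's approach, on the other hand, is slightly more self-contained in that it does not rely on the F.\ and M.\ Riesz uniqueness statement.
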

\begin{proof}
  Only the final point remains to be proven. However, the argument
  preceding the theorem can be repeated to show that
  \eqref{eq:weakstar} holds for $f(x) = 1/(x-z)^t$, where $z \notin B$
  and $t \geq 0$ is an integer. Now \eqref{eq:weakstar} immediately
  follows in general from the fact that the linear span of such
  functions $1/(x-z)^t$ is dense in $C(B)$, which in turn follows for
  example from the Stone-Weierstrass theorem.
\end{proof}

We turn to the discussion of $\mu$ and $\alpha$ for a general
Lipschitz surface $S$. First note that the support of $\mu$ is
contained in the $H^{-1/2}(S)$-spectrum of $K$, and that $\mu$ has a
unique decomposition
\begin{equation*}
\mu = \mu_a + \mu_p + \mu_s.
\end{equation*}
Here $\mu_a$ is the absolutely continuous part of the measure, so that $\mu_a = \mu'(x) \, {\rm d} x$, where $\mu' \in L^1(\R)$. $\mu$ has at most a countable number of atoms $z_i$, corresponding to bright plasmons at $z_i$, and $\mu_p$ is the atomic part of $\mu$, $$\mu_p = \sum_i \mu(\{z_i\}) \delta_{z_i}.$$ Note that each atom arises from an eigenvalue $z$ of $K$, but not necessarily every eigenvalue is given positive measure by $\mu$, reflecting the distinction between bright and dark plasmons. Finally, $\mu_s$ denotes the singular part of $\mu$, excluding atoms. This means that $\mu_s$ has no atoms, yet lives solely on a set with Lebesgue measure zero. That is, there exists a measure zero set $R_0$ such that $\mu_s(R) = 0$ for any Borel set $R \subset \R$ with $R \cap R_0 = \emptyset$.

For $0 < p < \infty$, let $\mathcal{H}_{{\rm conf}}^p(\C_+)$ denote
the conformally invariant Hardy space on the upper half-plane $\C_+$,
consisting of analytic functions $f$ in $\C_+$ such that
\begin{equation*}
\|f\|_p^p = \sup_{r > 0} \int_{y = r} \frac{|f(z)|^p}{|z+i|^2} \, {\rm d} x < \infty, \quad z = x+iy.  
\end{equation*}
This is known as the conformally invariant Hardy space as
$$\mathcal{H}_{{\rm conf}}^p(\C_+) = \{ f \circ \omega \, : \, f \in
H^p(\mathbb{D}) \},$$
where $H^p(\mathbb{D})$ is the usual Hardy space
of the unit disk and $\omega$ is a conformal map of $\C_+$ onto the
disk. $\mathcal{H}_{{\rm conf}}^p(\C_+)$ does not coincide with the
usual Hardy space of the upper half-plane. Of importance to us is the
fact that every $f \in \mathcal{H}_{{\rm conf}}^p(\C_+)$ has boundary
values $f(x) = \lim_{y \to 0^+} f(x+iy)$ almost everywhere, and
\begin{equation*}
\|f\|_p^p = \int_\R \frac{|f(x)|^p}{1+x^2} \, {\rm d} x.
\end{equation*}
See Koosis \cite{Koos98} for further information.

As previously mentioned in the proof of Theorem \ref{thm:muexist},
$\alpha$ is the Cauchy transform of $\mu$, $\alpha(z) =
\mathcal{K}_\mu(z)$. Changing variables in the Cauchy transform
through $\omega$, exploiting that $\mu$ is compactly supported and
applying Smirnov's theorem, see \cite{Cima06}, leads to the fact that
$\alpha \in \mathcal{H}_{{\rm conf}}^p(\C_+)$ for $0 < p < 1$. Hence,
$\alpha$ has boundary values $\alpha^+(x)$ for almost all $x \in \R$,
taken as limits from the upper half-plane. A similar discussion could
be carried out for the lower half-plane, but since $\mu$ is real, the
boundary values obtained from below are related to those obtained from
above simply by conjugation.

The absolutely continuous part of $\mu$ is related to $\alpha^+$ through equation \eqref{eq:herglotz} almost everywhere, $\pi \mu'(x) = \Im (\alpha^+(x))$. Atoms $z_i$ can be recognized as those points where $|\alpha(z_i + i\varepsilon)| \sim \varepsilon^{ -1}$ as $\varepsilon \to 0^+$. While some results are available, see \cite{Cima06}, recovering $\mu_s$ from $\alpha$ is much more subtle. However, if it turns out to be the case that 
\begin{equation} \label{eq:alphahpcond}
\int_\R \frac{|\alpha^+(x)|}{1+x^2} \, {\rm d} x < \infty,
\end{equation}
it follows that $\alpha \in \mathcal{H}^1_{{\rm conf}}(\C_+)$. But then $f \circ \omega^{-1}$ is the Cauchy integral of its boundary values and we infer that $\mu$ must be absolutely continuous. That is, in this case $\mu_p = \mu_s = 0$ and 
\begin{equation} \label{eq:alpharep}
\alpha(z) = \int_\R \frac{\mu'(x) \, {\rm d}x}{x-z}, 
\quad z \notin \spec (K, H_0^{-1/2}(S)).
\end{equation}
While we offer no strict proof, the numerical evidence in Sections
\ref{sec:numsquare} and \ref{sec:numcube} suggests that
\eqref{eq:alphahpcond} holds when $S$ is a square in two dimensions or
a cube in three, and hence that \eqref{eq:alpharep} holds.

To end this section we shall show that while $\alpha(z)$ has boundary values $\alpha^+(x)$ almost everywhere, the same cannot be said of the corresponding solutions $\rho_z$. Note that the limit $\lim_{h\to0} \int_x^{x+h} \, {\rm d}\mu $ exists finitely almost everywhere for $x \in \R$. We say that $\mu'(x)$ exists whenever this is so, $$\mu'(x) = \lim_{h\to0} \int_x^{x+h} \, {\rm d}\mu,$$ and as before we denote by $\sigma_\mu$ the set
\begin{equation*}
\sigma_\mu = \{ x \in \R \, : \, \mu'(x) > 0 \textrm{ exists} \}.
\end{equation*}
\begin{theorem} \label{thm:Ublowup}
  For $x \in \R$ and $\varepsilon > 0$, let $\rho_{x+i\varepsilon} \in
  H^{-1/2}(S)$ denote the unique solution of
  $$(K-x-i\varepsilon)\rho_{x+i\varepsilon} = g.$$ For any $x \in
  \sigma_\mu$, we have
\begin{equation} \label{eq:rhoblowup}
\lim_{\varepsilon \to 0^+} \| \rho_{x+i\varepsilon} \|_{H^{-1/2}(S)} = \infty,
\end{equation}
and moreover, the equation $(K-x) \rho = g$ has no solution $\rho \in H^{-1/2}(S)$. Hence, for such $x$, there exists no potential $U_x$ solving the boundary value problem \eqref{eq:elstat}, \eqref{eq:elstatb1}, \eqref{eq:elstatb2} in the sense given by Proposition \ref{prop:solsense}.
\end{theorem}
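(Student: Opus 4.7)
The plan is to transfer the problem to the self-adjoint realization of $K$ via the symmetrization $A\sqrt{\mS} = \sqrt{\mS}K$ of Proposition 5.4. Since $\sqrt{\mS}:H^{-1/2}(S)\to L^2(S)$ is a bicontinuous isomorphism, the identity $\rho_z = \sqrt{\mS}^{-1}(A-z)^{-1}\sqrt{\mS}\,g$ gives
\[
\|\rho_z\|_{H^{-1/2}(S)}^2 \asymp \|(A-z)^{-1}\tilde g\|_{L^2(S)}^2, \qquad \tilde g := \sqrt{\mS}\,g.
\]
With $\mathcal{E}_A$ the projection-valued spectral measure of $A$, this reduces the whole question to the scalar spectral measure of $\tilde g$ under $A$.

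The next step is to express that scalar spectral measure in terms of $\mu$ itself. Writing $g=2|V|\partial_\nu h$ and using the identity $\sqrt{\mS}(\partial_\nu u)=(I+A)\sqrt{\mS}^{-1}u$ derived en route to \eqref{eq:poscond}, one obtains $\tilde g = 2|V|(I+A)\tilde h$ with $\tilde h := \sqrt{\mS}^{-1}h\in L^2(S)$. Matching the spectral formula $\alpha(z)=2|V|\langle (A+I)(A-z)^{-1}\tilde h,\tilde h\rangle_{L^2(S)}$ against $\alpha(z)=\int_\R(\lambda-z)^{-1}\,d\mu(\lambda)$ identifies $d\mu(\lambda) = 2|V|(1+\lambda)\,d\langle \mathcal{E}_A(\lambda)\tilde h,\tilde h\rangle$. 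Substituting back produces the key estimate
\[
\|\rho_{x+i\varepsilon}\|_{H^{-1/2}(S)}^2 \asymp 2|V|\int_\R \frac{1+\lambda}{(\lambda-x)^2+\varepsilon^2}\,d\mu(\lambda).
\]

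To establish \eqref{eq:rhoblowup} I fix $x\in\sigma_\mu$, where one may assume $x>-1$ since the factor $1+\lambda$ in $d\mu$ forces vanishing density at $\lambda=-1$. Then on some interval $[x-\delta,x+\delta]$ we have $1+\lambda\geq(1+x)/2>0$, and the definition of $\sigma_\mu$ gives $\mu([x-\varepsilon,x+\varepsilon])\geq \mu'(x)\,\varepsilon$ for all sufficiently small $\varepsilon$. Restricting the integral to $|\lambda-x|\leq\varepsilon$ and bounding the denominator crudely by $2\varepsilon^2$ yields $\|\rho_{x+i\varepsilon}\|_{H^{-1/2}(S)}^2 \gtrsim (1+x)\,\mu'(x)/\varepsilon$, which diverges as $\varepsilon\to 0^+$.

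For the non-existence assertion, suppose for contradiction that $(K-x)\rho=g$ admits a solution $\rho\in H^{-1/2}(S)$. Then $\tilde\rho:=\sqrt{\mS}\rho\in L^2(S)$ satisfies $(A-x)\tilde\rho = \tilde g$, and since $|(\lambda-x)/(\lambda-x-i\varepsilon)|^2\leq 1$, the bounded functional calculus gives
\[
\|\rho_{x+i\varepsilon}\|_{H^{-1/2}(S)}^2 \asymp \int_\R \Bigl|\frac{\lambda-x}{\lambda-x-i\varepsilon}\Bigr|^2 d\langle \mathcal{E}_A(\lambda)\tilde\rho,\tilde\rho\rangle \leq \|\tilde\rho\|_{L^2(S)}^2,
\]
contradicting the blow-up established above. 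The final statement about non-existence of $U_x$ then follows immediately from Proposition~\ref{prop:solsense}. The main bookkeeping obstacle is the identification of $d\mu$ as the \emph{weighted} spectral measure $2|V|(1+\lambda)\,d\langle \mathcal{E}_A\tilde h,\tilde h\rangle$ and checking that the weight $1+\lambda$ is nonnegative on $\spec A$; this is exactly condition \eqref{eq:poscond}, guaranteed by Theorem~\ref{thm:wstarconv}.
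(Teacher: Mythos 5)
Your proposal is correct, and its treatment of the blow-up \eqref{eq:rhoblowup} takes a genuinely different route from the paper, while your second half (non-solvability of $(K-x)\rho=g$ via the uniform bound on $(A-x-i\varepsilon)^{-1}(A-x)$ by functional calculus) coincides with the paper's argument. The paper proves \eqref{eq:rhoblowup} by contradiction: boundedness along a sequence $\varepsilon_n\to0$ would produce a weak solution $\tau_x$ of the adjoint equation $(K^*-x)\tau_x=h$ in $H^{1/2}(S)/\C$, whence ${\rm d}\mu(t)=2|V|(t-x)^2\,{\rm d}\mu_{\tau_x}(t)$, so $\int(t-x)^{-2}\,{\rm d}\mu<\infty$, forcing the Poisson integral $P_\mu(x+iy)\to0$ and contradicting $P_\mu(x+iy)\to\mu'(x)>0$. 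You instead stay entirely with the symmetrized operator $A$, identify ${\rm d}\mu(\lambda)=2|V|(1+\lambda)\,{\rm d}\langle\mathcal{E}_A(\lambda)\tilde h,\tilde h\rangle$ via the uniqueness clause of Theorem \ref{thm:muexist}, and bound the resolvent norm from below directly, which is legitimate and even more informative: it yields the explicit rate $\|\rho_{x+i\varepsilon}\|_{H^{-1/2}(S)}^2\gtrsim\mu'(x)/\varepsilon$ and avoids any compactness or weak-limit step, at the cost of the bookkeeping you describe; the paper's route avoids the weight $1+\lambda$ altogether because in the $K^*$-symmetrization $\mu$ is the unweighted diagonal spectral measure of $h$. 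Two small points: (i) your appeal to \eqref{eq:poscond} and Theorem \ref{thm:wstarconv} (which is stated only for the cube) is unnecessary, since $(1+\lambda)\,{\rm d}\mu(\lambda)=\tfrac{1}{2|V|}\,{\rm d}\langle\mathcal{E}_A(\lambda)\sqrt{\mS}g,\sqrt{\mS}g\rangle$ is automatically a positive measure, and near $x$ you only need $\mu\geq0$ (Theorem \ref{thm:muexist}) together with $1+\lambda\geq(1+x)/2$; (ii) the exclusion of $x=-1$ deserves one more line -- either note that the atom of $\langle\mathcal{E}_A\tilde h,\tilde h\rangle$ at $-1$ contributes nothing because of the weight while the mass of $(-1,-1+h]$ tends to zero, so $\mu'(-1)=0$ whenever it exists, or simply invoke the remark after Theorem \ref{thm:muexist} that $\supp\mu\subset L$ and $-1\notin L$.
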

\begin{proof}
If \eqref{eq:rhoblowup} does not hold, there is a sequence $(\varepsilon_n)$ with $\varepsilon_n \to 0$ as $n \to \infty$ and such that $\lim_{n\to\infty}\rho_{x+i\varepsilon_n}$ exists weakly. From the proof of Theorem \ref{thm:muexist} we see that $\tau_{x+i\varepsilon_n} = (K^*-x-i\varepsilon_n)^{-1}h$ converges weakly in $H^{1/2}(S)/\C$ to some element $\tau_x$, meaning that
\begin{equation*}
\lim_{n \to \infty} \langle \tau_{x+i\varepsilon_n}, w \rangle_{H^{1/2}(S)/\C} = \langle \tau_{x}, w \rangle_{H^{1/2}(S)/\C}, \quad \forall w \in H^{1/2}(S)/\C.
\end{equation*}
Then, clearly, it must be that 
\begin{equation} \label{eq:Kstarsolve}
(K^*-x) \tau_x = h.
\end{equation}
 Let $\mu_{\tau_x}$ be the positive measure defined by $\mu_{\tau_x}(t) = \langle \mathcal{E}(t) \tau_x, \tau_x \rangle_{H^{1/2}(S)/\C}$, where $\mathcal{E}$ is the spectral measure of $K^*$ on $H^{1/2}(S)/\C$.  Recalling that $\mu(t) = 2|V|\langle \mathcal{E}(t) h, h \rangle_{H^{1/2}(S)/\C}$, we can reformulate \eqref{eq:Kstarsolve} as $$\mu(t) = 2|V|(t-x)^2\mu_{\tau_x}(t).$$ We infer
\begin{equation*}
\int_\R \frac{{\rm d}\mu(t)}{(t-x)^2} < \infty. 
\end{equation*}
This implies that $\lim_{y \to 0^+} P_\mu(x+iy) = 0$ for the Poisson integral
\begin{equation*}
P_\mu(z) = \frac{1}{\pi} \int_\R \frac{y\,{\rm d}\mu(t)}{(t-x)^2+y^2}, 
\quad z = x+iy \in \C_+ .
\end{equation*}
On the other hand, it is well known that $\lim_{y \to 0^+} P_\mu(x+iy) = \mu'(x)$ whenever the right hand side exists finitely. This contradicts the hypothesis that $\mu'(x) \neq 0$.

Suppose that there exists a $\rho \in H^{-1/2}(S)$ such that $(K-x)\rho = g$. Note that
\begin{equation*}
(A-x-i\varepsilon)^{-1}(A-x) = \int_\R \frac{t-x}{t-x-i\varepsilon} \, {\rm d} \mathcal{E}_A(t),
\end{equation*}
where $\mathcal{E}_A$ is the spectral measure of $A$ on $L^2(S)$. Hence the operator norms $\|(A-x-i\varepsilon)^{-1}(A-x)\|$ are uniformly bounded for $\varepsilon > 0$. Since
\begin{equation*}
\rho_{x+i\varepsilon} = (K-x-i\varepsilon)^{-1}(K-x)\rho = \sqrt{\mS}^{-1}(A-x-i\varepsilon)^{-1}(A-x)\sqrt{\mS} \rho
\end{equation*}
we have obtained a contradiction to \eqref{eq:rhoblowup}.
\end{proof}
\begin{remark} \rm
As to be expected, a similar argument shows that the conclusions of Theorem \ref{thm:Ublowup} are valid also when $x \in \R$ is such that $\mu(\{x\}) > 0$, that is, when there is a bright plasmon at $x$.
\end{remark}
Nevertheless, $x \mapsto \rho_x$ for $x \in \R$ can always be given an
interpretation as a $H^{-1/2}(S)$-valued distribution, for example in
the following weak sense. Let $u: \R \to \C$ be a
$C_0^\infty$-function, let $w \in H^{1/2}(S)$, and consider the
functional
\begin{equation*}
\rho^+(u,w) = \lim_{\varepsilon \to 0^+} \int_\R u(x) \langle \rho_{x+i\varepsilon}, w \rangle_{L^2(S)} \, {\rm d} x.
\end{equation*}
Observe that
\begin{equation*}
\int_\R u(x) \langle \rho_{x+i\varepsilon}, w \rangle_{L^2(S)} \, {\rm d} x = \int_\R u(x) \int_\R \frac{{\rm d} \mu_{g,w}(t)}{t-x-i\varepsilon} \, {\rm d} x = -\int_\R \mathcal{K}_u(t-i\varepsilon) \, {\rm d} \mu_{g,w}(t),
\end{equation*}
where $\mu_{g,w}(t) = \langle \mathcal{E}_A(t) \sqrt{\mS}g, \sqrt{\mS}^{-1}w \rangle_{L^2(S)}$. Due to the high regularity of $u$, passing to the limit in the Cauchy transform $\mathcal{K}_u(t-i\varepsilon)$ poses no problems, and we obtain
\begin{equation*}
\rho^+(u,w) = -\int_\R \mathcal{K}_u(t) \, {\rm d} \mu_{g,w}(t).
\end{equation*}
Furthermore, $\rho^+$ solves the equation $(K-x)\rho^+ = g$ in the sense that
\begin{equation*}
\rho^+(u(x),K^*w) - \rho^+(xu(x),w) = \langle g , w \rangle_{L^2(S)} \int_\R u(x) \, {\rm d} x.
\end{equation*}
Of course, we could equally well have introduced a solution $\rho^-$,
\begin{equation*}
\rho^-(u,w) = \lim_{\varepsilon \to 0^+} \int_\R u(x) \langle \rho_{x-i\varepsilon}, w \rangle_{L^2(S)} \, {\rm d} x.
\end{equation*}
We point out also that $\Re \rho = (\rho^+ + \rho^-)/2$ and $\Im \rho = -i(\rho^+ - \rho^-)/2$ give us two real-valued distributions satisfying $(K-x) \Re \rho = g$ and $(K-x) \Im \rho = 0$.

\section{Capacitance}
\label{sec:capa}

The electrical capacitance $C$ of an isolated charged conducting body
$V$ can be defined as the ratio of its total charge to its constant
potential $U(r)$. The problem of calculating $C$ for $V$ being a
(unit) cube has ``long been considered one of the major unsolved
problems of electrostatic theory''~\cite{Read04} and attracted
interest by researchers in computational electromagnetics for half a
century. See~\cite{Bontz11,Hwang10,Mukh09} for recent contributions
along with reviews of previous work and tables of historical progress.
The highest relative precision for $C$ so far, $10^{-7}$, was achieved
in 2010 by parallelizing a Monte Carlo method and running it on a PC
cluster~\cite{Hwang10}. See also Table~\ref{tab:refvals}.

The problem of determining $C$ can be modeled as an integral equation
much in the same way as the problem of determining $\alpha(z)$ and we
omit details. If one solves
\begin{equation}
\left(I+K+Q\right)\rho(r)=1\,,
\label{eq:sys2}
\end{equation}
where $K$ is as in~(\ref{eq:inteq2}) and $Q$ is the surface integral
operator
\begin{equation}
Q\rho=\int_S\rho(r)\,{\rm d}\sigma_r\,,
\end{equation}
then the (normalized) capacitance can be evaluated as
\begin{equation}
C=\frac{1}{4\pi U(r)}\,,\quad r\in V\,,
\end{equation}
where
\begin{equation}
U(r)=\int_S G(r,r')\rho(r')\,{\rm d}\sigma_{r'}\,.
\label{eq:rep2}
\end{equation}

\section{Strategies for computing $\alpha^+(x)$}
\label{sec:strat}

The difficulties with computing $\alpha^+(x)$ when $S$ has edges and
corners relate to issues of stability and resolution. We know from
Theorem~\ref{thm:Ublowup} that the electrostatic problem does not have a
finite energy solution $U(r)$ and that~(\ref{eq:inteq1}) does not have
a solution $\rho(r)\in H^{-1/2}(S)$ for $z\in\sigma_{\mu}$. Close to
$\sigma_{\mu}$, stability problems can be expected. A computational
mesh needs to be extremely refined (locally) in order to resolve
$U(r)$ or $\rho(r)$ and the solver must have the capability of dealing
with strongly singular solutions.

One strategy for alleviating these problems is to round edges and
corners so that $S$ becomes smooth. Then $U(r)$ has finite energy and
$\rho(r)\in L^2(S)$, except for at $z=z_i$, and $\alpha(z)$
assumes the form~(\ref{eq:lambda4b}). Some distance above the real
axis, corresponding to bigger losses, $\alpha(z)$ may resemble
$\alpha^+(x)$ and can be evaluated using commercial software. This is
essentially the approach in~\cite{Kettu08,Walle08,Zhang11}, where
finite element solvers are chosen.

It is, however, possible and advantageous to take the limit
$z\to\sigma_{\mu}$ in $\alpha(z)$ numerically while letting $S$ retain
its sharp shape. The rounding of edges and corners, while smoothing
solutions, introduces new length-scales which is an unnecessary
complication. In fact, there has been an intense activity in the area
of constructing numerical algorithms for solving integral equations on
non-smooth curves in recent
years~\cite{Brem12a,Brem12b,Brem10b,Brun09,Hels09JCP,Hels11SISC,Hels08,Hels09IJSS}.
See~\cite[Section~1.3]{Hels11JCP} for an overview and a comparison of
various approaches. Sharp corners and other boundary singularities can
be treated extremely efficiently using fast direct and fully automatic
solvers and by taking advantage of asymptotic self-similarity. An
algorithm to this effect for a quantity analogous to $\alpha(z)$ for
squares in a periodic setting is assembled and tested thoroughly
in~\cite{Hels11JCP}. This paper continues along the lines of that
work.

\section{Algorithm for the square}
\label{sec:two}

This section is a summary of results from
Refs.~\cite{Hels09JCP,Hels11JCP} applied to the solution
of~(\ref{eq:inteq2}) for $V$ being a square. We construct two meshes
on $S$ -- a coarse mesh and a fine mesh. The coarse mesh has 16
quadrature panels. The fine mesh is constructed from the coarse mesh
by $n_{\rm sub}$ times subdividing the panels closest to each corner
vertex $s_k$ in a direction towards the vertex. See
Fig.~\ref{fig:sqmesh}.

\begin{figure}[t]
\centering 
\includegraphics[height=39mm]{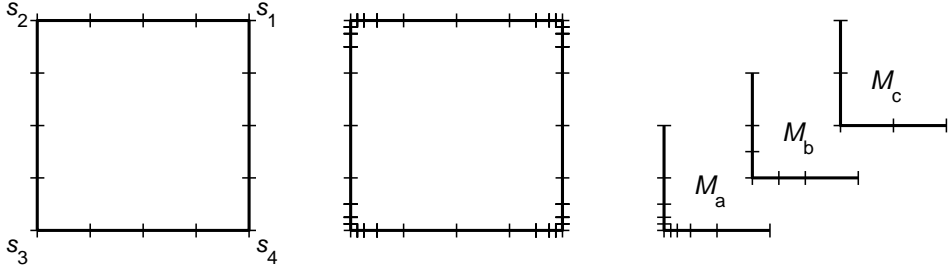}
\caption{Meshes on the boundary $S$ of a square. Left: the coarse mesh
  and the corner vertices $s_k$. Middle: a mesh that is refined
  $n_{\rm sub}=3$ times. Right: local meshes ${\cal M}_{\rm a}$,
  ${\cal M}_{\rm b}$ and ${\cal M}_{\rm c}$ centered around the corner
  vertex $s_3$.}
 \label{fig:sqmesh}
\end{figure}

\subsection{Preconditioning and discretization}
\label{sec:prec}

Let $S_k^{\star}$ denote a segment of $S$ covering the four panels on
the coarse mesh that lie closest to the corner vertex $s_k$ -- two
panels on each side of $s_k$. The $S_k^{\star}$ are disjoint and their
union is $S$. Let $K(r,r')$ denote the kernel of $K$
in~(\ref{eq:inteq2}).  Split $K(r,r')$ into two functions
\begin{equation}
K(r,r')=K^{\star}(r,r')+K^{\circ}(r,r')\,,
\label{eq:split}
\end{equation}
where $K^{\star}(r,r')$ is zero except for when $r$ and $r'$
simultaneously lie on the same $S_k^{\star}$. In this latter case
$K^{\circ}(r,r')$ is zero.

The kernel split~(\ref{eq:split}) corresponds to an operator split
$K=K^{\star}+K^{\circ}$ where $K^{\circ}$ is a compact operator.
Discretization of~(\ref{eq:inteq2}), using a Nystr{\"o}m method based
on composite 16-point Gauss--Legendre quadrature and a coarse or a
fine mesh on $S$, leads to an equation of the form
\begin{equation}
\left({\bf I}+\lambda{\bf K}^{\star}
                       +\lambda{\bf K}^{\circ}\right)
\boldsymbol{\rho}=\lambda{\bf g}\,,
\label{eq:gen1}
\end{equation}
where ${\bf I}$, ${\bf K}^{\star}$, and ${\bf K}^{\circ}$ are square
matrices and $\boldsymbol{\rho}$ and ${\bf g}$ are columns vectors.
The matrix ${\bf K}^{\star}$ assumes a block-diagonal structure since
the $S_k^{\star}$ are disjoint. This will be important in what
follows.

The change of variables
\begin{equation}
\rho(r)=\left(I+\lambda K^{\star}\right)^{-1}\tilde{\rho}(r)
\end{equation}
makes~(\ref{eq:gen1}) read
\begin{equation}
\left({\bf I}+\lambda{\bf K}^{\circ}
\left({\bf I}+\lambda{\bf K}^{\star}\right)^{-1}\right)
\tilde{\boldsymbol{\rho}}=\lambda{\bf g}\,.
\label{eq:gen2}
\end{equation}
This right preconditioned equation corresponds to the discretization
of a Fredholm second kind equation with a composed compact operator.
The solution $\tilde{\boldsymbol{\rho}}$ is the discretization of a
function that is piecewise smooth and can be resolved by piecewise
polynomials.

From now on we let subscripts {\footnotesize fin} and {\footnotesize
  coa} indicate what type of mesh is used for the discretization. The
collection of discretization points on a mesh is called a {\it grid}.
The number $n_{\rm sub}$ is assumed to be high enough so that
$\boldsymbol{\rho}_{\rm fin}$ resolves $\rho(r)$ to the precision
sought in our computations.

\subsection{Compression}
\label{sec:comp}

The following decomposed low-rank approximation of the discretization
of $K^{\circ}$ on the fine mesh holds to very high accuracy:
\begin{equation}
{\bf K}_{\rm fin}^{\circ}\approx
{\bf P}{\bf K}_{\rm coa}^{\circ}{\bf P}^T_W\,.
\label{eq:approx1}
\end{equation}
Here ${\bf K}_{\rm fin}^{\circ}$ is a $(256+96n_{\rm sub})\times
(256+96n_{\rm sub})$ matrix, ${\bf K}_{\rm coa}^{\circ}$ is a
$256\times 256$ matrix, ${\bf P}$ is a prolongation matrix from the
coarse grid to the fine grid and
\begin{equation}
{\bf P}_W={\bf W}_{\rm fin}{\bf P}{\bf W}_{\rm coa}^{-1}\,,
\label{eq:prolongW}
\end{equation}
where ${\bf W}$ is a diagonal matrix containing the quadrature weights
of the discretization, see~\cite[Section~5]{Hels09JCP}. Superscript
{\footnotesize $T$} denotes the transpose.

The relation~(\ref{eq:approx1}) has powerful consequences for
computational efficiency in the context of solving~(\ref{eq:gen2}). As
we soon shall see, it allows us to compress that equation and obtain
the accuracy offered by the fine grid while working chiefly on the
coarse grid. Only $\left({\bf I}+\lambda{\bf K}^{\star}\right)^{-1}$
needs the fine grid for resolution. We introduce the compressed
quadrature-weighted inverse
\begin{equation}
{\bf R}={\bf P}^T_W
\left({\bf I}_{\rm fin}+\lambda{\bf K}_{\rm fin}^{\star}\right)^{-1}{\bf P}\,.
\label{eq:R0}
\end{equation}
With~(\ref{eq:R0}), the discretization of~(\ref{eq:inteq2}) assumes
the final form
\begin{equation}
\left({\bf I}_{\rm coa}+\lambda{\bf K}_{\rm coa}^{\circ}{\bf R}\right)
\tilde{\boldsymbol{\rho}}_{\rm coa}=\lambda{\bf g}_{\rm coa}\,,
\label{eq:final}
\end{equation}
where all matrices are $256\times 256$. For later reference we introduce
\begin{equation}
\hat{\boldsymbol{\rho}}_{\rm coa}={\bf R}\tilde{\boldsymbol{\rho}}_{\rm coa}
\end{equation}
as a {\it weight-corrected} density~\cite[Section~5]{Hels09IJSS}.

\subsection{Recursion}
\label{sec:recur}

The compressed inverse ${\bf R}$ has a block diagonal structure with
four identical $64\times 64$ blocks ${\bf R}_k$ associated with the
vertices $s_k$. The construction of a block ${\bf R}_k$ from the
definition~(\ref{eq:R0}) is costly when $n_{\rm sub}$ is large.
Fortunately, this construction can be greatly sped up via a recursion.
In general situations this recursion uses grids on hierarchies of
local meshes, see~\cite[Section~6]{Hels09JCP}
and~\cite[Section~5]{Hels11SISC}. For wedge-like corners, thanks to
scale invariance of $K(r,r')$, only two local meshes ${\cal M}_{\rm
  b}$ and ${\cal M}_{\rm c}$ are needed, see right image of
Fig.~\ref{fig:sqmesh}. The recursion for block ${\bf R}_k$ assumes the
form of a simple fixed-point iteration
\begin{equation}
{\bf R}_{ik}={\bf P}^T_{W\rm{bc}}
\left(
\mathbb{F}\{{\bf R}_{(i-1)k}^{-1}\}
+{\bf I}_{\rm b}^{\circ}+\lambda{\bf K}_{{\rm b}k}^{\circ}
\right)^{-1}{\bf P}_{\rm{bc}}\,,\quad i=1,\ldots,n_{\rm sub}\,,
\label{eq:genrec}
\end{equation}
where ${\bf R}_{ik}={\bf R}_{k}$ for $i=n_{\rm sub}$. The
quadrature weighted and unweighted prolongation matrices ${\bf
  P}_{W\rm{bc}}$ and ${\bf P}_{\rm{bc}}$ act from a grid on a local
mesh ${\cal M}_{\rm c}$ to a grid on a local mesh ${\cal M}_{\rm b}$.
The superscript $\circ$ in~(\ref{eq:genrec}) has a similar meaning as
in~(\ref{eq:split}) and the operator $\mathbb{F}\{\cdot\}$ expands a
matrix by zero-padding, see~\cite[Section~6]{Hels11JCP}.

The derivation of~(\ref{eq:genrec}) relies on a low-rank approximation
similar to~(\ref{eq:approx1})
\begin{equation}
{\bf K}_{\rm a}^{\circ}\approx
{\bf P}_{\rm ab}{\bf K}_{\rm b}^{\circ}{\bf P}^T_{W{\rm ab}}\,,
\label{eq:approx2}
\end{equation}
where ${\bf K}_{\rm a}$ is a discretization of $K$ on a multiply
refined local mesh ${\cal M}_{\rm a}$. See~\cite[Section~7]{Hels08}
for details. Conceptually, one could think of~(\ref{eq:genrec}) as a
process on a multiply refined local mesh, going outwards from the
vertex, where step $i$ inverts and compresses contributions to ${\bf
  R}_k$ involving the outermost panels on level $i$.

The number $n_{\rm sub}$ needed for resolution of ${\bf R}_k$ may grow
without bounds as $z$ approaches $\sigma_{{\mu}{\rm sq}}$ (in infinite
precision arithmetic). In order to accelerate the recursion we
activate a combination of numerical homotopy and Newton's method when
deemed worthwhile, see~\cite[Section~6]{Hels11JCP}. The Newton
iterations are continued until the relative update in ${\bf R}_{ik}$
is smaller than $100\epsilon_{\rm mach}$ or a maximum number of 20
iterations is reached, which roughly corresponds to a local mesh
${\cal M}_{\rm a}$ that is refined $n_{\rm sub}\approx 2^{20}\approx
10^6$ times.

\subsection{Solution, post-processing and interpretations}
\label{sec:interp}

Once the $256\times 256$ linear system~(\ref{eq:final}) is solved for
$\tilde{\boldsymbol{\rho}}_{\rm coa}$, various quantities of interest
can be computed. For example, the polarizability~(\ref{eq:alpha2})
becomes
\begin{equation}
\alpha(z)={\bf h}_{\rm coa}^T
{\bf W}_{\rm coa}{\bf R}\tilde{\boldsymbol{\rho}}_{\rm coa}\,,
\label{eq:alpha3}
\end{equation}
where ${\bf h}$ is the discretization of $h(r)$. Results produced in
this way are extremely accurate and fully confirm 24 of the entries
for $\alpha(z)$ with $|z|\ge 1$ in Table~1 of~\cite{Milt81}. The
remaining 5 entries differ in the last digit. Section~8
of~\cite{Hels11JCP} gives error estimates for results produced in
periodic settings.

The original density ${\boldsymbol{\rho}}_{\rm fin}$ can be
reconstructed from $\tilde{\boldsymbol{\rho}}_{\rm coa}$ by, in a
sense, running the recursion~(\ref{eq:genrec}) backwards (inwards on a
multiply refined local mesh). If this process is interrupted part-way,
one is left with a mix of discrete values of the original density
$\boldsymbol{\rho}$ (on outer panels) and quantities which can be
easily converted into discrete values of a weight-corrected density
$\hat{\boldsymbol{\rho}}$ (on the innermost panels). Details of the
process are given in~\cite[Section~7]{Hels09JCP}. Here it suffices to
observe that there exists a rectangular matrix ${\bf Y}$, say, whose
action on $\tilde{\boldsymbol{\rho}}_{\rm coa}$ produces entries of
$\boldsymbol{\rho}_{\rm fin}$.

Let ${\bf Q}$ denote a restriction matrix from the fine grid to the
coarse grid. Then
\begin{equation}
\boldsymbol{\rho}_{\rm coa}
={\bf Q}{\bf Y}\tilde{\boldsymbol{\rho}}_{\rm coa}
\end{equation}
and 
\begin{equation}
{\bf K}_{\rm coa}^{\circ}{\bf R}
\tilde{\boldsymbol{\rho}}_{\rm coa}=
{\bf K}_{\rm coa}^{\circ}{\bf R}\left({\bf Q}{\bf Y}\right)^{-1}
{\boldsymbol{\rho}}_{\rm coa}\,.
\end{equation}
We see that the blocks of the block-diagonal matrix ${\bf R}\left({\bf
    Q}{\bf Y}\right)^{-1}$ have an interpretation as multiplicative
weight corrections needed if ${\bf K}_{\rm coa}^{\circ}$ is to act
accurately on $\boldsymbol{\rho}_{\rm coa}$. Other useful
interpretations of the matrices introduced include: The columns of
${\bf Y}$ are discrete basis functions for $\rho(r)$ on the fine grid;
The columns of ${\bf R}$ are discrete basis functions for $\rho(r)$ on
the coarse grid multiplied with quadrature weights; the rectangular
matrix ${\bf Y}\left({\bf Q}{\bf Y}\right)^{-1}$ maps
$\boldsymbol{\rho}_{\rm coa}$ to $\boldsymbol{\rho}_{\rm fin}$. These
observations will be used in what follows.

The machinery of Sections~\ref{sec:prec},~\ref{sec:comp}
and~\ref{sec:recur} is useful in several ways. The preconditioning
aspect of~(\ref{eq:final}) reduces numerical error and improves the
convergence of iterative solvers. The compression aspect
of~(\ref{eq:final}) saves degrees of freedom and makes the algorithm
fast and memory efficient. The recursion~(\ref{eq:genrec}) resolves
the singular nature of $\rho(r)$ close to corner vertices in an
automated fashion and provides efficient basis functions and
quadrature weights contained in the matrix ${\bf R}$. No asymptotic
analysis is required -- we simply use Gauss-Legendre quadrature on the
coarse mesh and on the local meshes ${\cal M}_{\rm b}$ and ${\cal
  M}_{\rm c}$ on which the recursion~(\ref{eq:genrec}) takes place.
Most, but not all, of these features can be retained as we step up
into three dimensions.

\begin{figure}[t]
\centering 
\includegraphics[height=36mm]{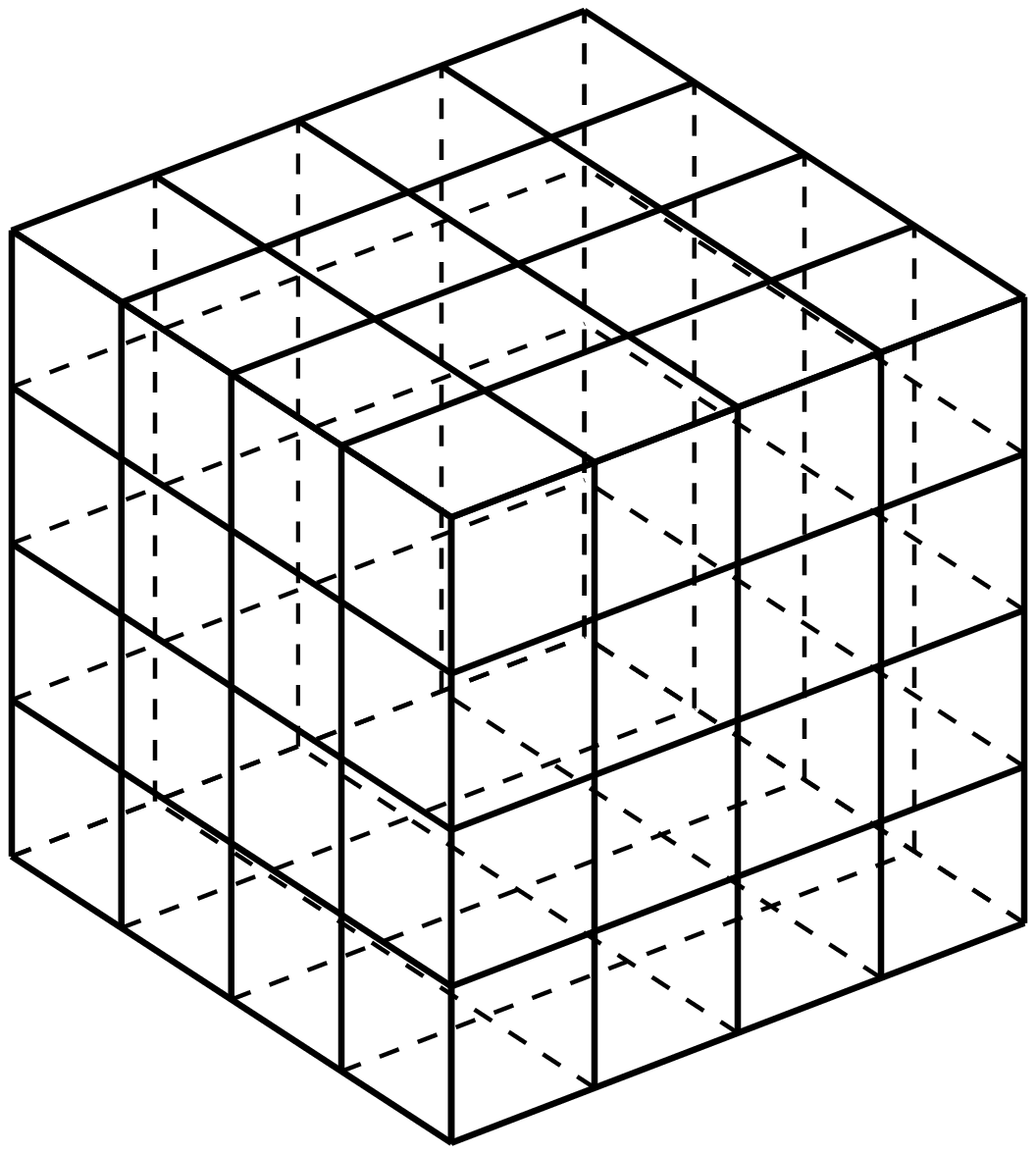}
\hspace{5mm}
\includegraphics[height=36mm]{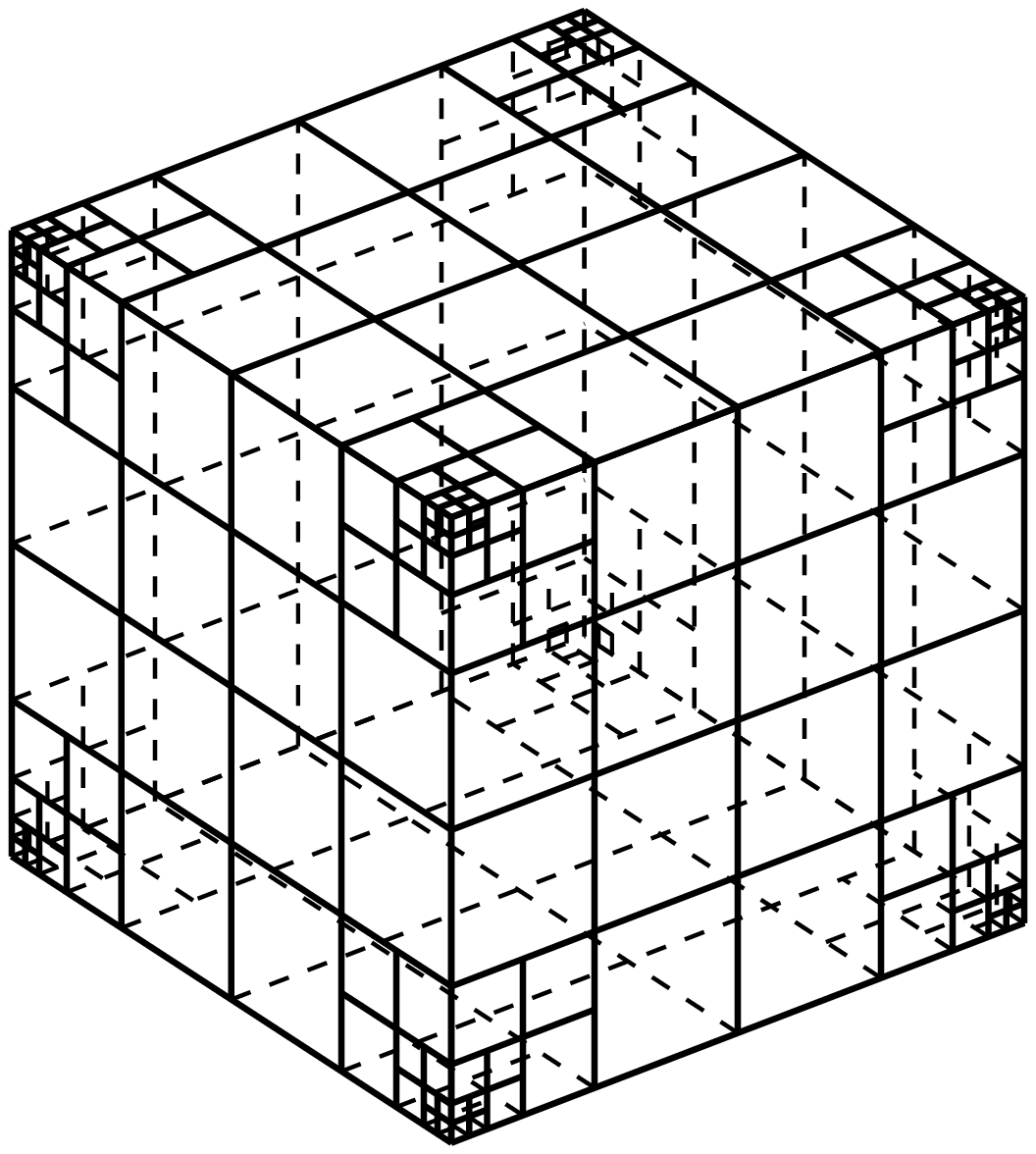}
\hspace{4mm}
\includegraphics[height=36mm]{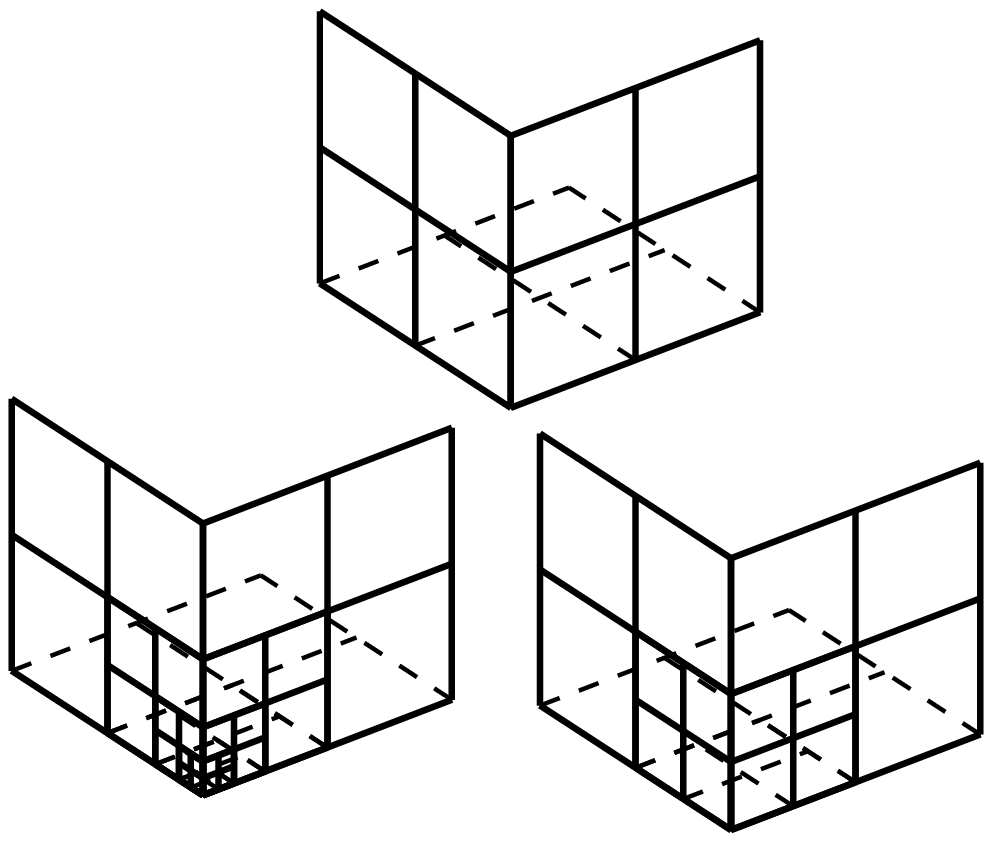}
\caption{Meshes on the surface $S$ of a cube. Left: the coarse mesh. 
  Middle: a mesh that is refined $n_{\rm sub}=3$ times. Right: local
  meshes ${\cal M}_{\rm a}$, ${\cal M}_{\rm b}$ and ${\cal M}_{\rm c}$
  centered around a corner vertex. Compare Fig.~\ref{fig:sqmesh}}
 \label{fig:cumesh}
\end{figure}

\section{Algorithm for the cube}
\label{sec:three}

Our algorithm for the cube mimics that of Section~\ref{sec:two}. A
problem, however, arises in the split~(\ref{eq:split}). Unlike the
square, the cube has both sharp edges and sharp corners and it is not
possible to identify suitably disjoint surface elements $S_k^{\star}$
that allow for an operator split $K=K^{\star}+K^{\circ}$ where
$K^{\circ}$ is a compact operator. Thus, one cannot construct a
block-diagonal matrix ${\bf R}$ which contains weighted basis
functions for $\rho(r)$ that simultaneously resolve the singularities
stemming from the edges and from the corners. We shall circumvent this
difficulty by focusing solely on the cube corners and construct the
coarse and the fine mesh with square quadrature panels according to
Fig.~\ref{fig:cumesh}, that is, in complete analogy with
Fig.~\ref{fig:sqmesh}. As to compensate for the lack of refinement
towards the edges, the discretization of $K^{\star}$ and $K^{\circ}$
will incorporate ready-made one-dimensional basis functions and
weights constructed for a square with the same $\lambda$ according to
Section~\ref{sec:interp}.

\subsection{Preconditioning and discretization}
\label{sec:prec2}

This section is a counterpart to Section~\ref{sec:prec}. Let the eight
corner vertices of the cube be denoted $s_k$ and introduce surface
element $S_k^{\star}$ that cover the 12 quadrature panels on the
coarse mesh that lie closest to $s_k$. Furthermore, let the smaller
surface elements $S_k^{\star\star}$ be such that they cover the three
panels on the coarse mesh that lie closest to $s_k$. The $S_k^{\star}$
are disjoint, their union is $S$, and the kernel
split~(\ref{eq:split}) results in an operator split where the part of
$K^{\circ}$ which accounts for interaction between points in
$S\setminus S_k^{\star}$ and $S_k^{\star\star}$ is compact. This
restricted compactness is sufficient for our purposes since mesh
refinement only takes place on the $S_k^{\star\star}$, see the middle
image of Fig.~\ref{fig:cumesh}.

We discretize~(\ref{eq:inteq2}) and make the change of variables that
leads to~(\ref{eq:gen2}). The discretization of $K$ proceeds in three
steps. First, we use the Nystr{\"o}m method applying tensor products
of $n_{\rm p}$-point Gauss--Legendre quadrature formulas on all
quadrature panels to get an initial ${\bf K}$. Then, for columns of
${\bf K}$ acting on discrete densities on panel pairs neighboring a
single cube edge, but not a corner, we correct the Gauss--Legendre
weights in the direction perpendicular to the edge. These corrections
are realized by multiplying submatrices of ${\bf K}$ with blocks of
${\bf R}_{\rm sq}\left({\bf Q}_{\rm sq}{\bf Y}_{\rm sq}\right)^{-1}$,
where subscript {\footnotesize sq} indicates the square. See
Section~\ref{sec:interp}. This is our {\it basic discretization}. The
resulting ${\bf K}$ acts accurately on $\boldsymbol{\rho}$ in
situations describing the convolution of $K(r,r')$ with $\rho(r')$
both for $r'$ away from edges and corners and for $r'$ close to an
edge but away from corners and $r$ away from $r'$.

Lastly, we change blocks of ${\bf K}$ describing interaction between
panel pairs neighboring each other on opposite sides of an edge. Such
interaction requires special attention due to the non-smooth nature of
$K(r,r')$ close to edges. We use interpolatory quadrature based on
polynomial basis functions in the direction parallel to the edge and
on the basis functions of ${\bf Y}_{\rm sq}$ in the direction
perpendicular to the edge. This gives our final ${\bf K}$. The total
number of discretization points on the coarse mesh is $n=96n_{\rm
  p}^2$. The fine mesh has $(96+72n_{\rm sub})n_{\rm p}^2$ points.

The choice of the columns of ${\bf Y}_{\rm sq}$ as discrete basis
functions for $\rho(r)$ in the direction perpendicular to edges should
be asymptotically correct, assuming that the singularities $\rho(r)$
are dominated by two-dimensional effects away from the corner. Still,
these basis functions are not optimal and they are responsible for the
slower rate of convergence that we shall see when
solving~(\ref{eq:final}) for the cube, compared to when
solving~(\ref{eq:final}) for the square.

\subsection{Compression, recursion and post-processing}

The compression of~(\ref{eq:gen2}) for the cube is analogous to that
for the square in Section~\ref{sec:comp}. The only difference, apart
from that various matrices have different sizes, is that ${\bf W}$,
which contains the quadrature weights of the basic discretization and
enters into the definition of prolongation matrix ${\bf P}_W$
of~(\ref{eq:prolongW}), is no longer diagonal. The blocks of ${\bf
  R}_{\rm sq}\left({\bf Q}_{\rm sq}{\bf Y}_{\rm sq}\right)^{-1}$, used
as multiplicative weight corrections, are generally full matrices.

The recursion for ${\bf R}$ of the cube, from now on denoted ${\bf
  R}_{\rm cu}$, follows Section~\ref{sec:recur} exactly. Some
$\lambda$ allow for a rapid convergence in~(\ref{eq:genrec}). Other
$\lambda$, corresponding to $z$ close to parts of $\sigma_{{\mu}{\rm
    cu}}$, require that we resort to Newton's method and numerical
homotopy. Note that recursions are carried out twice in the scheme:
both for ${\bf R}_{\rm sq}$, needed for the discretization, and for
${\bf R}_{\rm cu}$.

The polarizability $\alpha(z)$ of the cube can be computed
from~(\ref{eq:alpha3}) once the $96n_{\rm p}^2\times 96n_{\rm p}^2$
system~(\ref{eq:final}) is solved. The matrix ${\bf W}_{\rm coa}$
in~(\ref{eq:alpha3}) is now weight-corrected as described in the first
paragraph of this section. We use the GMRES iterative solver
for~(\ref{eq:final}) and make some use of symmetry in order to reduce
memory requirements.

\section{From circle to square}
\label{sec:numsquare}

In a series of numerical experiments we now study the spectrum of $K$
and the polarizability $\alpha(z)$ for a surface $S$ that is gradually
transformed from smooth to non-smooth. Such a study is of interest for
several reasons. Due to the difficulties associated with solving
electrostatic problems on non-smooth domains, see
Section~\ref{sec:strat}, it is common to round sharp boundary features
prior to discretization~\cite{Kettu08,Walle08,Zhang11}. Numerical
effects, similar to those caused by rounding, could also result from
insufficient resolution~\cite{Perr10}. Furthermore, no edge or corner
in real world physics is infinitely sharp and the degree of edge
smoothness can be critical in the design of, for example,
nanoantennas~\cite{Grill11}. Finally, the experiments illustrate the
theory overview of Section~\ref{sec:theory} in a setting which allows
for high accuracy. All experiments are executed on a workstation
equipped with an IntelCore2 Duo E8400 CPU at 3.00 GHz and 4 GB of
memory.

\begin{figure}[t]
\centering
 \includegraphics[height=51mm]{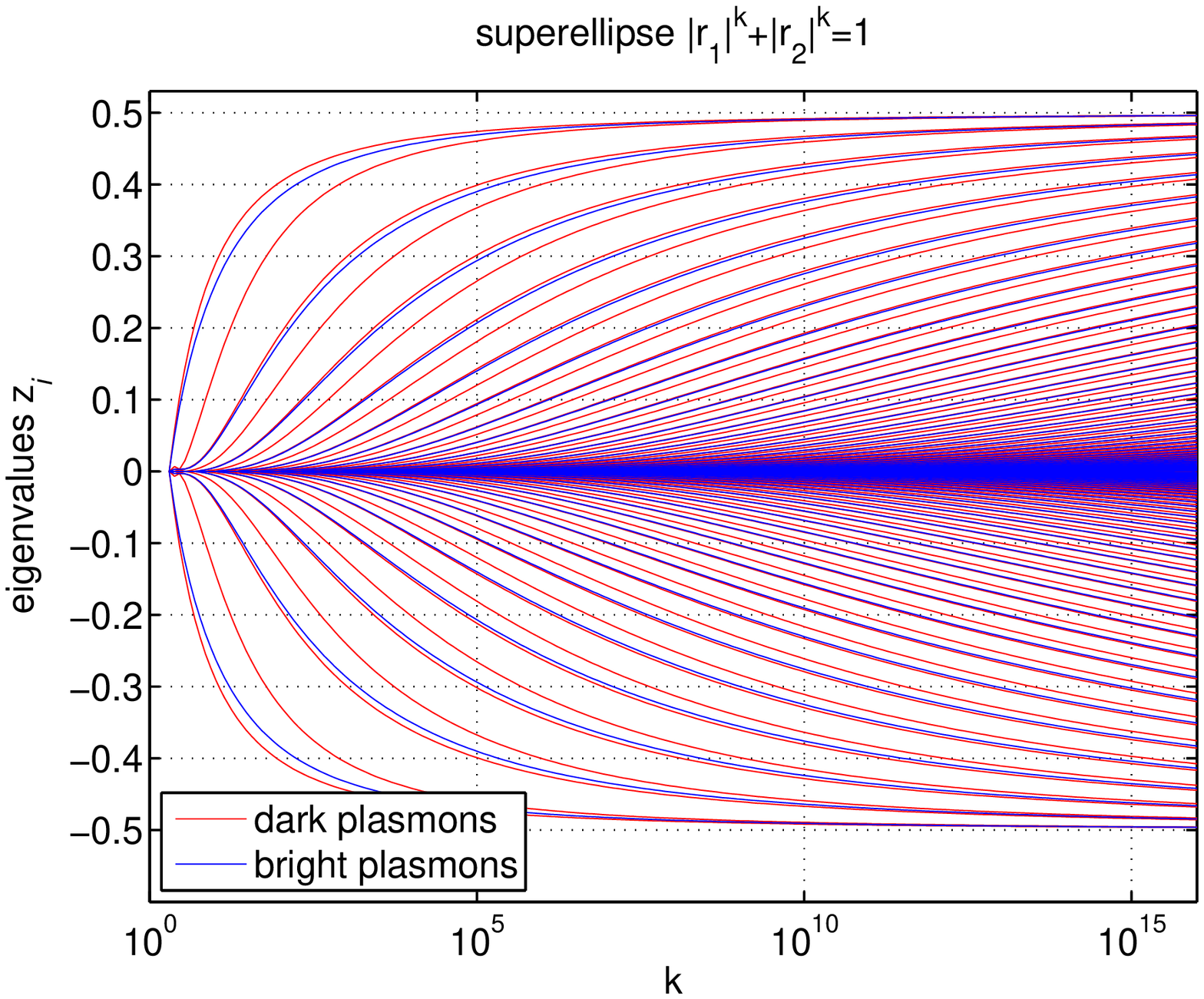}
 \includegraphics[height=51mm]{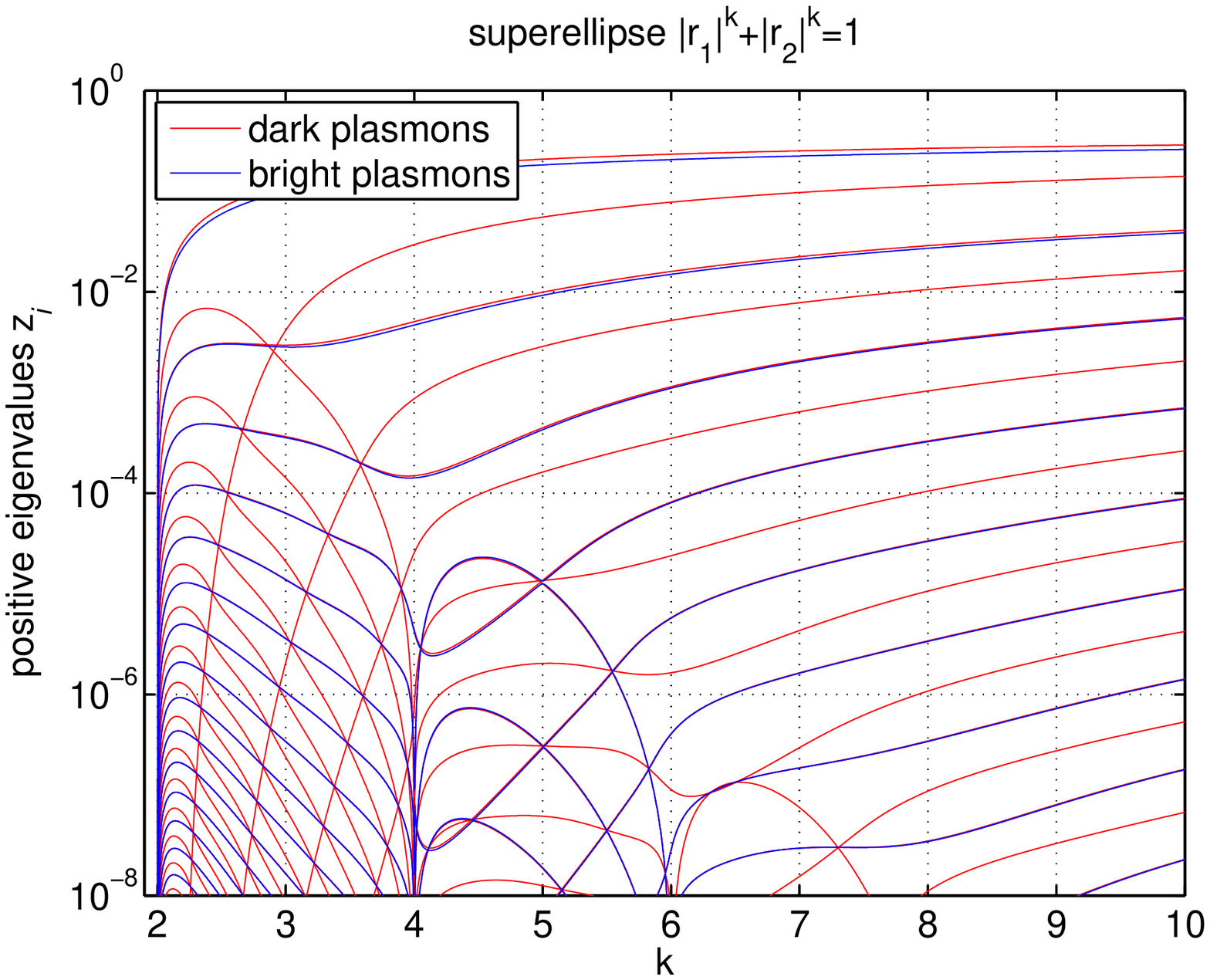}
 \caption{The eigenvalues $z_i$ of $K$ (locations of plasmons) for the
   superellipse varies with $k$. Left: $z_i$ with $2\le k\le 10^{16}$.
   The eigenvalue at $-1$, a dark plasmon, is omitted. Right: zoom of
   positive eigenvalues with $2\le k\le 10$.}
\label{fig:flowb}
\end{figure}

Similar to Klimov~\cite{Klimo08} we let $S$ be the superellipse
\begin{equation}
|r_1|^k+|r_2|^k=1\,,
\label{eq:super}
\end{equation}
which for $k=2$ is a circle and for $k\to\infty$ approaches a square.
We first compute eigenvalues of $K$ using a discretization based on
composite 16-point Gauss--Legendre quadrature and adaptive mesh
refinement. Particular care is taken in the parameterization of $S$ as
to allow for resolution at boundary portions of high curvature.  The
accuracy in these computations varies with $k$. A rough estimate is a
relative error of $\log_{10}(k)\cdot\epsilon_{\rm mach}$.

Eigenvalues and the nature of their corresponding plasmons are shown
in Fig.~\ref{fig:flowb}. The only bright plasmon at $k=2$ is a dipole.
When $k>2$, the bright plasmons have potential fields that are a mix
of modes (dipoles, octupoles, etc.). The left image of
Fig.~\ref{fig:flowb} shows that $K$ of the superellipse at
$k=10^{16}$, which is close to a square in double precision
arithmetic, has a spectrum that does not look continuous to the eye.
The right image of Fig.~\ref{fig:flowb} zooms in on the spectrum at
low $k$. Klimov~\cite{Klimo08}, in an analogous study for a
superellipsoid with $2\le k\le 6$, observes a phase-transition at
$k=2.5$ and a critical point at $k\approx 3$.

\begin{figure}[t]
  \centering 
  \includegraphics[height=51mm]{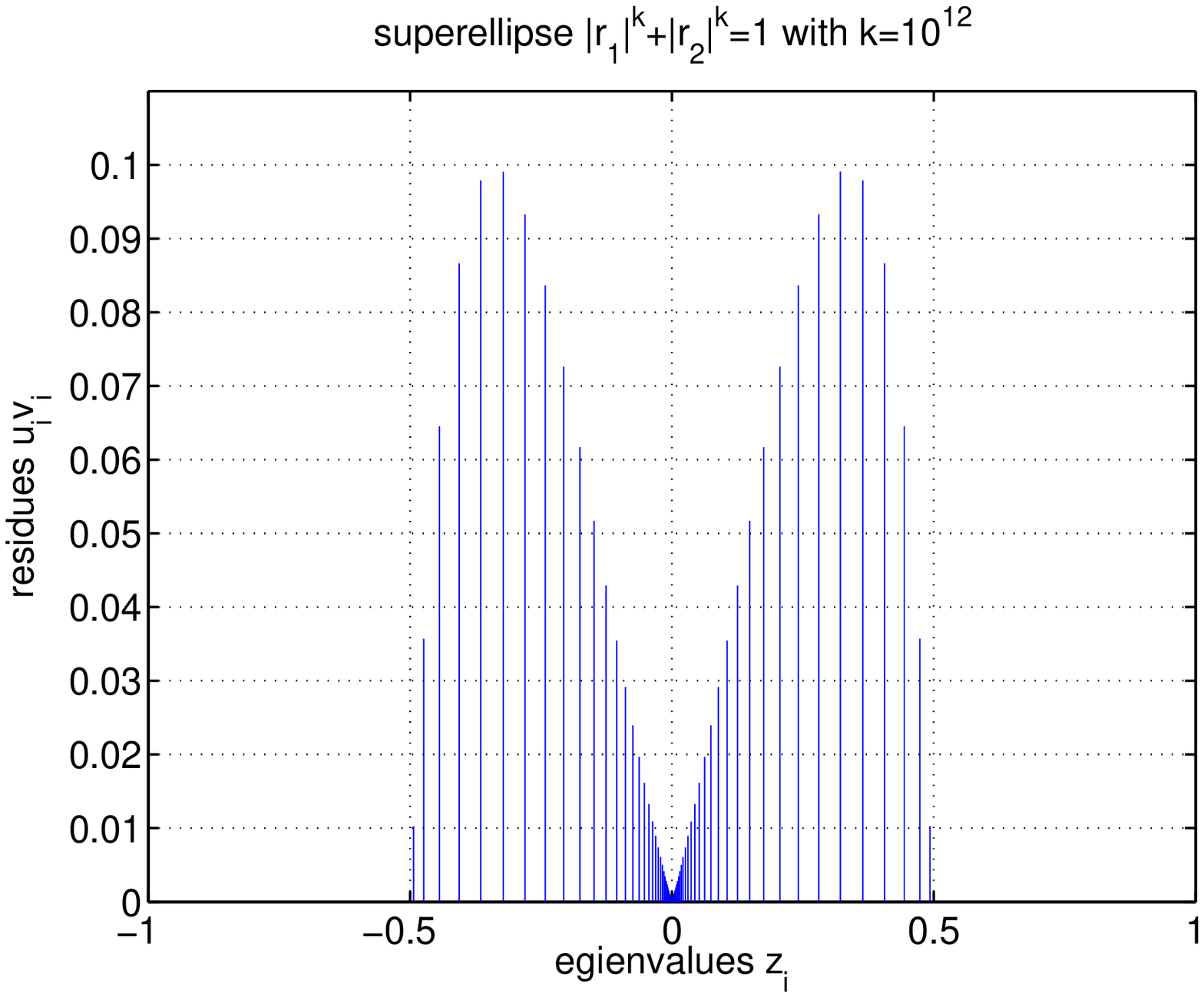}
  \includegraphics[height=51mm]{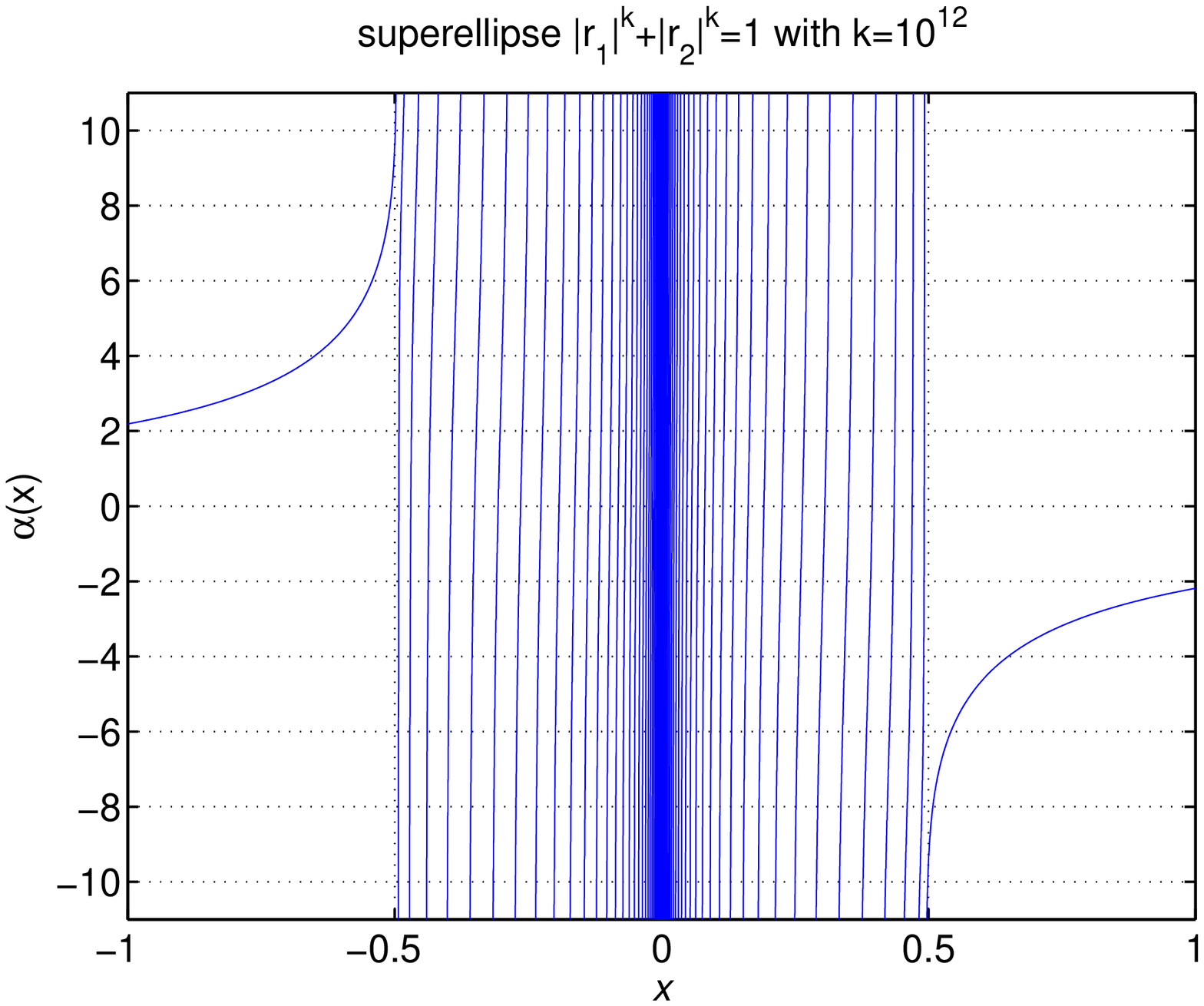}
  \includegraphics[height=51mm]{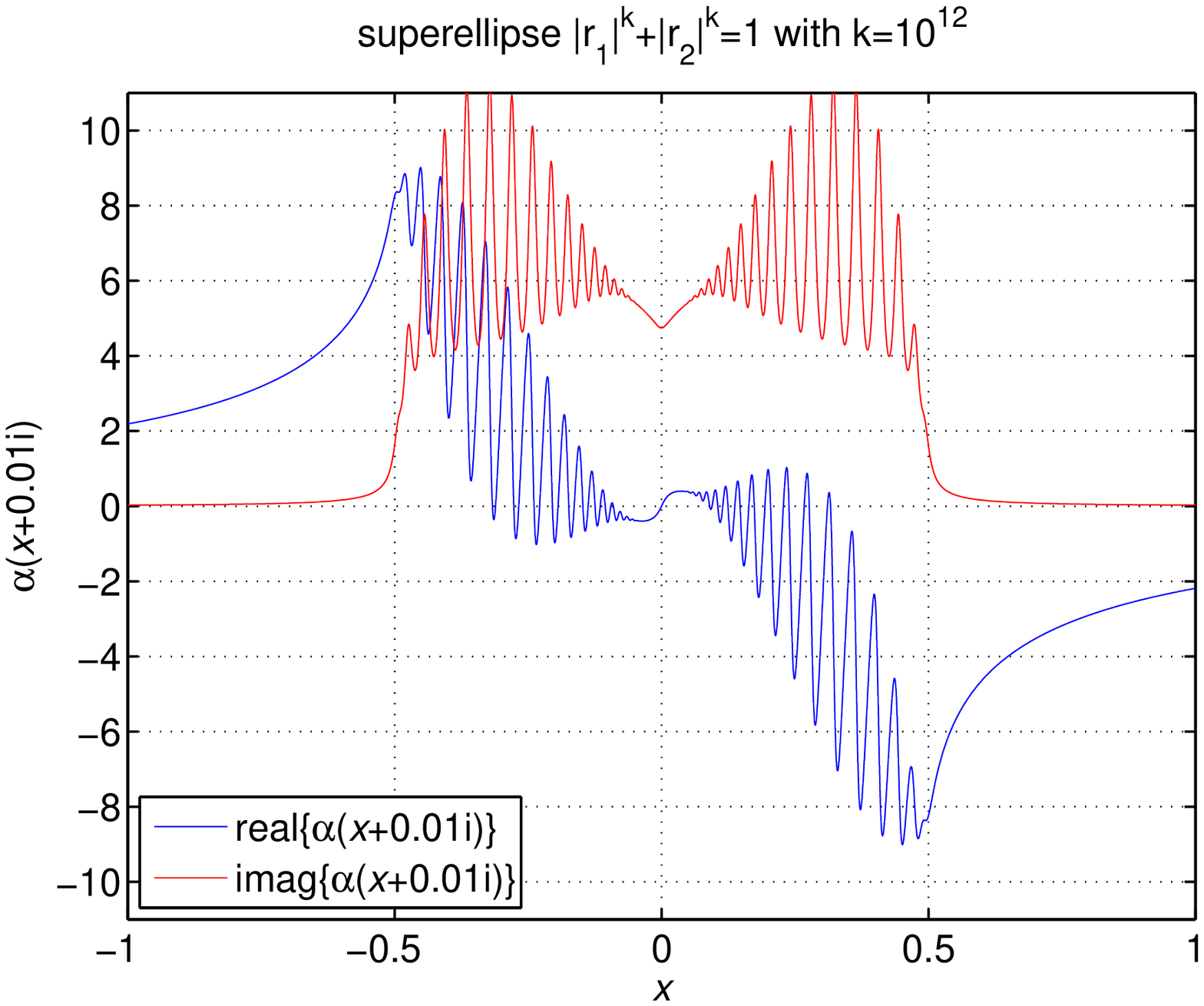}
  \includegraphics[height=51mm]{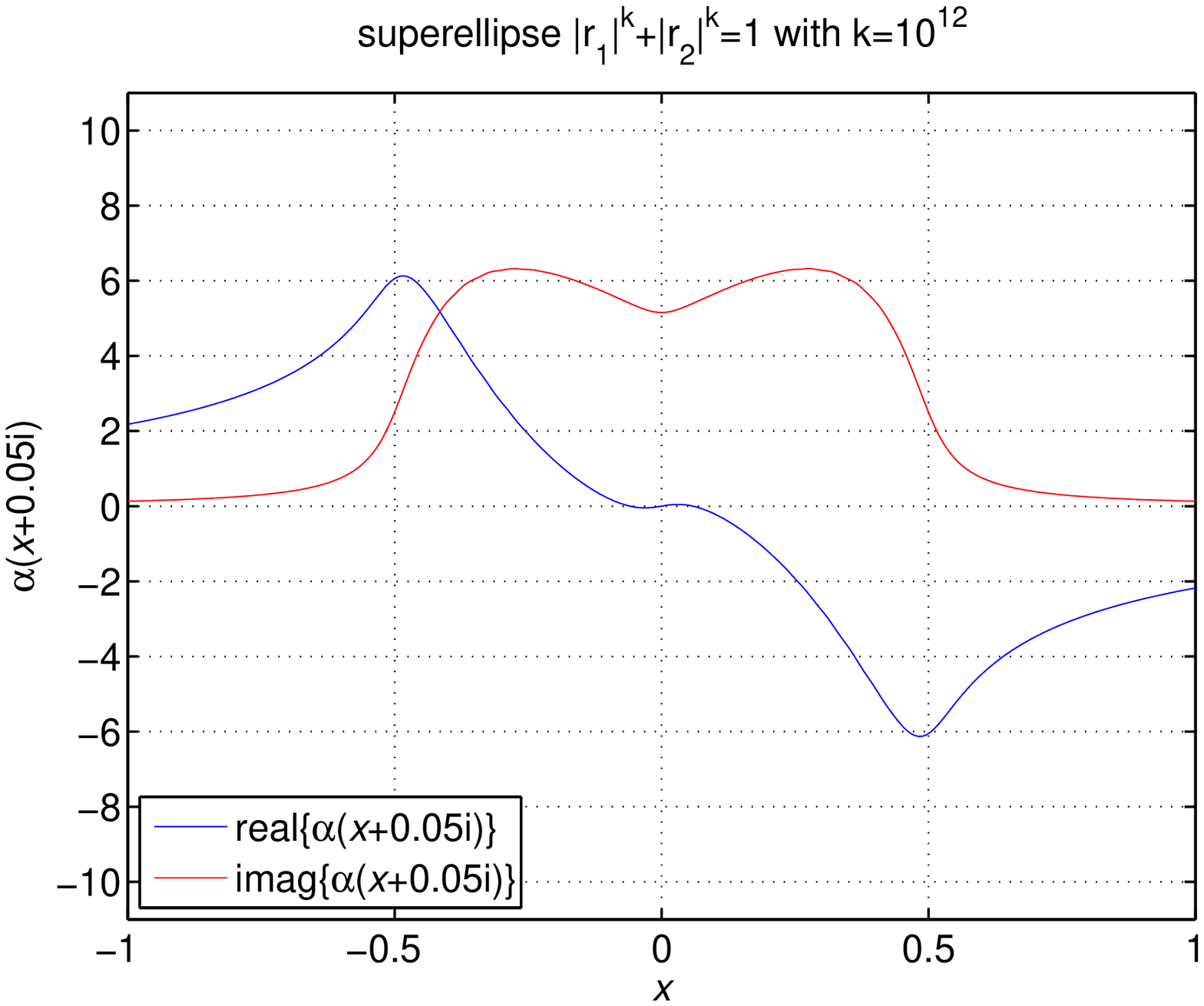}
  \caption{The number of eigenvalues $z_i$ with comparatively large 
    residues $u_iv_i$ for the superellipse (dominant plasmon modes)
    depends on $k$, and $\alpha(z)$ of~(\ref{eq:lambda4b}) decays away
    from $z\in[-0.5,0.5]$. Here $k=10^{12}$. Upper left: the 208
    largest residues. Upper right: $\alpha(x)$ with $x\in[-1,1]$.
    Lower left: $\alpha(x+0.01{\rm i})$. Lower right:
    $\alpha(x+0.05{\rm i})$. Compare Figs.~\ref{fig:flowb}
    and~\ref{fig:polabsq}.}
\label{fig:flowc}
\end{figure}

The number of eigenvalues $z_i$ with comparatively large residues
$u_iv_i$ in $\alpha(z)$ of~(\ref{eq:lambda4b}) depends on $k$.
Fig.~\ref{fig:flowc} shows residues and polarizabilities $\alpha(z)$
for $k=10^{12}$. Fig.~\ref{fig:flowc} also shows how $\alpha(z)$
approaches a slowly varying function as $z$ migrates from
$\overline{\sigma_{{\mu}{\rm sq}}}=[-0.5,0.5]$.

\begin{figure}[htb!]
  \centering \includegraphics[height=51mm]{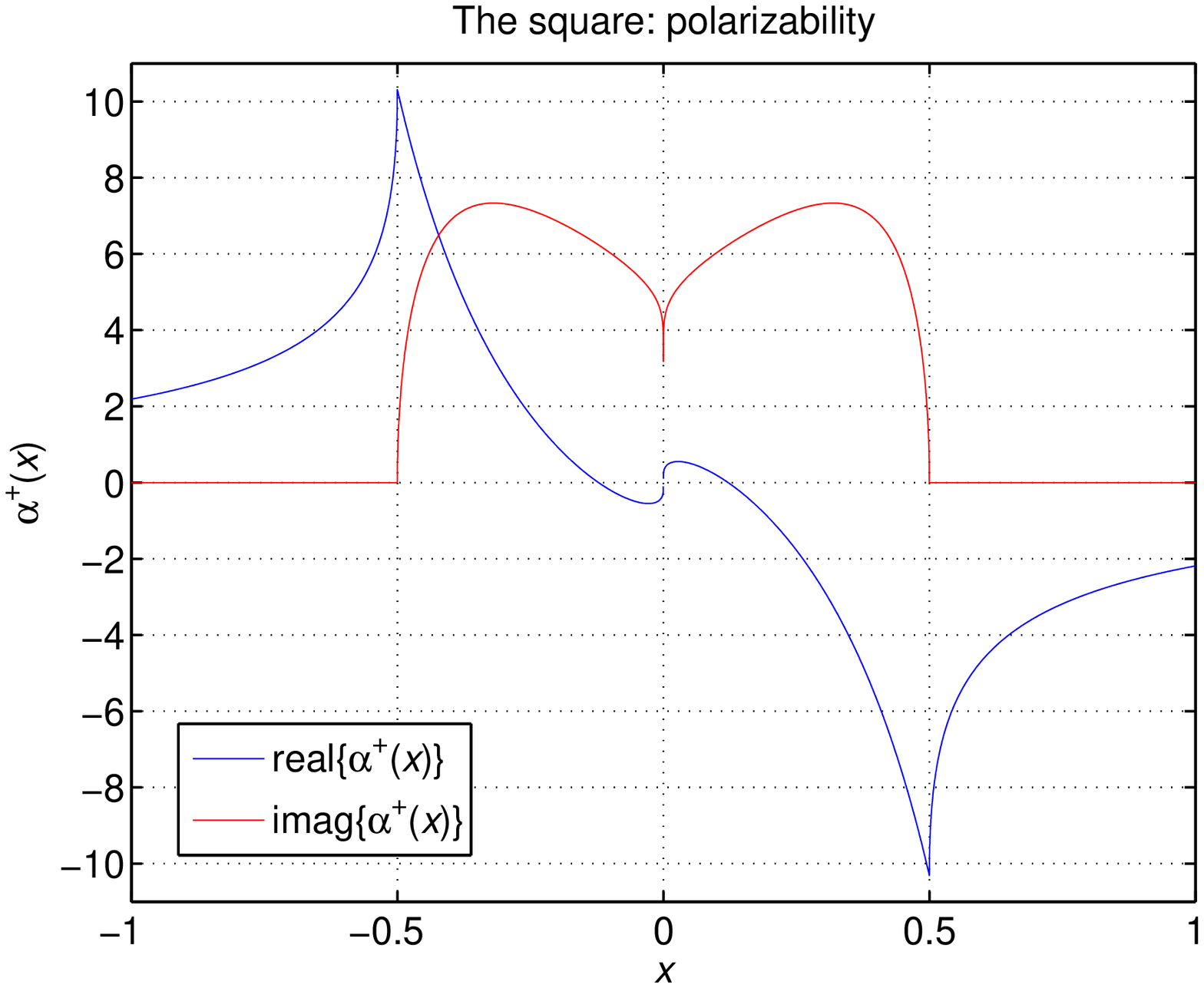}
  \includegraphics[height=51mm]{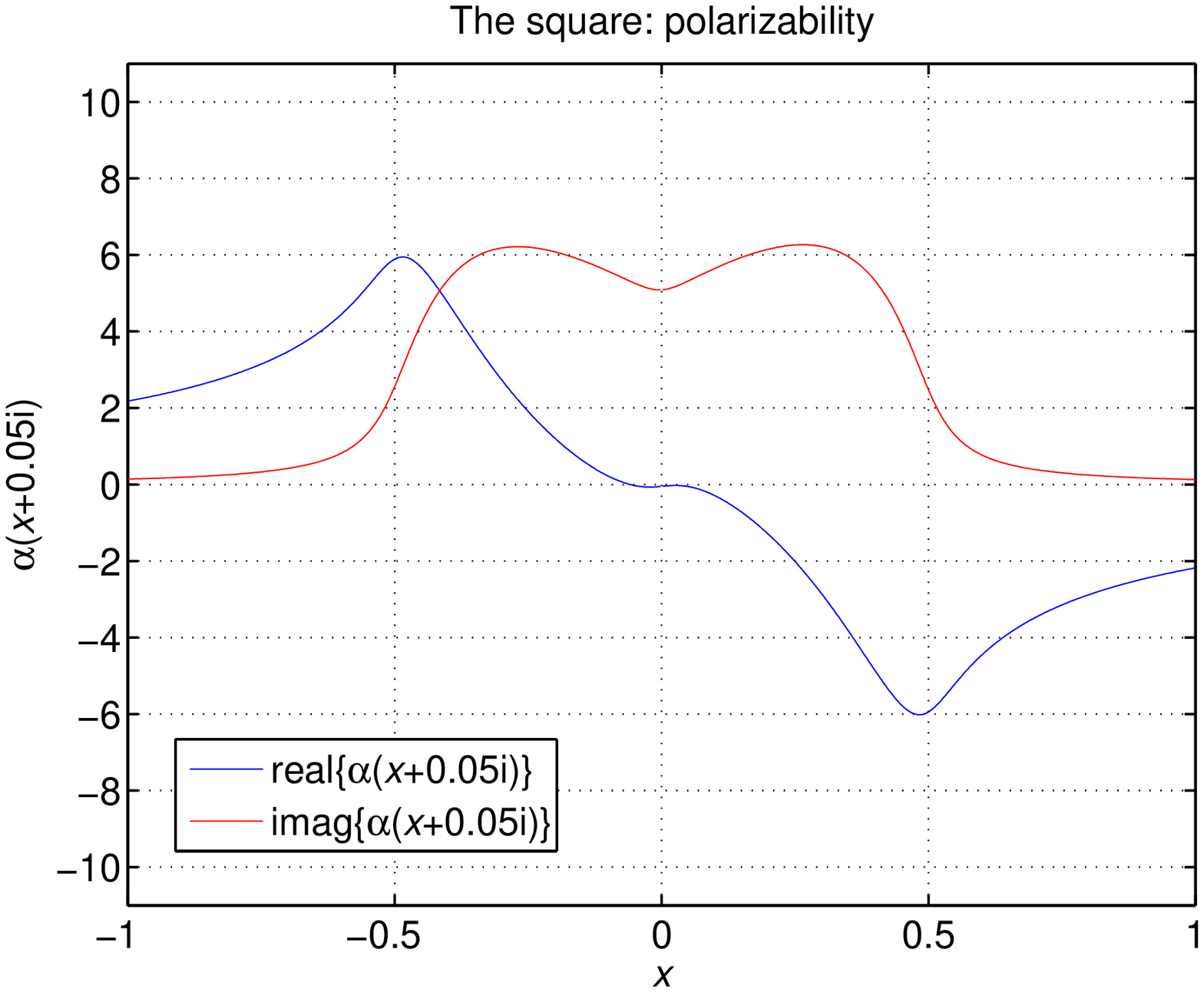}
\caption{Polarizability of a square. Left: $\alpha^+(x)$. The curves 
  are supported by 16492 data points whose relative accuracies range
  from machine precision to five digits. No convergent results were
  obtained within a distance of $10^{-7}$ from $x=0$. The values of
  $\Re\{\alpha^+(x)\}$ at $x=\mp 0.5$ are $\pm 10.3121215292$. The sum
  rule~(\ref{eq:sum1}), evaluated via~(\ref{eq:herglotz}) using a
  composite trapezoidal rule, holds to a relative precision of
  $10^{-6}$. Right: $\alpha(x+0.05{\rm i})$.}
\label{fig:polabsq}
\end{figure}

Fig.~\ref{fig:polabsq} compares $\alpha^+(x)$ with $\alpha(x+0.05{\rm
  i})$ for a square. The algorithm of Section~\ref{sec:two} is used.
Each data point takes only a fraction of a second to compute and is
accurate almost to machine precision except for $x$ very close to
$\{-0.5,0,0.5\}$. For example, the relative difference between the
computed value of $\alpha^+(1)$ and its known value
$-\Gamma(\frac{1}{4})^4/8\pi^2$, see~\cite{Thor92}, is $2\cdot
10^{-16}$. The left image of Fig.~\ref{fig:polabsq} shows that the
square has no bright plasmons (no poles in $\alpha^+(x)$) and stands
in forceful contrast the top right image of Fig.~\ref{fig:flowc},
which exhibits a myriad of plasmons for the superellipse at
$k=10^{12}$.

It is interesting to compare the right image of Fig.~\ref{fig:polabsq}
with the lower right image of Fig.~\ref{fig:flowc}. Already at a
distance of $0.05$ away from the real axis, $\alpha(z)$ of the square
and $\alpha(z)$ of the superellipse at $k=10^{12}$ are similar, as to
be expected in view of Theorem~\ref{thm:wstarconv}. The convergence as
$k\to\infty$ is uniform in $z$ in compact sets away from $[-1,1]$.
The accuracy achieved and the time required to evaluate $\alpha(z)$,
however, are very different. While the accuracy in $\alpha(z)$ of the
superellipse, computed via~(\ref{eq:lambda4b}), is perhaps four digits
and involves the eigenvalue decomposition of a $5376\times 5376$
matrix, the accuracy in $\alpha(z)$ of the square, computed
via~(\ref{eq:final}), is much higher and involves only computations
with matrices of size $256\times 256$. We conclude that even when
corners need not be strictly sharp from a physical viewpoint, it pays
off to keep them sharp from a numerical viewpoint.

\section{The cube}
\label{sec:numcube}

This section presents numerical results for the polarizability
$\alpha^+(x)$ and capacitance $C$ of the unit cube produced by the
algorithm of Section~\ref{sec:three}. While computing $C$ has a long
history in computational
electromagnetics~\cite{Bontz11,Hwang10,Mukh09}, computing
$\alpha^+(x)$ is less explored territory~\cite{Sihvo04} that only
recently, with the rapid growth of the field of nanotechnology, has
become fashionable. For example, $\sigma_{{\mu}{\rm cu}}=\left\{x:
  \mu'(x)>0\right\}$ seems to be largely unknown.
Fuchs~\cite{Fuchs75} and Langbein~\cite{Lang76}, over thirty years
ago, found six or eleven approximate eigenvalues for $K$ of the cube
(``major/dipole absorption peaks'') in the intervals $[-0.573,0.408]$
and $[-0.586,0.274]$, respectively and which supposedly account for
around 95\% of the sum~(\ref{eq:resid1}). In view of the findings of
the present paper, so far, one could suspect that these peaks are an
artifact of insufficient resolution or unintended rounding of edges.
Be as it may, this pioneering and highly cited work is now of interest
in nanoplasmonics where the coupling of plasmon modes in metallic
nanostructures such as nanocube dimers is important and
interpretations seem to rely on these
peaks~\cite{Grill11,Klimo08,Rupp96,Zhang11}.

\begin{figure}[t!]
 \centering 
 \includegraphics[height=88mm]{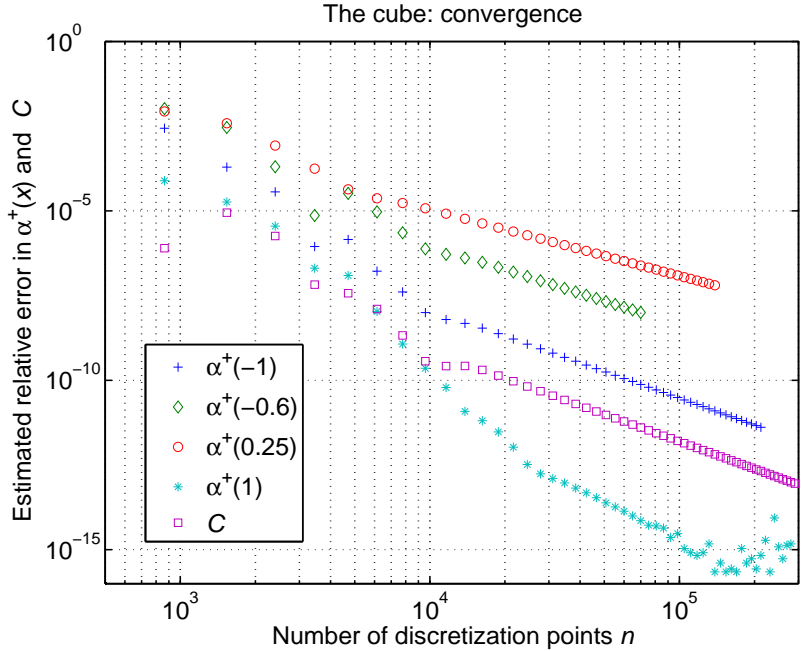}
 \caption{Convergence with the number of discretization points
   for the polarizability $\alpha^+(x)$ and capacitance $C$ of a unit
   cube. The values of $x$ correspond to: cube with infinite
   permittivity ($x=-1$), resonance in corners ($x=-0.6$), resonance
   along edges ($x=0.25$), and cube with zero permittivity ($x=1$).
   See Table~\ref{tab:refvals} for reference values used in error
   estimates.}
\label{fig:convergos}
\end{figure}

In our algorithm for $\alpha^+(x)$ via~(\ref{eq:final}), a complex
valued ${\bf R}$ generally implies a complex valued solution
$\tilde{\boldsymbol{\rho}}_{\rm coa}$, complex valued basis functions
${\bf Y}$, and $\Im\{\alpha^+(x)\}>0$. The square exhibits complex
valued limits of ${\bf R}_{\rm sq}$ whenever $z=x+{\rm
  i}y\in\mathbb{C}_+\to x\in\sigma_{{\mu}{\rm sq}}$. The cube, which
uses ${\bf R}_{\rm sq}$ and ${\bf Y}_{\rm sq}$ for the discretization
of $K$ and which in turn enters into the recursion~(\ref{eq:genrec})
for ${\bf R}_{\rm cu}$, will therefore have complex valued limits of
${\bf R}_{\rm cu}$ and $\Im\{\alpha^+(x)\}>0$ whenever
$x\in\sigma_{{\mu}{\rm sq}}$. We refer to this as {\it resonance along
  edges}. The reason being that $\Im\{\alpha^+(x)\}>0$ implies $x \in
\spec(K,H^{-1/2})$ and, as a consequence of the symmetrization
arguments in Section~\ref{sec:results} and the fact that the spectrum
of a self-adjoint operator consists of eigenvalues and approximate
eigenvalues, each such $x$ is either an eigenvalue or an approximate
eigenvalue of $K$ on $H^{-1/2}(S)$. Moreover, ${\bf R}_{\rm cu}$ may
remain complex valued throughout the numerical homotopy process as
$z=x+{\rm i}y\in\mathbb{C}_+\to x$ for other $x\in[-1,1]$ as well,
that is, where limits of ${\bf R}_{\rm sq}$ are real valued. We refer
to this as {\it resonance in corners}.

Fig.~\ref{fig:convergos} shows results from convergence tests. The
total number of discretization points on the cube surface is $n$. The
examples with $n\gtrsim 2\cdot 10^{4}$ were executed on a workstation
equipped with an IntelXeon E5430 CPU at 2.66 GHz and 32 GB of memory.
Out of the chosen values of $x$, one can see that the error in
$\alpha^+(x)$ is largest for $x=0.25$. This is not surprising. A
weakness in our algorithm is the assumption that the columns of ${\bf
  Y}_{\rm sq}$ are efficient basis functions for $\rho(r)$ in the
direction perpendicular to an edge. When $x=0.25$, there is resonance
along edges and the shortcomings of this assumption should be
particularly visible. When $x=-1$, $x=1$ and for $C$ there are real
valued solutions $\rho(r)\in L^2(S)$ which are easier to resolve.

\begin{table}[ht!]
\caption{Reference values, estimated relative errors, and best previous
  estimates for the limit polarizability $\alpha^+(x)$ and the
  capacitance $C$ of the cube.} 
\vspace{0.2cm}
\centering
\begin{tabular}{|l|l|l|l|c|r|} 
\hline  
   & present reference values & relerr  & previous results & Ref.
   & relerr \\
\hline 
$\alpha^+(-1)$   & $\;\;\: 3.644305190268$  & $10^{-11}$  &
                   $\;\;\: 3.6442$ & \cite{Sihvo04}   &  $3\cdot 10^{-5}$ \\
$\alpha^+(-0.6)$ & $\;\;\: 5.85574775+16.64205643{\rm i}$ & $10^{-8}$ &
                                                         & & \\
$\alpha^+(0.25)$ & $-2.76289925+3.08034035{\rm i}$  & $10^{-7}$ &
                                                         & & \\
$\alpha^+(1)$    & $-1.638415712936517$                & $10^{-14}$ &
                   $-1.6383$   & \cite{Sihvo04}  & $6\cdot 10^{-5}$ \\
$C$              & $\;\;\: 0.66067815409957$           & $10^{-13}$ &
                   $\;\;\: 0.66067813$ & \cite{Hwang10} & $10^{-7}$ \\
\hline
\end{tabular}
\label{tab:refvals} 
\end{table}

\begin{figure}[t]
  \centering 
\includegraphics[height=88mm]{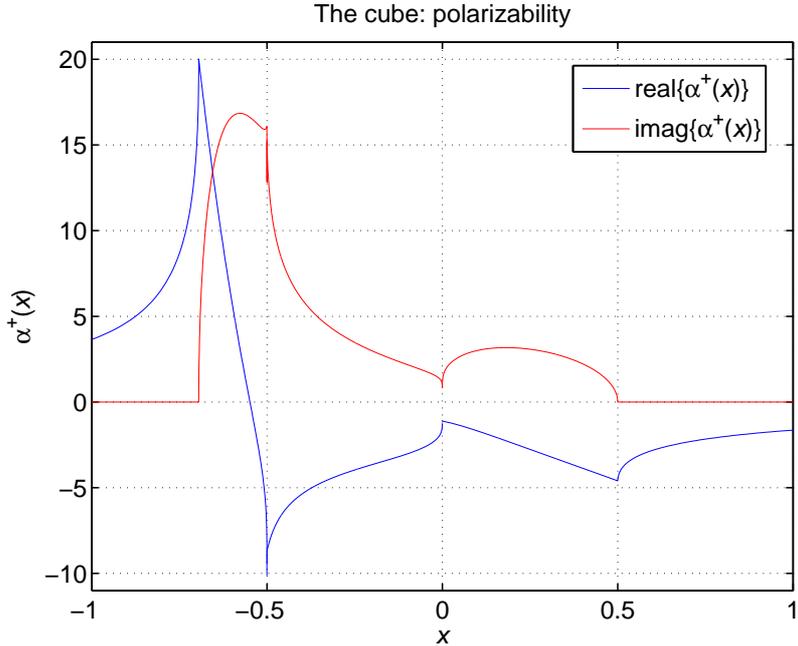}
 \caption{Polarizability of a cube. The curves are supported by
   1195 data points whose relative accuracies range from ten to five
   digits. No convergent results were obtained for $\alpha^+(x)$
   within a distance of $10^{-10}$ from $x=-0.5$ and within a distance
   of $10^{-5}$ from $x=0$. The maximum value of $\Re\{\alpha^+(x)\}$
   is $20.00826$ and occurs at $x\approx-0.694526$. The sum
   rules~(\ref{eq:sum1}), (\ref{eq:sum2}), and (\ref{eq:sum3}),
   evaluated via~(\ref{eq:herglotz}) using a composite trapezoidal
   rule, hold to a relative precision of $3\cdot 10^{-5}$, $4\cdot
   10^{-5}$, and $6\cdot 10^{-5}$, respectively.}
\label{fig:polabcu}
\end{figure}

The recursion~(\ref{eq:genrec}) for ${\bf R}_{\rm cu}$ requires a
substantial amount of memory when $n_{\rm p}$ is large and Newton's
method is activated. This explains why the test series for
$\alpha^+(-0.6)$ in Fig.~\ref{fig:convergos} had to be interrupted at
$n=69984$, that is, for $n_{\rm p}=27$. Timings vary greatly with $x$,
$n_{\rm p}$, and the error tolerances that are set in recursions and
iterative solvers. For safety, we set tolerances low and quote the
following approximate computing times for $n=9600$: $\alpha^+(-1)$,
$\alpha^+(1)$, and $C$ took one minute each; $\alpha^+(0.25)$ took two
minutes; $\alpha^+(-0.6)$ took ten minutes.  The test series for
$\alpha^+(-1)$, $\alpha^+(1)$ and $C$ all confirm previous available
results, see Table~\ref{tab:refvals}, and improve on these with
between six and nine digits.

Fig.~\ref{fig:polabcu} is our main numerical result. It shows that the
isolated cube has no bright plasmons (no poles in $\alpha^+(x)$). For
validation, see the figure caption, we have used the two sum rules
(\ref{eq:sum1}), (\ref{eq:sum2}) and a third sum rule
\begin{equation}
\int_\mathbb{R}x^2\mu'(x)\,{\rm d}x=0.433464767896\,,
\label{eq:sum3}
\end{equation}
which has been determined numerically for the cube~\cite{Hels91}.
Fig.~\ref{fig:polabcu} also shows that $\sigma_{{\mu}{\rm cu}}$ is
approximately equal to the interval $(-0.694526,0.5)$, possibly
punctured at $\{-0.5,0\}$, and raises the intriguing question of what
surface shapes $S$ lead to connected $\sigma_{\mu}$. Our computed
$\sigma_{{\mu}{\rm cu}}$ is broader than those estimated by earlier
investigators~\cite{Fuchs75,Klimo08,Lang76} using other techniques. We
conclude, again, that working with sharp edges and corners is an
advantage.

\section{Conclusions and outlook}

We have constructed an electrostatic solver for cube-shaped domains
and used it to produce new results for some canonical problems. A
particular characteristic of the solver is that it takes advantage of
sharp edges and corners, rather than being a victim of them.

A mathematical framework capturing the symmetric features of the
double layer potential and its adjoint has been identified, allowing
for analysis of the polarizability $\alpha(z)$ through spectral theory
for self-adjoint operators. Moreover, this framework and its
corresponding machinery is natural and satisfactory also from a
physical viewpoint, as the potentials produced to solve the
electrostatic problem are exactly those of finite energy. With full
mathematical rigor we have shown how to, with respect to
polarizabilities, interpret the limit process which occurs when
deforming a smooth surface into a cube, a point which has generated a
fair amount of discussion in the materials science community.

Furthermore, we have shown that while the polarizability via limits
can be extended to a function $\alpha^+(x)$ defined almost everywhere
on the real axis, the same statement fails for the corresponding
potentials whenever the representing measure $\mu$ has a non-zero
absolutely continuous part. That is, no finite energy potential can be
associated with a point $x \in \R$ where $\mu'(x) > 0$ exists and is
non-zero, neither as a direct solution to the electrostatic problem,
nor as a limit from the upper half-plane. We remark, however, that
while the density distributions $\rho_x$ hence fail to exist for such
$x$, the whole map $x\mapsto \rho_x$ can still be interpreted
naturally as a vector-valued distribution.

For the cube, the mathematical theory in conjunction with the
numerical findings show that ${\rm d}\mu(x) = \mu'(x) \, {\rm d}x$ is
purely absolutely continuous, and that the set where $\mu'(x)>0$ is
given by the interval $(a,b)\approx(-0.694526,0.5)$, possibly
excepting the two points $x=-0.5$ and $x=0$. Hence the integral
representation (\ref{eq:alpharep}) for $\alpha(z)$, in terms of
$\mu'(x)$, holds. Finally, $\mu'(x)$ has been determined numerically.
These discoveries, we hope, will be of use in plasmonics where
absorption peaks of nanocubes play an important role.

Future efforts will be directed towards solving the Helmholtz
equation, compare~\cite{Brem12a,Brem12b,Brem10b}. Should the
polarizability of clusters of cubes or the effective permittivity of
cubes in periodic arrangements be of interest, our solver requires
only minor modifications.

\section*{Acknowledgements}

We thank professors Alexandru Aleman, Anders Karlsson, Mihai Putinar, Daniel
Sj{\"o}berg, and Laurian Suciu for useful conversations. This work was
supported by the Swedish Research Council under contract 621-2011-5516.

\end{document}